\documentclass[a4paper,fleqn]{cas-sc}

\usepackage[authoryear,longnamesfirst]{natbib}

\usepackage{amsmath,amsthm}
\usepackage{amsfonts}
\usepackage{comment}
\usepackage{color}
\newtheorem{theorem}{Theorem}[section]
\newtheorem{lemma}{Lemma}[section]
\newtheorem{corollary}{Corollary}[section]
\newtheorem{proposition}{Proposition}[section]
\theoremstyle{definition}

\def\tsc#1{\csdef{#1}{\textsc{\lowercase{#1}}\xspace}}
\tsc{WGM}
\tsc{QE}
\begin{document}
\let\WriteBookmarks\relax
\def\floatpagepagefraction{1}
\def\textpagefraction{.001}

% Short title
\shorttitle{Constitutive Modelling of Korteweg Fluids}    

% Short author
\shortauthors{Z. Mati\'{c}, S. Simi\'{c} and P. V\'{a}n}  

% Main title of the paper
\title [mode = title]{Constitutive Modelling of Korteweg Fluids Using Liu's Method}  

% Title footnote mark
% eg: \tnotemark[1]
%\tnotemark[1] 

% Title footnote 1.
% eg: \tnotetext[1]{Title footnote text}
%\tnotetext[1]{This is footnote text.} 

% First author
%
% Options: Use if required
% eg: \author[1,3]{Author Name}[type=editor,
%       style=chinese,
%       auid=000,
%       bioid=1,
%       prefix=Sir,
%       orcid=0000-0000-0000-0000,
%       facebook=<facebook id>,
%       twitter=<twitter id>,
%       linkedin=<linkedin id>,
%       gplus=<gplus id>]

\author[1]{Zagorka Mati\'{c}}[orcid=0000-0001-8586-733X]

% Footnote of the first author
%\fnmark[1]

% Email id of the first author
\ead{zagorka.mat@uns.ac.rs}

% URL of the first author
%\ead[url]{}

% Credit authorship
% eg: \credit{Conceptualization of this study, Methodology, Software}
%\credit{}

% Address/affiliation
\affiliation[1]{organization={Faculty of Technical Sciences, University of Novi Sad},
            addressline={Trg Dositeja Obradovi\'{c}a 6}, 
            city={Novi Sad},
            citysep={}, % Uncomment if no comma needed between city and postcode
            postcode={21102}, 
%           state={},
            country={Serbia}}

\author[2]{Srboljub Simi\'{c}}[orcid=0000-0003-3726-2007]

% Corresponding author indication
\cormark[1]

% Footnote of the second author
%\fnmark[2]

% Email id of the second author
\ead{ssimic@uns.ac.rs}

% URL of the second author
%\ead[url]{}

% Credit authorship
%\credit{}

% Address/affiliation
\affiliation[2]{organization={Faculty of Sciences, University of Novi Sad},
            addressline={Trg Dositeja Obradovi\'{c}a 4}, 
            city={Novi Sad},
            citysep={}, % Uncomment if no comma needed between city and postcode
            postcode={21000}, 
%           state={},
            country={Serbia}}

% Corresponding author text
\cortext[1]{Corresponding author}

\author[3,4]{Peter V\'{a}n}[orcid=0000-0002-9396-4073]

% Corresponding author indication
%\cormark[1]

% Footnote of the second author
%\fnmark[2]

% Email id of the second author
\ead{van.peter@wigner.hun-ren.hu}

% URL of the second author
%\ead[url]{}

% Credit authorship
%\credit{}

% Address/affiliation
\affiliation[3]{organization={Department of Theoretical Physics, Institute of Particle and Nuclear Physics, HUN-REN Wigner Research Centre for Physics},
		addressline={Konkoly-Thege Mikl\'{o}s \'{u}t 29-33.}, 
		city={Budapest},
	    citysep={}, % Uncomment if no comma needed between city and postcode
		postcode={H-1121}, 
%   	state={},
		country={Hungary}}

% Address/affiliation
\affiliation[4]{organization={Department of Energy Engineering, Faculty of Mechanical Engineering, Budapest University of Technology and Economics},
		addressline={M\H{u}egyetem rkp. 3.}, 
		city={Budapest},
   		citysep={}, % Uncomment if no comma needed between city and postcode
		postcode={H-1111}, 
%   	state={},
		country={Hungary}}

% Corresponding author text
%\cortext[1]{Corresponding author}

% Footnote text
%\fntext[1]{}

% For a title note without a number/mark
%\nonumnote{}

% Here goes the abstract
\begin{abstract}
The paper studies constitutive modelling of Korteweg fluids. Thermodynamic consistency, i.e. compatibility with entropy balance law, is achieved using Liu's method of multipliers. Appropriate constitutive assumptions facilitated inclusion of the capillary effects in the specific entropy. Korteweg stresses are derived from the equilibrium conditions---vanishing of the entropy production and its minimization in equilibrium. Material parameter in Korteweg stresses is allowed to depend on temperature, which turns out to be consistent with kinetic-theory results and leads to cross-coupling of mechanical and thermal effects. The generalized Gibbs' relation, which inherits the capillary effects, is derived as consequence, which is a peculiar feature of the Liu's method. 
\end{abstract}

% Use if graphical abstract is present
%\begin{graphicalabstract}
%\includegraphics{}
%\end{graphicalabstract}

% Research highlights
%\begin{highlights}
%\item 
%\item 
%\item 
%\end{highlights}

% Keywords
% Each keyword is seperated by \sep
\begin{keywords}
Constitutive relations \sep Korteweg stresses \sep Method of multipliers \sep Gibbs' relation
\end{keywords}

\maketitle

% Main text

%%%%%% 

\section{Introduction} 

Capillarity is the phenomenon which occurs in fluids when they are in contact with solid material in a narrow region. It is a consequence of the combined effects of adhesion and surface tension. One of the first attempts to capture these effects in terms of mechanical quantities was due to \cite{korteweg1901forme}. The main assumption was that changes in mass density affect the stresses. Since then, various classes of fluids with such constitutive relations for stresses were named after him. 

Korteweg fluids are not only an example of fluids that inherit capillary effects. They are also a prototype of the fluids in which the boundary between the fluid phase and the solid/gaseous phase is described with a diffuse interface, rather than the sharp one. In this description, the role of the so-called order parameter is played by the gradient of mass density (see \cite{anderson1998diffuse}). 

One of the important questions about Korteweg fluids is thermodynamic consistency of the model, i.e. compatibility of the constitutive relations with the entropy balance law. It turned out that positive answer to this question requires certain deviation from the standard constitutive assumptions about fluids, and modification of the procedures used to establish thermodynamic consistency. Our study is going to make a contribution to the analysis of this problem. 

The foundational work in the analysis of thermodynamic consistency of Korteweg stress is seminal paper of \cite{dunn1985interstitial}. In this study, Korteweg stresses are derived from an interstitial working hypothesis within the Coleman--Noll framework (\cite{coleman1974thermodynamics}). 

Review paper of \cite{anderson1998diffuse} provides a nice overview of the diffuse interface models of fluids. It shows that Korteweg stresses may be determined in two ways. The first one is an equilibrium approach, based upon free energy functional and variational derivatives (see also the paper of \cite{lowengrub1998quasi}). As a consequence, the Korteweg stresses are recovered as equilibrium stresses. Thermodynamic consistency of this approach requires careful justification, as shown by \cite{heida2012cahn}. The second approach essentially relies on the Coleman-Noll procedure, exploited in the way typical for thermodynamics of irreversible processes (TIP). However, to recover the Korteweg stresses it was necessary to assume that specific internal energy depends on the mass density gradient (apart from mass density and temperature). 

The study of \cite{heida2010compressible} analyzes thermodynamic consistency of the class of Korteweg fluid-like materials. Its methodology is based upon the study of \cite{rajagopal2004thermomechanical}, but also inherits certain important new features: (i) constitutive relation for specific internal energy comprises state variables other than density and specific entropy; (ii) entropy flux is assumed in generalized form, not proportional to the heat flux; (iii) constitutive relations are determined using the argument of maximization of the entropy production. Assumption (i) opens the door to capillary effects through introduction of the mass density gradient in a constitutive relation. Consequently,  the balance law for the mass density gradient has to be added to the list of balance laws. It is derived from the gradient of mass balance law. Thanks to this, stress tensor inherited the Korteweg fluid-like terms that model the capillary effects in rather general form. Moreover, assumption (ii) led to two different models for this kind of materials. Certain concepts from this study influenced our approach to the problem. 

In several papers Cimmelli et al. analyzed the models Korteweg fluids (\cite{cimmelli2011thermodynamics}), and their mixtures (\cite{cimmelli2020weakly}). Using similar methodology, \cite{gorgone2021characterization} analyzed their higher-order versions, while \cite{gorgone2021thermodynamical} studied mixtures of viscous Korteweg fluids. One of the aims of these studies was to compare the Coleman-Noll procedure with Liu's method of multipliers. To that end, they generalized the analysis by taking into account gradients of classical balance laws (not only the mass balance law). As a consequence, they formally generalized Liu's procedure and applied it to such system. The outcome of the comparative analysis was that two procedures lead to equivalent set of equations for constitutive functions. Furthermore, they assumed the stress tensor in the form that inherits capillary effects and determined the material coefficients from the constraints imposed by the Coleman-Noll/Liu's procedure. In our opinion, these results do not exploit neither Coleman-Noll's nor Liu's method in their full capacity. 

It has to be emphasized that recent kinetic-theory results of \cite{bhattacharjee2024temperature} shed new light on the structure of Korteweg stresses. In particular, they derived a temperature-dependent Korteweg coefficient from the Enskog--Vlasov equation. This calls for a macroscopic framework admitting such dependence. 

Recent study of \cite{van2023holographic} put for the first time application of the Liu's procedure to Korteweg fluids in a proper perspective, using most of its features in appropriate way. In continuation, the comprehensive Liu-procedure treatment of weakly nonlocal fluids in \cite{szucs2025thermodynamic} provides a framework for comparison, particularly regarding the holographic property of perfect Korteweg fluids. Our work here is compatible with it but emphasizes different aspects and demonstrates the applicability of Liu procedure with a local balance constraints.

Aim of this paper is to analyze the constitutive modelling of a Korteweg fluid by means of Liu's method of multipliers. Although this may seem to be resolved in other previously mentioned studies, it will be shown that certain changes in the procedure lead to improvement of the results and physically appealing outcomes. Here, we provide the list of assumptions which distinguish our approach. 

The specific entropy plays a central role in the analysis. To recover its equilibrium form, and to capture non-equilibrium effects, we shall assume a peculiar functional form. It will depend on the specific internal energy $e$ (as a function of state variables), and also on the state variables independently of $e$. This will allow classical thermodynamic relations to hold, and leave the room for non-equilibrium contributions at the same time. 

The principles of constitutive theory restrict the structure of constitutive functions, and thus simplify the procedure. One of the restrictions is that the symmetric part of the velocity gradient tensor appears in constitutive functions. This motivated the common practice to restrict the analysis to symmetric parts of tensorial quantities from the outset. In this study, the velocity gradient is decomposed into symmetric and skew-symmetric part. Retaining the skew-symmetric part leads to a simpler and more natural derivation of the multipliers related to equations which govern the capillary effects. 

Derivation of the entropy flux, and justification of its generalized form, is always a delicate step in the analysis. It is usually the subject of subtle arguments concerning its relation to the heat flux, either in the form of theorem, or in the form of assumption. Moreover, there is no consensus in the literature about the structure of entropy flux. In our approach, entropy flux comes as a consequence of the residual inequality. In particular, it is determined from the divergence part whose sign cannot be controlled. 

Distinguishing feature of the Korteweg stresses is that they exist in equilibrium state of the system. Consequently, they are regarded as equilibrium stresses, or reversible part of the stress tensor. However, in application of the Coleman-Noll procedure, or Liu's method, they are usually derived from the entropy inequality in general form. Our analysis is based upon the two-step procedure, applied already by Liu (\cite{liu1972method,liu2002continuum}) in the analysis of Newtonian fluids. Namely, the equilibrium part of the stress tensor and heat flux will be computed using the necessary conditions of local minimum of the entropy production in equilibrium. In such a way, we shall retain the equilibrium character of the Korteweg stresses. 

Gibbs' relation has a specific status in thermodynamics. It comes from equilibrium considerations, and it is assumed to be valid in non-equilibrium processes. It is interesting that application of Liu's method does not require this assumption---Gibbs' relation appears as a consequence of the procedure itself. Using this feature, we derived the generalized Gibbs' relation for Korteweg fluids, which takes into account capillary effects. In particular, it contains an extra term corresponding to the infinitesimal change of the density gradient. 

In \cite{anderson1998diffuse} and \cite{heida2010compressible} crucial role in the study of Korteweg fluids is played by the specific internal energy. It is assumed that it inherits the capillary terms, along with the classical thermodynamic part. Moreover, the structure of specific internal energy directly influences the structure of material coefficients in stress tensor. Our approach does correlate specific internal energy and stress tensor directly. However, a comparison will be made which gives an insight into this problem. 

The paper is organized as follows. In Section \ref{Sec:Lemma}, the algebraic Lemma is recalled, which is a cornerstone for further application of the Liu method. Section \ref{Sec:EulerFluids} is concerned with application of the Liu method to Euler fluids. It will serve to introduce, in a simpler context, structural assumptions and procedures that are specific for our approach. Main results of the study are given in Section \ref{Sec:KortewegFluids}, devoted to application of the Liu method to Korteweg fluids. In the final Section \ref{Sec:Conclusions}, a brief summary is provided and an outlook for prospective studies is given. 

%%%%%% 

\section{The algebraic lemma} \label{Sec:Lemma}

At the beginning, we shall recall the algebraic lemma which plays a central role in subsequent analysis. Essentially, it introduces the method of multipliers in the solution of a system of linear algebraic equations subject to a linear inequality. Suppose that we have a linear system of $m$ equations with $n$ unknowns
\begin{equation}
	\label{sistem jednacina}
	A_{ab}X_b+B_a=0,
\end{equation}
and a linear inequality
\begin{equation} \label{nejednakost}
	\alpha_bX_b+\beta\geq 0,
\end{equation}
where $a=1,\ldots,m$, $b=1,\ldots,n$ and the summation convention is used. As indicated, $\mathbf{A}$ is an $m\times n$ matrix, $\mathbf{B}\in\mathbb{R}^m$, $\boldsymbol\alpha\in\mathbb{R}^n$ are vectors, and $\beta \in \mathbb{R}$ is scalar. By $\mathbf{X} \in \mathbb{R}^{n}$ we denote the vector of unknowns. 

\begin{lemma} \label{lemma:Liu-Lemma}
	Let 
	\begin{equation*}
		S=\{\mathbf{X}\in\mathbb{R}^n|\,A_{ab}X_b+B_a=0\},
	\end{equation*}
	the solution set of \eqref{sistem jednacina}, be neither empty nor the whole space $\mathbb{R}^n$ and $\boldsymbol{\alpha}$ is not a zero vector (otherwise the cases are trivial). Then, the following three statements are equivalent: 
	\begin{enumerate}
		\item[1)] Every solution $\mathbf{X}$ of the system \eqref{sistem jednacina} satisfies the inequality \eqref{nejednakost} 
		\begin{equation} \label{mnozioci uslov 1}
			\alpha_bX_b+\beta\geq 0,\quad \forall\mathbf{X}\in S;
		\end{equation}
		\item[2)]
		There exists a vector $\boldsymbol\Lambda\in\mathbb{R}^m$, $\boldsymbol{\Lambda} \neq \mathbf{0}$, such that 
		\begin{equation} \label{mnozioci uslov 2}
			\alpha_bX_b+\beta-\Lambda_a(A_{ab}X_b+B_a)\geq 0 \quad 	
			\forall\mathbf{X}\in\mathbb{R}^n.
		\end{equation}
		\item[3)]
		There exists a vector $\boldsymbol\Lambda\in\mathbb{R}^m$ such that
		\begin{align} 
			\alpha_b-\Lambda_aA_{ab}&=0,
			\label{mnozioci uslov 3-1} \\
			-\Lambda_aB_a+\beta&\geq 0. 
			\label{mnozioci uslov 3-2}
		\end{align}
	\end{enumerate}
\end{lemma}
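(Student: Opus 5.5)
The natural route is the cycle $3)\Rightarrow 2)\Rightarrow 1)\Rightarrow 3)$. The two implications $3)\Rightarrow 2)$ and $2)\Rightarrow 1)$ are direct substitutions. The real content lies in $1)\Rightarrow 3)$, which I would prove from the structure of the solution set $S$ as an affine subspace, together with the identity $(\ker\mathbf{A})^\perp=\operatorname{range}(\mathbf{A}^T)$.

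\emph{Step $3)\Rightarrow 2)$.} Take $\boldsymbol\Lambda$ as in 3) and regroup the left-hand side of \eqref{mnozioci uslov 2}:
\begin{equation*}
\alpha_bX_b+\beta-\Lambda_a(A_{ab}X_b+B_a)=(\alpha_b-\Lambda_aA_{ab})X_b+(\beta-\Lambda_aB_a).
\end{equation*}
By \eqref{mnozioci uslov 3-1} the first bracket vanishes for every $\mathbf{X}\in\mathbb{R}^n$. By \eqref{mnozioci uslov 3-2} the remaining constant is nonnegative. It remains to see that $\boldsymbol\Lambda\neq\mathbf{0}$. This follows because $\Lambda_aA_{ab}=\alpha_b$ and $\boldsymbol\alpha\neq\mathbf{0}$ by hypothesis, so $\boldsymbol\Lambda=\mathbf{0}$ is impossible.

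\emph{Step $2)\Rightarrow 1)$.} Restrict \eqref{mnozioci uslov 2} to $\mathbf{X}\in S$. There the term $\Lambda_a(A_{ab}X_b+B_a)$ vanishes, which leaves exactly \eqref{mnozioci uslov 1}.

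\emph{Step $1)\Rightarrow 3)$, the main step.}
\begin{enumerate}
\item[(a)] Since $S\neq\emptyset$, fix a particular solution $\mathbf{X}^0$. Then
\begin{equation*}
S=\mathbf{X}^0+\ker\mathbf{A}.
\end{equation*}
The assumption $S\neq\mathbb{R}^n$ guarantees $\mathbf{A}\neq\mathbf{0}$, although this is not really needed in the argument.
\item[(b)] Take any $\mathbf{Y}\in\ker\mathbf{A}$ and $t\in\mathbb{R}$. Applying 1) to $\mathbf{X}^0+t\mathbf{Y}\in S$ gives
\begin{equation*}
\alpha_bX^0_b+\beta+t\,\alpha_bY_b\geq 0 \quad \forall t\in\mathbb{R}.
\end{equation*}
An affine function of $t$ bounded below on all of $\mathbb{R}$ must have zero slope. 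Hence $\alpha_bY_b=0$, that is, $\boldsymbol\alpha\perp\ker\mathbf{A}$.
\item[(c)] Invoke $(\ker\mathbf{A})^\perp=\operatorname{range}(\mathbf{A}^T)$, the row space of $\mathbf{A}$. This yields some $\boldsymbol\Lambda\in\mathbb{R}^m$ with $\alpha_b=\Lambda_aA_{ab}$, which is \eqref{mnozioci uslov 3-1}. I expect this to be the only nontrivial ingredient. It is standard linear algebra and can be justified by orthogonal decomposition of $\boldsymbol\alpha$ into a row-space component and a kernel component: the kernel component must be orthogonal to itself by (b), hence it is zero.
\item[(d)] To obtain \eqref{mnozioci uslov 3-2}, use $A_{ab}X^0_b=-B_a$. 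This gives
\begin{equation*}
-\Lambda_aB_a=\Lambda_aA_{ab}X^0_b=\alpha_bX^0_b.
\end{equation*}
Therefore $\beta-\Lambda_aB_a=\alpha_bX^0_b+\beta\geq0$ by 1) applied at $\mathbf{X}^0$.
\end{enumerate}

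The multiplier obtained in (c) need not be unique when $\mathbf{A}$ has dependent rows. This does not matter, since the statement only asserts existence. This closes the cycle and proves the equivalence.
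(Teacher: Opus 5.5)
Your argument is correct, but it is organized differently from what the paper offers. The paper does not prove the lemma (it refers to Liu). It only sketches the chain 1)$\Rightarrow$2)$\Rightarrow$3). Multipliers are ``introduced'' to replace the inequality constrained to $S$ by the unconstrained inequality 2) on $\mathbb{R}^n$. Since $\mathbf{X}$ is then free, the affine function $(\alpha_b-\Lambda_aA_{ab})X_b+(\beta-\Lambda_aB_a)$ can be nonnegative everywhere only if its linear part vanishes, which splits 2) into the two conditions of 3). In that sketch the one genuinely nontrivial claim, the existence of $\boldsymbol\Lambda$ in 1)$\Rightarrow$2), is asserted rather than derived.

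Your cycle 3)$\Rightarrow$2)$\Rightarrow$1)$\Rightarrow$3) puts the work exactly there. Your step (b) uses the same ``an affine function bounded below has zero slope'' device, but applies it only along $\ker\mathbf{A}$. Step (c) then produces $\boldsymbol\Lambda$ from $(\ker\mathbf{A})^\perp=\mathrm{range}(\mathbf{A}^T)$, and 2) follows from 3) simply by regrouping. This is the standard linear-algebra proof of Liu's lemma.

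The paper's ordering mirrors how the lemma is used in Sections \ref{Sec:EulerFluids} and \ref{Sec:KortewegFluids}: form the extended entropy inequality, set the coefficients of the process direction vector to zero, and keep the residual inequality. Your route is self-contained and shows the role of each hypothesis:
\begin{itemize}
\item $S\neq\emptyset$ is essential. If $S=\emptyset$, statement 1) holds vacuously while 3) fails whenever $\boldsymbol\alpha\notin\mathrm{range}(\mathbf{A}^T)$.
\item $\boldsymbol\alpha\neq\mathbf{0}$ is used only to guarantee $\boldsymbol\Lambda\neq\mathbf{0}$ in 2).
\item $S\neq\mathbb{R}^n$ is indeed dispensable. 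For $\mathbf{A}=\mathbf{0}$, $\mathbf{B}=\mathbf{0}$ and $\boldsymbol\alpha\neq\mathbf{0}$, all three statements are false, so the equivalence still holds.
\end{itemize}
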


Instead of giving the proof of Lemma \ref{lemma:Liu-Lemma}, which may be found in \cite{liu1972method,liu2002continuum}, we shall emphasize its main idea. In a usual approach to the problem, inequality \eqref{nejednakost} is treated as a constraint to the solutions of the system \eqref{sistem jednacina}. Here, the inequality \eqref{nejednakost} is regarded as the main equation, while equations \eqref{sistem jednacina} are treated as constraints. This change of perspective is reflected in the choice of multipliers $\Lambda_a$---they are introduced for the equations \eqref{sistem jednacina}, rather than inequality \eqref{nejednakost}. Thus, the problem with constraints is replaced with the one without them, such that \eqref{mnozioci uslov 2} holds. Furthermore, since $\mathbf{X} \in \mathbb{R}^{n}$ are now unconstrained, equation \eqref{mnozioci uslov 2} can be split into the set of equations \eqref{mnozioci uslov 3-1} used to determine the multipliers $\boldsymbol{\Lambda} \in \mathbb{R}^{m}$, and an inequality \eqref{mnozioci uslov 3-2}. This inequality, in conjunction with equation \eqref{mnozioci uslov 3-1}, imposes a constraint on $B_{a}$ and $\beta$. Namely, if every solution of \eqref{sistem jednacina} satisfies \eqref{mnozioci uslov 1}, then $B_{a}$ and $\beta$ cannot be arbitrary. This kind of restriction will have substantial consequences on the analysis of constitutive relations in the subsequent Sections. 

In the context of thermodynamic consistency analysis, algebraic equations \eqref{sistem jednacina} correspond to the balance laws, whereas inequality \eqref{nejednakost} corresponds to the entropy balance law. Vector $\mathbf{X}$ will play the role of process direction vector, while $\Lambda_a$, $B_a$ and $\beta$ will not be constants, but functions of the elements of the constitutive state space. The terminology of the constitutive state space and process direction vector used throughout this paper follows \cite{muschik1996amendment}; see also \cite{hauser2002historical} for a historical account of Liu's procedure.

%%%%%% 

\section{Euler fluids} \label{Sec:EulerFluids}

%%%%%% 

In this Section, we shall firstly present the application of Liu's method to Euler fluids. Although it is well-known in the literature, our aim is to point out certain assumptions and steps in the analysis that are peculiar for our approach. 

\subsection{Conservation laws}

In the analysis of Euler fluids we shall assume that the conservation laws of mass, momentum and energy hold:
\begin{align}
	\label{zakon održanja mase Ojlerovi fluidi}
	& \dfrac{\partial\rho}{\partial t}+\dfrac{\partial}{\partial x_j}\left(\rho v_j\right)=0,
	\\
	\label{zakon održanja količine kretanja Ojlerovi fluidi}
	& \dfrac{\partial}{\partial t}(\rho v_i)+\dfrac{\partial}{\partial x_j}\left(\rho v_i v_j -t_{ij}\right)=0,
	\\
	\label{zakon održanja energije Ojlerovi fluidi}
	& \frac{\partial}{\partial t}\left(\frac{1}{2}\rho\vert\mathbf{v}\vert ^2+\rho e\right)+\dfrac{\partial}{\partial x_j}\left(\left(\frac{1}{2}\rho\vert\mathbf{v}\vert ^2+\rho e\right)v_j-t_{ij}v_i+q_j\right)=0,
\end{align}
where $\rho$ is the mass density, $\mathbf{v}$ is  the velocity, $e$ is the specific internal energy, $\mathbf{t} = \left[ t_{ij} \right]_{ij=1}^{3}$ is the stress tensor, $\mathbf{q} = \left[ q_{i} \right]_{i=1}^{3}$ is the internal energy (heat) flux. Throughout the paper, summation convention will be assumed to hold. 

\subsection{Constitutive assumptions} 

Closure of the system of conservation laws \eqref{zakon održanja mase Ojlerovi fluidi}-\eqref{zakon održanja energije Ojlerovi fluidi} requires constitutive relations for the specific internal energy $e$, and for the non-convective fluxes $t_{ij}$ and $q_{i}$. On one hand, they will be restricted by our choice of the \emph{constitutive state space}---field variables on which constitutive functions depend. To model Euler fluids, it is sufficient to take $\{ \rho, \theta \}$ as the constitutive state space, where $\theta$ is the temperature. This implies the following general form of the constitutive functions: 
\begin{align} \label{KonstitutivneRelacije-Ojler}
	e & = e(\rho, \theta), 
	\nonumber \\
	t_{ij} & = t_{ij}(\rho, \theta), 
	\\ 
	q_{i} & = q_{i}(\rho, \theta). 
	\nonumber
\end{align}
Since $t_{ij}$ and $q_{i}$ are the components of a Galilean tensor and vector, respectively, and they depend only on scalars $\rho$ and $\theta$, they have to be of the following form:
\begin{equation} \label{KonstitutivneRelacije-Ojler2}
	t_{ij} = - p(\rho, \theta) \delta_{ij}, \quad q_{i} = 0,  
\end{equation}
where $p$ is the thermodynamic pressure. This is in agreement with the assumption that Euler fluids are inviscid and do not transfer heat. 

Constitutive relations have to be chosen in such a way that conservation laws \eqref{zakon održanja mase Ojlerovi fluidi}--\eqref{zakon održanja energije Ojlerovi fluidi} are compatible with the entropy balance law: 
\begin{equation} \label{bilans entropije Ojlerovi fluidi}
	\dfrac{\partial}{\partial t}\left(\rho s\right)+\dfrac{\partial}{\partial x_j}\left(\rho sv_j+\varphi_j\right)=\Sigma (\geq 0). 
\end{equation}
In contrast to Coleman-Noll-like procedures, in which the structure of the specific entropy $s$ and the entropy flux $\varphi_{j}$ are partially or completely prescribed, the method of multipliers does not impose such a restriction beforehand. Instead, they are determined by the constitutive functions, as well: 
\begin{equation} \label{konstitutivne relacije s-phi Ojler}
	s = s(\rho, \theta), \quad \varphi_{j} = \varphi_{j}(\rho, \theta). 
\end{equation}
Besides this assumption, an additional one will be introduced regarding the form of the constitutive function for specific entropy $s$: 
\begin{equation} \label{konstitutivne relacije s hipoteza 1 Ojlerovi fluidi}
	s = s(e(\rho, \theta), \rho). 
\end{equation}
It is motivated by the aim to recover the classical equilibrium thermodynamic relations: 
\begin{equation} \label{konstitutivne relacije s hipoteza 2 Ojlerovi fluidi}
	\dfrac{\partial s}{\partial e} = \dfrac{1}{\theta}, \quad 
	\dfrac{\partial s}{\partial \rho} = - \dfrac{1}{\theta} \dfrac{p}{\rho^{2}}, 
\end{equation}
which we shall assume to hold. Note that \eqref{konstitutivne relacije s hipoteza 1 Ojlerovi fluidi} implies only implicit dependence on temperature $\theta$, whereas $s$ may depend on $\rho$ both implicitly and explicitly. To that end, the following notation will be introduced for the total derivative of specific entropy with respect to constitutive variables:
\begin{equation} \label{totalni izvod Ojlerovi fluidi}
	\overline{\dfrac{\partial s}{\partial \rho}} 
	= \dfrac{\partial s}{\partial e}\dfrac{\partial e}{\partial \rho} 
	+ \dfrac{\partial s}{\partial \rho}, 
	\quad
	\overline{\dfrac{\partial s}{\partial \theta}} 
	= \dfrac{\partial s}{\partial e}\dfrac{\partial e}{\partial \theta}. 
\end{equation} 

\subsection{The method of multipliers} 

The method of multipliers is a specific, formal procedure which makes the balance laws compatible with the entropy balance law. This specific formulation of the entropy principle was proposed by Liu. According to this approach, the entropy balance law \eqref{bilans entropije Ojlerovi fluidi} is the main equation which determines the evolution of thermodynamic process, whereas the balance laws \eqref{zakon održanja mase Ojlerovi fluidi}-\eqref{zakon održanja energije Ojlerovi fluidi} are the constraints that have to be satisfied. To transform the problem with constraints into the one without them, Lagrange multipliers $\Lambda^\rho$, $\Lambda^{v_i}$ and $\Lambda^e$ are introduced and the entropy balance law is written in extended form
\begin{align} \label{prosirena entropijska nejednakost Ojlerovi fluidi}
	\dfrac{\partial}{\partial t}\left(\rho s\right)
	& +\dfrac{\partial}{\partial x_j}\left(\rho sv_j+\varphi_j\right)
	-\Lambda^\rho\left[\dfrac{\partial\rho}{\partial t}+\dfrac{\partial}{\partial x_j}\left(\rho v_j\right)\right]
	\nonumber \\
	&-\Lambda^{v_i}\left[\dfrac{\partial}{\partial t}(\rho v_i)+\dfrac{\partial}{\partial x_j}\left(\rho v_i v_j -t_{ij}\right)\right] 
	\\
	&-\Lambda^ e\left[\frac{\partial}{\partial t}\left(\frac{1}{2}\rho\vert\mathbf{v}\vert ^2+\rho e\right)+\dfrac{\partial}{\partial x_j}\left(\left(\frac{1}{2}\rho\vert\mathbf{v}\vert ^2+\rho e\right)v_j-t_{ij}v_i+q_j\right)\right] 
	\nonumber \\
	& =\Sigma\geq 0. 
	\nonumber
\end{align}
Taking into account constitutive relations \eqref{KonstitutivneRelacije-Ojler} and \eqref{konstitutivne relacije s-phi Ojler}, and assumption \eqref{konstitutivne relacije s hipoteza 1 Ojlerovi fluidi}, the following extended entropy balance law is obtained: 
\begin{align} \label{produkcija entropije Ojlerovi fluidi}
	& \dfrac{\partial\rho}{\partial t}s+\rho\dfrac{\partial s}{\partial t}+\dfrac{\partial\rho}{\partial x_j}sv_j+\rho\dfrac{\partial s}{\partial x_j}v_j+\rho s \dfrac{\partial v_j}{\partial x_j}
	+\dfrac{\partial\varphi_j}{\partial x_j}-\Lambda^\rho\left[\dfrac{\partial\rho}{\partial t}+\dfrac{\partial\rho}{\partial x_j}v_j+\rho\dfrac{\partial v_j}{\partial x_j}\right]
	\nonumber \\
	& \quad -\Lambda^{v_i}\left[\dfrac{\partial\rho}{\partial t}v_i+\rho\dfrac{\partial v_i}{\partial t}+\dfrac{\partial\rho}{\partial x_j}v_iv_j+\rho\dfrac{\partial v_i}{\partial x_j}v_j+\rho v_i\dfrac{\partial v_j}{\partial x_j}-\dfrac{\partial t_{ij}}{\partial x_j}\right]
	\\
	& \quad -\Lambda^ e\left[\dfrac{1}{2}\dfrac{\partial\rho}{\partial t}|\mathbf{v}|^2+\rho\dfrac{\partial v_i}{\partial t}v_i+\dfrac{\partial\rho}{\partial t} e+\rho\left(\dfrac{\partial e}{\partial\rho}\dfrac{\partial\rho}{\partial t}+\dfrac{\partial e}{\partial\theta}\dfrac{\partial\theta}{\partial t}\right)\right.
	\nonumber \\
	& \qquad \left.+\left(\dfrac{1}{2}\dfrac{\partial\rho}{\partial x_j}|\mathbf{v}|^2+\rho\dfrac{\partial v_i}{\partial x_j}v_i+\dfrac{\partial\rho}{\partial x_j} e+\rho\left(\dfrac{\partial e}{\partial\rho}\dfrac{\partial\rho}{\partial x_j}+\dfrac{\partial e}{\partial\theta}\dfrac{\partial\theta}{\partial x_j}\right)\right)v_j\right. 
	\nonumber \\
	& \qquad \left.+\left(\dfrac{1}{2}\rho|\mathbf{v}|^2+\rho
	e\right)\dfrac{\partial v_j}{\partial x_j}-\dfrac{\partial t_{ij}}{\partial x_j}v_i-t_{ij}\dfrac{\partial v_i}{\partial x_j}\right]=\Sigma\geq 0 
	\nonumber
\end{align}
Using the identities \eqref{App:Lij}, we can transform the extended entropy inequality into the form: 
\begin{align} \label{ExtEnropyInequality-Euler}
	& \dfrac{\partial \rho}{\partial t} \left[ s + \rho \overline{\dfrac{\partial s}{\partial\rho}} - \Lambda^\rho - \Lambda^{v_i}v_i - \Lambda^ e \left(\dfrac{1}{2}|\mathbf{v}|^2 + e + \rho\dfrac{\partial e}{\partial\rho}\right) \right] 
	\nonumber \\
	& \quad + \dfrac{\partial v_i}{\partial t} \left[- \Lambda^{v_i}\rho - \Lambda^ e \rho v_i \right]
	\nonumber \\ 
	& \quad + \dfrac{\partial \theta}{\partial t} \left[ \rho \overline{\dfrac{\partial s}{\partial\theta}} - \Lambda^e \rho \dfrac{\partial e}{\partial\theta} \right] 
	\nonumber \\ 
	& \quad + \dfrac{\partial \rho}{\partial x_j} \left[ sv_j + \rho \overline{\dfrac{\partial s}{\partial\rho}} v_j + \dfrac{\partial\varphi_j}{\partial\rho} - \Lambda^\rho v_j - \Lambda^{v_i}v_i v_j - \Lambda^{v_j} \dfrac{\partial p}{\partial\rho} %\right.
	%\nonumber\\ & \quad \left.
	- \Lambda^e \left(\left(\dfrac{1}{2}|\mathbf{v}|^2 + e + \rho\dfrac{\partial e}{\partial\rho} \right) v_j + \dfrac{\partial p}{\partial\rho} v_j \right) \right]
	\\ 
	& \quad + \dfrac{\partial \theta}{\partial x_j} \left[ \rho \overline{\dfrac{\partial s}{\partial\theta}} v_j + \dfrac{\partial\varphi_j}{\partial\theta} - \Lambda^{v_j}\dfrac{\partial p}{\partial\theta} - \Lambda^e \left(\rho\dfrac{\partial e}{\partial\theta}v_j + \dfrac{\partial p}{\partial\theta} v_j \right) \right] 
	\nonumber \\ 
	& \quad + D_{ij} \Big[ \rho s\delta_{ij} - \Lambda^\rho \rho \delta_{ij} - \Lambda^{v_i} \rho v_j - \Lambda^{v_k} \rho v_k \delta_{ij} %\right. 
	% \nonumber \\ & \quad 
	- \Lambda^e \rho v_i v_j - \Lambda^e \left( \dfrac{1}{2} \rho|\mathbf{v}|^2 + \rho e \right) \delta_{ij} - \Lambda^e p \delta_{ij} \Big] 
	\nonumber \\ 
	& \quad + W_{ij} \left[ -\Lambda^{v_i}\rho v_j-\Lambda^{e}\rho v_i v_j \right] = \Sigma \geq 0. 
	\nonumber
\end{align}
In this form we can recognize the process direction vector: 
\begin{equation} \label{ProcessDirectionVector-Euler}
	\mathbf{X} = \left( \dfrac{\partial \rho}{\partial t}, \dfrac{\partial v_i}{\partial t}, 
	\dfrac{\partial \theta}{\partial t}, \dfrac{\partial \rho}{\partial x_j}, 
	\dfrac{\partial v_i}{\partial x_j}, \dfrac{\partial \theta}{\partial x_j} \right), 
\end{equation}
which plays the role of unknown (field) variables, equivalent to the vector $\mathbf{X}$ in Liu's Lemma \ref{lemma:Liu-Lemma}. Note that velocity gradient $\dfrac{\partial v_{i}}{\partial x_{j}} = L_{ij}$ is in \eqref{ExtEnropyInequality-Euler} decomposed into symmetric part $D_{ij}$ and skew-symmetric part $W_{ij}$ according to \eqref{App:Lij}. 

Components of the process direction vector \eqref{ProcessDirectionVector-Euler} are linearly independent and do not belong to the constitutive state space. Therefore, according to Lemma \ref{lemma:Liu-Lemma}, their coefficients must vanish. This will lead to the set of equations which will serve to compute the multipliers and (eventually) the entropy flux, and the \emph{residual inequality} which is the basis for derivation of the constitutive relations. 

First group of equations comes from the coefficients of time derivatives $\dfrac{\partial\rho}{\partial t}$, $\dfrac{\partial v_i}{\partial t}$ and $\dfrac{\partial\theta}{\partial t}$:
\begin{align}
	\label{Trho Ojlerovi fluidi}
	& s+\rho\overline{\dfrac{\partial s}{\partial\rho}}-\Lambda^\rho-\Lambda^{v_i}v_i-\Lambda^ e\left(\dfrac{1}{2}|\mathbf{v}|^2+e+\rho\dfrac{\partial e}{\partial\rho}\right)=0, 
	\\
	\label{Tvi Ojlerovi fluidi}
	& -\Lambda^{v_i}\rho-\Lambda^ e\rho v_i=0, \quad i=1,2,3, 
	\\
	\label{Ttheta Ojlerovi fluidi}
	& \rho\overline{\dfrac{\partial s}{\partial\theta}}-\Lambda^ e\rho\dfrac{\partial e}{\partial\theta}=0.
\end{align}
It will serve to determine the multipliers $\Lambda^\rho$, $\Lambda^{v_i}$ and $\Lambda^e$. 
Second group of equations comes from the coefficients of space derivatives $\dfrac{\partial\rho}{\partial x_j}$, $\dfrac{\partial\theta}{\partial x_j}$ and $D_{ij}$ $(i,j = 1, 2, 3)$:
\begin{align} 
	\label{Xrho Ojlerovi fluidi}
	& sv_j +\rho\overline{\dfrac{\partial s}{\partial\rho}} v_j+\dfrac{\partial\varphi_j}{\partial\rho}-\Lambda^\rho v_j-\Lambda^{v_i}v_iv_j-\Lambda^{v_j}\dfrac{\partial p}{\partial\rho} 
	% \\ & \quad 
	-\Lambda^ e\left(\left(\dfrac{1}{2}|\mathbf{v}|^2+e+\rho\dfrac{\partial e}{\partial\rho}\right)v_j+\dfrac{\partial p}{\partial\rho}v_j\right)=0, %\quad j=1,2,3, 
	\\
	\label{Xtheta Ojlerovi fluidi}
	& \rho\overline{\dfrac{\partial s}{\partial\theta}} v_j+\dfrac{\partial\varphi_j}{\partial\theta}-\Lambda^{v_j}\dfrac{\partial p}{\partial\theta}-\Lambda^ e\left(\rho\dfrac{\partial e}{\partial\theta}v_j+\dfrac{\partial p}{\partial\theta}v_j\right)=0, % \quad j=1,2,3. 
	\\
	\label{Dij Ojlerovi fluidi}
	& \rho s\delta_{ij} - \Lambda^\rho \rho \delta_{ij} - \Lambda^{v_i} \rho v_j - \Lambda^{v_k} \rho v_k \delta_{ij} %\right. 
	% \\ & \quad 
	- \Lambda^e \rho v_i v_j - \Lambda^e \left( \dfrac{1}{2} \rho|\mathbf{v}|^2 + \rho e \right) \delta_{ij} - \Lambda^e p \delta_{ij} = 0. 
	%\nonumber 
\end{align}
It will help to determine the entropy flux $\varphi_{j}$ and (possibly) impose restrictions on multipliers. The remaining part of equation \eqref{ExtEnropyInequality-Euler} contains $W_{ij}$, which is skew-symmetric and neither belongs to the constitutive state space $\{ \rho, \theta \}$, nor to the process direction vector \eqref{ProcessDirectionVector-Euler}. Therefore, we shall impose the following condition: 
\begin{equation} \label{Wij Ojlerovi fluidi}
	\left[ -\Lambda^{v_i}\rho v_j-\Lambda^{e}\rho v_i v_j \right] W_{ij} = 0. 
\end{equation}

It is clear that application of Lemma \ref{lemma:Liu-Lemma}, which imposed equations \eqref{Trho Ojlerovi fluidi}-\eqref{Wij Ojlerovi fluidi}, leads to a trivial residual inequality, $\Sigma = 0$. This appears to be a consequence of the choice of constitutive state space. This will be expressed in the form of Lemma in the sequel. 

\subsection{Computation of the multipliers} 

By solving equations \eqref{Trho Ojlerovi fluidi}--\eqref{Ttheta Ojlerovi fluidi} we shall determine the multipliers. This will be done in two steps. 
\begin{lemma} \label{Lema:Lambda1-Ojler}
	The multipliers $\Lambda^{\rho}$ and $\Lambda^{v_i}$ have the following form: 
	\begin{align}
		\label{modifikovana Trho Ojlerovi fluidi}
		\Lambda^\rho & = s + \rho\overline{\dfrac{\partial s}{\partial\rho}} 
		+ \dfrac{1}{2}\Lambda^e|\mathbf{v}|^2 - \Lambda^ e e 
		- \Lambda^e \rho \dfrac{\partial e}{\partial\rho}, 
		\\
		\label{Tv_i' Ojlerovi fluidi}
		\Lambda^{v_i} & = -\Lambda^e v_i, \quad i = 1, 2, 3.
	\end{align}
\end{lemma}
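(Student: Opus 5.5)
The plan is to read both formulas directly off the coefficient equations for the time derivatives, \eqref{Trho Ojlerovi fluidi} and \eqref{Tvi Ojlerovi fluidi}. These equations are linear in the multipliers, so no further use of Lemma \ref{lemma:Liu-Lemma} is needed beyond the fact that it produced them. At this stage $\Lambda^e$ is kept as an unknown function on the constitutive state space $\{\rho,\theta\}$; it is determined later from \eqref{Ttheta Ojlerovi fluidi}.

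First I would treat \eqref{Tvi Ojlerovi fluidi}, namely $-\rho\left(\Lambda^{v_i}+\Lambda^e v_i\right)=0$ for $i=1,2,3$. Since the mass density of the fluid is strictly positive, $\rho>0$, we can divide by $-\rho$. This gives $\Lambda^{v_i}=-\Lambda^e v_i$, which is \eqref{Tv_i' Ojlerovi fluidi}.

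Next I would substitute this into \eqref{Trho Ojlerovi fluidi}, which is already solved for $\Lambda^\rho$ up to rearrangement. Contracting with $v_i$ gives $\Lambda^{v_i}v_i=-\Lambda^e|\mathbf{v}|^2$. Hence
\begin{equation*}
\Lambda^\rho = s+\rho\overline{\dfrac{\partial s}{\partial\rho}}+\Lambda^e|\mathbf{v}|^2-\Lambda^e\left(\dfrac{1}{2}|\mathbf{v}|^2+e+\rho\dfrac{\partial e}{\partial\rho}\right).
\end{equation*}
Collecting the velocity terms, $\Lambda^e|\mathbf{v}|^2-\tfrac{1}{2}\Lambda^e|\mathbf{v}|^2=\tfrac{1}{2}\Lambda^e|\mathbf{v}|^2$, yields \eqref{modifikovana Trho Ojlerovi fluidi}.

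There is no real obstacle here. The only points needing care are the positivity of $\rho$ used in the division, and the sign bookkeeping of the kinetic-energy term, where the $+\Lambda^e|\mathbf{v}|^2$ coming from $-\Lambda^{v_i}v_i$ partially cancels the $-\tfrac12\Lambda^e|\mathbf{v}|^2$. I would also note that the total derivative $\overline{\partial s/\partial\rho}$ from \eqref{totalni izvod Ojlerovi fluidi} is carried through unchanged. The relations \eqref{konstitutivne relacije s hipoteza 2 Ojlerovi fluidi} are not needed at this stage; they will only enter when $\Lambda^e$ is computed.
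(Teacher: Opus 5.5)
Your proposal is correct and follows the paper's proof exactly: you divide \eqref{Tvi Ojlerovi fluidi} by $\rho>0$ to get $\Lambda^{v_i}=-\Lambda^e v_i$, then substitute into \eqref{Trho Ojlerovi fluidi}. Your bookkeeping of the kinetic-energy term, $\Lambda^e|\mathbf{v}|^2-\tfrac12\Lambda^e|\mathbf{v}|^2=\tfrac12\Lambda^e|\mathbf{v}|^2$, is the only computation involved, and you have it right.
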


\begin{proof}
	Since $\rho > 0$, equation \eqref{Tv_i' Ojlerovi fluidi} follows directly from \eqref{Tvi Ojlerovi fluidi}. Inserting \eqref{Tv_i' Ojlerovi fluidi} into \eqref{Trho Ojlerovi fluidi}, one obtains \eqref{modifikovana Trho Ojlerovi fluidi}. 
\end{proof}

Lemma \ref{Lema:Lambda1-Ojler} expresses the multipliers $\Lambda^\rho$ and $\Lambda^{v_i}$ in terms of $\Lambda^e$. In the sequel, we shall determine their final form. 
\begin{theorem} \label{Te:Lambda-e Ojler}
	Multiplier $\Lambda^{e}$ of equation \eqref{zakon održanja energije Ojlerovi fluidi} reads: 
	\begin{equation} \label{mnozitelj Lambda-e Ojlerovi fluidi}
		\Lambda^{e} = \dfrac{1}{\theta}. 
	\end{equation}
	%Besides, the following relation holds: 
	%\begin{equation} \label{s po theta Ojlerovi fluidi}
	%\dfrac{\partial s}{\partial \theta} = 0.
	%\end{equation}
\end{theorem}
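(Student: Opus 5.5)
The plan is to use the coefficient of $\dfrac{\partial\theta}{\partial t}$, i.e. equation \eqref{Ttheta Ojlerovi fluidi}, together with the structural assumption \eqref{konstitutivne relacije s hipoteza 1 Ojlerovi fluidi} on the specific entropy and the equilibrium relation \eqref{konstitutivne relacije s hipoteza 2 Ojlerovi fluidi}. Since $s = s(e(\rho,\theta),\rho)$ depends on $\theta$ only through $e$, the total derivative \eqref{totalni izvod Ojlerovi fluidi} gives
\begin{equation*}
	\overline{\dfrac{\partial s}{\partial \theta}} = \dfrac{\partial s}{\partial e}\dfrac{\partial e}{\partial \theta}.
\end{equation*}
Substituting this into \eqref{Ttheta Ojlerovi fluidi} and dividing by $\rho>0$ yields
\begin{equation*}
	\left( \dfrac{\partial s}{\partial e} - \Lambda^{e} \right) \dfrac{\partial e}{\partial \theta} = 0.
\end{equation*}

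The key step is to argue that $\dfrac{\partial e}{\partial\theta} \neq 0$. I would justify this by the physical requirement that the specific heat at constant density, $c_{v} = \dfrac{\partial e}{\partial \theta}$, is positive. This is the standard thermodynamic stability condition; equivalently, $e(\rho,\cdot)$ must be invertible in $\theta$ for the change of variables $(\rho,\theta)\leftrightarrow(\rho,e)$ underlying \eqref{konstitutivne relacije s hipoteza 1 Ojlerovi fluidi} to make sense. Granting this, we obtain $\Lambda^{e} = \dfrac{\partial s}{\partial e}$. The first relation in \eqref{konstitutivne relacije s hipoteza 2 Ojlerovi fluidi} then gives $\Lambda^{e} = 1/\theta$, which is \eqref{mnozitelj Lambda-e Ojlerovi fluidi}.

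The main obstacle is that the identification rests on the assumed relation $\partial s/\partial e = 1/\theta$ rather than being derived from the entropy inequality alone. If one wished to avoid it, the alternative would be to show that $\Lambda^{e}$ depends only on $\theta$ and is a universal (material-independent) function. One could then identify it with the inverse absolute temperature via a contact or ideal-gas argument, possibly using \eqref{Xtheta Ojlerovi fluidi} and \eqref{Dij Ojlerovi fluidi} to obtain the integrability conditions. In the present setting, however, the paper explicitly assumes \eqref{konstitutivne relacije s hipoteza 2 Ojlerovi fluidi}, so the direct argument above suffices.

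As a consistency check, I would verify that with $\Lambda^{e}=1/\theta$, Lemma \ref{Lema:Lambda1-Ojler} and the second relation in \eqref{konstitutivne relacije s hipoteza 2 Ojlerovi fluidi} make \eqref{Dij Ojlerovi fluidi} hold identically. Indeed, $\Lambda^{\rho}$ reduces to $s - (e + p/\rho - \tfrac12|\mathbf{v}|^2)/\theta$, i.e. the familiar $-\mu/\theta$ plus the kinetic term, and the $\delta_{ij}$ terms in \eqref{Dij Ojlerovi fluidi} then cancel. This confirms that no additional restriction arises.
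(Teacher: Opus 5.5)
Your argument is correct and is essentially the paper's proof. The paper also substitutes $\overline{\partial s/\partial\theta} = (\partial s/\partial e)(\partial e/\partial\theta)$ into the $\partial\theta/\partial t$ coefficient equation to get $\rho(\partial s/\partial e - \Lambda^e)\,\partial e/\partial\theta = 0$, and then invokes $\partial s/\partial e = 1/\theta$. Your explicit appeal to $c_v = \partial e/\partial\theta > 0$ to cancel that factor is a small improvement: the paper leaves this nondegeneracy implicit in the Euler case and only remarks that $\partial e/\partial\theta \neq 0$ ``in general'' in the Korteweg case.
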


\begin{proof}
	Using the total derivative of specific entropy \eqref{totalni izvod Ojlerovi fluidi}$_{2}$, equation \eqref{Ttheta Ojlerovi fluidi} becomes: 
	\begin{equation}
		\label{Ttheta pomoćna Ojlerovi fluidi}
		\rho\left(\dfrac{\partial s}{\partial e} 
		- \Lambda^e\right) \dfrac{\partial e}{\partial\theta} = 0.
		%+ \rho\dfrac{\partial s}{\partial\theta} = 0.
	\end{equation}
	This implies:
	\begin{equation} \label{Lambda_e Ojlerovi fluidi}
		\Lambda^e = \dfrac{\partial s}{\partial e} \overset{\eqref{konstitutivne relacije s hipoteza 2 Ojlerovi fluidi}_{1}}{=} \dfrac{1}{\theta}, 
	\end{equation} 
	which completes the proof.
\end{proof} 

The technique used for solving equation \eqref{Ttheta Ojlerovi fluidi} and determination of $\Lambda^e$ is facilitated by assumption \eqref{konstitutivne relacije s hipoteza 1 Ojlerovi fluidi} about the structure of specific entropy. This will become essential in the study of Korteweg fluids. 

\begin{corollary} \label{Cor:Mnozitelji Ojlerovi fluidi}
	Multipliers $\Lambda^{\rho}$ and $\Lambda^{v_i}$ of equations \eqref{zakon održanja mase Ojlerovi fluidi} and \eqref{zakon održanja količine kretanja Ojlerovi fluidi}, respectively, are determined by the following relations: 
	\begin{align} 
		\label{mnozitelj Lambda-rho Ojlerovi fluidi}
		\Lambda^{\rho} & = - \dfrac{1}{\theta} \left( e - \theta s + \frac{p}{\rho} 
		- \frac{1}{2} |\mathbf{v}|^{2} \right), 
		\\
		\label{mnozitelj Lambda-vi Ojlerovi fluidi}
		\Lambda^{v_i} & = - \dfrac{v_{i}}{\theta}, \quad i = 1, 2, 3.
	\end{align}
\end{corollary}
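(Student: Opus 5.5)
The corollary follows by substituting $\Lambda^e = 1/\theta$ from Theorem \ref{Te:Lambda-e Ojler} into the expressions of Lemma \ref{Lema:Lambda1-Ojler}, and then simplifying with the equilibrium relations \eqref{konstitutivne relacije s hipoteza 2 Ojlerovi fluidi}. The momentum multiplier is immediate: inserting $\Lambda^e = 1/\theta$ into \eqref{Tv_i' Ojlerovi fluidi} gives $\Lambda^{v_i} = -v_i/\theta$, which is \eqref{mnozitelj Lambda-vi Ojlerovi fluidi}.

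For $\Lambda^\rho$ we start from \eqref{modifikovana Trho Ojlerovi fluidi}. The only nontrivial term is $\rho\,\overline{\partial s/\partial\rho}$, which we expand with the total derivative \eqref{totalni izvod Ojlerovi fluidi}$_1$:
\begin{equation*}
\rho\overline{\dfrac{\partial s}{\partial\rho}} = \rho\dfrac{\partial s}{\partial e}\dfrac{\partial e}{\partial\rho} + \rho\dfrac{\partial s}{\partial\rho} = \dfrac{\rho}{\theta}\dfrac{\partial e}{\partial\rho} - \dfrac{1}{\theta}\dfrac{p}{\rho}.
\end{equation*}
Here the second equality uses both relations in \eqref{konstitutivne relacije s hipoteza 2 Ojlerovi fluidi}.

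Substituting this into \eqref{modifikovana Trho Ojlerovi fluidi}, the term $\frac{\rho}{\theta}\frac{\partial e}{\partial\rho}$ cancels exactly against $-\Lambda^e\rho\,\frac{\partial e}{\partial\rho}$. What remains is
\begin{equation*}
\Lambda^\rho = s - \dfrac{p}{\rho\theta} - \dfrac{e}{\theta} + \dfrac{|\mathbf{v}|^2}{2\theta}.
\end{equation*}
Factoring out $-1/\theta$ gives \eqref{mnozitelj Lambda-rho Ojlerovi fluidi}.

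Before that substitution, one sign needs checking. As printed, \eqref{modifikovana Trho Ojlerovi fluidi} has $+\tfrac12\Lambda^e|\mathbf{v}|^2$. Re-deriving it from \eqref{Trho Ojlerovi fluidi} with $\Lambda^{v_i}v_i = -\Lambda^e|\mathbf{v}|^2$ gives $\Lambda^e|\mathbf{v}|^2 - \tfrac12\Lambda^e|\mathbf{v}|^2 = +\tfrac12\Lambda^e|\mathbf{v}|^2$. This confirms the sign, and it matches the $-\tfrac12|\mathbf{v}|^2$ inside the bracket of \eqref{mnozitelj Lambda-rho Ojlerovi fluidi}.

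The main point requiring care is distinguishing the explicit partial derivative $\partial s/\partial\rho$ from the total derivative $\overline{\partial s/\partial\rho}$. It is only the explicit one that equals $-p/(\theta\rho^2)$, and that distinction is what produces the cancellation of the $\partial e/\partial\rho$ terms.
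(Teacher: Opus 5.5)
Your proposal is correct and follows essentially the same route as the paper. You substitute $\Lambda^e = 1/\theta$ into the expressions of Lemma~\ref{Lema:Lambda1-Ojler}, expand the total derivative $\overline{\partial s/\partial\rho}$, and use $\partial s/\partial e = 1/\theta$, $\partial s/\partial\rho = -p/(\theta\rho^2)$, so that the $\partial e/\partial\rho$ terms cancel; your explicit re-check of the sign of the kinetic term is a harmless extra step the paper leaves implicit.
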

\begin{proof}
	Using the total derivative \eqref{totalni izvod Ojlerovi fluidi} of specific entropy, 
	\begin{align*}
		\Lambda^\rho &\overset{\eqref{modifikovana Trho Ojlerovi fluidi}}{=} s + \rho\overline{\dfrac{\partial s}{\partial\rho}} 
		+ \dfrac{1}{2}\Lambda^e|\mathbf{v}|^2 - \Lambda^ e e 
		- \Lambda^e \rho \dfrac{\partial e}{\partial\rho} 
		\\
		&\overset{\eqref{Lambda_e Ojlerovi fluidi}}{=} s + \rho\left(\dfrac{\partial s}{\partial e}\dfrac{\partial e}{\partial\rho}+\dfrac{\partial s}{\partial\rho}\right)+\dfrac{1}{2}\dfrac{\partial s}{\partial e}|\mathbf{v}|^2 -\dfrac{\partial s}{\partial e} e - \dfrac{\partial s}{\partial e} \rho \dfrac{\partial e}{\partial\rho}.	   
	\end{align*}
	Taking into account \eqref{konstitutivne relacije s hipoteza 2 Ojlerovi fluidi}, equation \eqref{mnozitelj Lambda-rho Ojlerovi fluidi} follows. On the other hand, \eqref{mnozitelj Lambda-vi Ojlerovi fluidi} is obtained from \eqref{Tv_i' Ojlerovi fluidi} taking into account \eqref{mnozitelj Lambda-e Ojlerovi fluidi}. 
\end{proof} 

\begin{corollary} \label{Cor:Dij Wij Ojlerovi fluidi}
	For multipliers $\Lambda^{\rho}$, $\Lambda^{v_i}$, and $\Lambda^{e}$, determined by \eqref{mnozitelj Lambda-rho Ojlerovi fluidi}, \eqref{mnozitelj Lambda-vi Ojlerovi fluidi}, and \eqref{mnozitelj Lambda-e Ojlerovi fluidi}, respectively, the following equations hold: 
	%\begin{equation} 
	\begin{align} \label{Dij_Ojlerovi fluidi}
		& \rho s\delta_{ij} -\Lambda^\rho\rho\delta_{ij}-\Lambda^{v_i}\rho v_j-\Lambda^{v_k}\rho   
		v_k\delta_{ij} %\\ 
		-\Lambda^e\rho v_iv_j-\Lambda^ e\left(\dfrac{1}{2}\rho|\mathbf{v}|^2 
		+\rho e\right)\delta_{ij}-\Lambda^ e p\delta_{ij} = 0, 
		\\
		% \nonumber 
	%\end{align}
	%\end{equation}
	%\begin{equation} 
		\label{Wij_Ojlerovi fluidi}
		& -\Lambda^{v_i}\rho v_j-\Lambda^{e}\rho v_i v_j = 0. 
	\end{align}
	
\end{corollary}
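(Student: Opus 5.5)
Both identities follow by direct substitution of the multipliers from Theorem \ref{Te:Lambda-e Ojler} and Corollary \ref{Cor:Mnozitelji Ojlerovi fluidi}. No further appeal to Lemma \ref{lemma:Liu-Lemma} is needed.

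\emph{Equation \eqref{Wij_Ojlerovi fluidi}.} By Lemma \ref{Lema:Lambda1-Ojler}, $\Lambda^{v_i} = -\Lambda^e v_i$, so
\begin{equation*}
	-\Lambda^{v_i}\rho v_j - \Lambda^e \rho v_i v_j = \Lambda^e \rho v_i v_j - \Lambda^e \rho v_i v_j = 0 .
\end{equation*}
This cancellation holds identically, for every component $(i,j)$. It therefore gives more than the contracted condition \eqref{Wij Ojlerovi fluidi} and holds regardless of the value of $\Lambda^e$.

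\emph{Equation \eqref{Dij_Ojlerovi fluidi}.} We first treat the off-diagonal, $\delta_{ij}$-free terms. The same substitution gives
\begin{equation*}
	-\Lambda^{v_i}\rho v_j - \Lambda^e \rho v_i v_j = 0 ,
\end{equation*}
so only the terms multiplying $\delta_{ij}$ remain. Their coefficient is
\begin{equation*}
	\rho s - \Lambda^\rho \rho - \Lambda^{v_k}\rho v_k - \Lambda^e\Bigl(\tfrac12 \rho |\mathbf{v}|^2 + \rho e\Bigr) - \Lambda^e p .
\end{equation*}

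Next we substitute $\Lambda^{v_k} = -v_k/\theta$, so that $-\Lambda^{v_k}\rho v_k = \rho |\mathbf{v}|^2/\theta$. We also substitute $\Lambda^e = 1/\theta$ and
\begin{equation*}
	\Lambda^\rho = -\frac{1}{\theta}\Bigl(e - \theta s + \frac{p}{\rho} - \tfrac12 |\mathbf{v}|^2\Bigr),
\end{equation*}
so that $-\Lambda^\rho\rho = \frac{1}{\theta}\bigl(\rho e - \theta\rho s + p - \tfrac12\rho|\mathbf{v}|^2\bigr)$. Collecting terms:
\begin{itemize}
	\item the entropy terms give $\rho s - \rho s = 0$;
	\item the internal-energy terms give $\rho e/\theta - \rho e/\theta = 0$;
	\item the pressure terms give $p/\theta - p/\theta = 0$;
	\item the kinetic terms give $\frac{1}{\theta}\bigl(-\tfrac12 + 1 - \tfrac12\bigr)\rho|\mathbf{v}|^2 = 0$.
\end{itemize}
Hence the $\delta_{ij}$ coefficient vanishes, and \eqref{Dij_Ojlerovi fluidi} holds.

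The only step needing care is the bookkeeping of signs in $\Lambda^\rho$ as given by \eqref{mnozitelj Lambda-rho Ojlerovi fluidi}, in particular the three kinetic-energy contributions. I would double-check that result against the intermediate expression \eqref{modifikovana Trho Ojlerovi fluidi} combined with \eqref{konstitutivne relacije s hipoteza 2 Ojlerovi fluidi}. The key cancellation is $\rho\,\overline{\partial s/\partial\rho} - \rho\,\Lambda^e\,\partial e/\partial\rho = \rho\,\partial s/\partial\rho = -p/(\rho\theta)$, which produces the $p/\rho$ term.

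As a consequence, the $D_{ij}$ equation places no new restriction on the multipliers. It is automatically consistent with the time-derivative equations \eqref{Trho Ojlerovi fluidi}--\eqref{Ttheta Ojlerovi fluidi}.
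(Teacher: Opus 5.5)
Your proof is correct and takes the same route as the paper. Direct substitution of $\Lambda^e = 1/\theta$, $\Lambda^{v_i} = -v_i/\theta$ and $\Lambda^\rho$ reduces both coefficient expressions to identities, and your term-by-term bookkeeping (including the kinetic contributions $-\tfrac12 + 1 - \tfrac12$) just makes explicit what the paper states in one line.
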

\begin{proof}
	Equations \eqref{Dij_Ojlerovi fluidi} and \eqref{Wij_Ojlerovi fluidi} are proved by direct substitution of the multipliers \eqref{mnozitelj Lambda-e Ojlerovi fluidi}, \eqref{mnozitelj Lambda-vi Ojlerovi fluidi} and \eqref{mnozitelj Lambda-vi Ojlerovi fluidi}. In such a way, equations \eqref{Dij Ojlerovi fluidi} and \eqref{Wij Ojlerovi fluidi} are reduced to identities. 
\end{proof}

Although decomposition of the velocity gradient in the last result seems redundant, it will be crucial for the analysis of Korteweg fluids. Symmetric part will enter into the entropy production, whereas the skew-symmetric part will facilitate computation of the additional multiplier. 

\subsection{Entropy flux and entropy production} 

Consequences of the extended entropy balance law also determine the structure of the constitutive functions \eqref{konstitutivne relacije s-phi Ojler}. We shall determine the entropy flux $\varphi_{j}$ and entropy production $\Sigma$. 

\begin{lemma} \label{Lema:phi_j jednakosti Ojler}
	The following relations hold: 
	\begin{equation} \label{phi izvodi Ojler}
		\dfrac{\partial\varphi_j}{\partial\rho}=0, \quad 
		\dfrac{\partial\varphi_j}{\partial\theta} = 0, 
	\end{equation}
	which implies: 
	\begin{equation} \label{phi Ojler}
		\varphi_{j} = 0, \quad j = 1, 2, 3. 
	\end{equation}
\end{lemma}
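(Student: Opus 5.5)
Substituting the multipliers from Theorem \ref{Te:Lambda-e Ojler} and Corollary \ref{Cor:Mnozitelji Ojlerovi fluidi} into the spatial-derivative equations \eqref{Xrho Ojlerovi fluidi} and \eqref{Xtheta Ojlerovi fluidi}, and solving each for the derivative of $\varphi_j$, should show that all remaining terms cancel identically. Those two equations are the only ones containing $\varphi_j$, and they are linear in its derivatives.

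For \eqref{Xtheta Ojlerovi fluidi}, I use $\Lambda^{v_j} = -v_j/\theta$ and $\Lambda^e = 1/\theta$. The term $-\Lambda^{v_j}\partial p/\partial\theta$ becomes $v_j\theta^{-1}\partial p/\partial\theta$, and it cancels against $-\Lambda^e (\partial p/\partial\theta) v_j$. By \eqref{totalni izvod Ojlerovi fluidi}$_2$ and \eqref{konstitutivne relacije s hipoteza 2 Ojlerovi fluidi}$_1$ we have $\rho\overline{\partial s/\partial\theta}\,v_j = \rho\theta^{-1}(\partial e/\partial\theta) v_j$, which cancels $-\Lambda^e\rho(\partial e/\partial\theta)v_j$. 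This leaves $\partial\varphi_j/\partial\theta = 0$.

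For \eqref{Xrho Ojlerovi fluidi}, I group all terms proportional to $v_j$ other than the pressure terms. They read $v_j\big[s+\rho\overline{\partial s/\partial\rho}-\Lambda^\rho-\Lambda^{v_i}v_i-\Lambda^e(\tfrac12|\mathbf v|^2+e+\rho\,\partial e/\partial\rho)\big]$. The bracket is exactly the left-hand side of \eqref{Trho Ojlerovi fluidi}, so it vanishes. The pressure terms $-\Lambda^{v_j}\partial p/\partial\rho - \Lambda^e(\partial p/\partial\rho)v_j$ vanish by \eqref{Tv_i' Ojlerovi fluidi}. Hence $\partial\varphi_j/\partial\rho = 0$, and \eqref{phi izvodi Ojler} is established.

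It follows that $\varphi_j$ is constant on the constitutive state space $\{\rho,\theta\}$, say $\varphi_j = c_j$. The main (only non-computational) obstacle is removing this constant, and I would use a frame-indifference/isotropy argument. Just as for $q_i$ in \eqref{KonstitutivneRelacije-Ojler2}, $\varphi_j$ is a vector-valued function of scalars only, so the representation theorem for isotropic functions forces $\varphi_j = 0$. Alternatively, one may note that a constant flux contributes nothing to the divergence in \eqref{bilans entropije Ojlerovi fluidi}, so normalizing it to zero is consistent with the requirement that $\varphi_j$ vanishes together with $q_j$. I would present the isotropy argument as the primary justification.
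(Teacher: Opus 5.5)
Your derivation of \eqref{phi izvodi Ojler} coincides with the paper's. The coefficient of $\partial\rho/\partial x_j$ is regrouped as $v_j$ times the left-hand side of \eqref{Trho Ojlerovi fluidi} plus $-(\Lambda^{v_j}+\Lambda^e v_j)\,\partial p/\partial\rho$. For the $\theta$-equation the paper uses \eqref{Ttheta Ojlerovi fluidi} and \eqref{Tv_i' Ojlerovi fluidi} in structural form, where you insert the explicit values $\Lambda^e=1/\theta$ and $\Lambda^{v_j}=-v_j/\theta$; the cancellation is the same.

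The genuine difference is the step from $\varphi_j=\mathrm{const.}$ to $\varphi_j=0$. The paper does not appeal to isotropy here. It only observes that $\varphi_j$ enters the local balance \eqref{bilans entropije Ojlerovi fluidi} through its divergence, so the constant may be dropped without loss of generality. That is a normalization rather than a derivation.

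Your isotropy argument is valid and turns $\varphi_j=0$ into an actual consequence. It is the same principle the paper uses to get $q_i=0$ in \eqref{KonstitutivneRelacije-Ojler2}. However, it presupposes that $\varphi_j$ is an objective, isotropic vector function. The assumption \eqref{konstitutivne relacije s-phi Ojler} does not say this, so you should state it explicitly. It also makes the Liu equations superfluous for this conclusion: an isotropic vector function of scalars vanishes identically, so \eqref{phi izvodi Ojler} becomes only a consistency check.

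The paper's divergence-based reasoning concludes less, but it is the one that carries over to the Korteweg case. There the state space contains vectors such as $\rho_{,k}$ and $\theta_{,k}$, isotropy no longer forces the entropy flux to vanish, and $\varphi_j$ is again determined only up to a divergence-free addition.
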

\begin{proof}
	Firstly, it will be proved that entropy flux is $\varphi_j = \mathrm{const.}$ Rewrite equation \eqref{Xrho Ojlerovi fluidi} in the form:
	\begin{equation*}
		\left(s+\rho\left(\dfrac{\partial s}{\partial e}\dfrac{\partial e}{\partial\rho}+\dfrac{\partial s}{\partial\rho}\right)
		- \Lambda^\rho-\Lambda^{v_i}v_i -\Lambda^e \left( \dfrac{1}{2}|\mathbf{v|^2} + e + \rho\dfrac{\partial e}{\partial\rho} \right) \right) v_j %\\
		+\dfrac{\partial\varphi_j}{\partial\rho}
		-\left(\Lambda^{v_j}+\Lambda^ e v_j\right)\dfrac{\partial p}{\partial\rho} = 0. 
	\end{equation*}
	Using \eqref{Trho Ojlerovi fluidi} and \eqref{Tv_i' Ojlerovi fluidi} it is reduced to \eqref{phi izvodi Ojler}$_{1}$. Rewriting equation \eqref{Xtheta Ojlerovi fluidi} in the form: 
	\begin{equation*}
		\rho \left( \dfrac{\partial s}{\partial e} - \Lambda^e\right) 
		\dfrac{\partial e}{\partial\theta} v_j + \dfrac{\partial\varphi_j}{\partial\theta}-(\Lambda^{v_j}+\Lambda^ e v_j)\dfrac{\partial p}{\partial\theta}=0,
	\end{equation*}
	and applying \eqref{Tv_i' Ojlerovi fluidi} and \eqref{Ttheta pomoćna Ojlerovi fluidi}, \eqref{phi izvodi Ojler}$_{2}$ is obtained. 
	
	Taking into account assumption \eqref{konstitutivne relacije s-phi Ojler}$_{2}$, i.e. $\varphi_{j} = \varphi_{j}(\rho, \theta)$, equation \eqref{phi izvodi Ojler} implies $\varphi_{j} = \mathrm{const.}$ However, since entropy flux is relevant for the local entropy balance law \eqref{bilans entropije Ojlerovi fluidi} only through its divergence, we may take \eqref{phi Ojler} without loss of generality. 
\end{proof}

\begin{lemma} \label{Lema:produkcija entropije Ojler}
	Entropy production in Euler fluids vanishes: 
	\begin{equation} \label{Sigma Ojlerovi fluidi}
		\Sigma = 0. 
	\end{equation}
\end{lemma}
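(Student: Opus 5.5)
The plan is to substitute everything determined so far back into the extended entropy inequality \eqref{ExtEnropyInequality-Euler} and to check that every bracket multiplying a component of the process direction vector \eqref{ProcessDirectionVector-Euler}, as well as the bracket multiplying $W_{ij}$, vanishes identically. Once all coefficients are zero, the left-hand side of \eqref{ExtEnropyInequality-Euler} reduces to zero. It contains no term independent of $\mathbf{X}$: in the language of Lemma \ref{lemma:Liu-Lemma}, $\beta-\Lambda_aB_a=0$, because $\varphi_j$, $t_{ij}$, $q_i$, $s$ and $e$ depend only on $\rho,\theta$ and produce no residual terms. Hence $\Sigma=0$.

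Concretely, I will go through the brackets in order.

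\begin{enumerate}
\item[(i)] The brackets of the time derivatives $\partial_t\rho$, $\partial_t v_i$ and $\partial_t\theta$ vanish by \eqref{Trho Ojlerovi fluidi}--\eqref{Ttheta Ojlerovi fluidi}. The multipliers were constructed precisely to achieve this: Lemma \ref{Lema:Lambda1-Ojler}, Theorem \ref{Te:Lambda-e Ojler} and Corollary \ref{Cor:Mnozitelji Ojlerovi fluidi}.
\item[(ii)] The brackets of $\partial_{x_j}\rho$ and $\partial_{x_j}\theta$ vanish because of the computation in the proof of Lemma \ref{Lema:phi_j jednakosti Ojler}. Each bracket splits into a multiple of the corresponding time-derivative coefficient, a multiple of $\Lambda^{v_j}+\Lambda^e v_j$, and $\partial\varphi_j/\partial\rho$ or $\partial\varphi_j/\partial\theta$, all of which are zero by \eqref{phi izvodi Ojler}.
\item[(iii)] The brackets of $D_{ij}$ and $W_{ij}$ vanish by Corollary \ref{Cor:Dij Wij Ojlerovi fluidi}.
\end{enumerate}

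The only point needing an explicit check is the $D_{ij}$ identity, since Corollary \ref{Cor:Dij Wij Ojlerovi fluidi} was stated with a terse proof. I would verify it directly. Inserting $\Lambda^{v_i}=-v_i/\theta$ and $\Lambda^e=1/\theta$ cancels the $v_iv_j$ terms. Then the $\delta_{ij}$ coefficient becomes
\[
\rho s-\rho\Lambda^\rho+\frac{\rho|\mathbf v|^2}{\theta}-\frac{1}{\theta}\Big(\tfrac12\rho|\mathbf v|^2+\rho e+p\Big),
\]
which vanishes after substituting \eqref{mnozitelj Lambda-rho Ojlerovi fluidi}. This step, together with confirming that no leftover non-$\mathbf{X}$ term survives, is the main obstacle, although it is routine algebra.

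Concluding, with all coefficients zero, the identity \eqref{ExtEnropyInequality-Euler} reads $0=\Sigma$, which is \eqref{Sigma Ojlerovi fluidi}. Equivalently, condition 3) of Lemma \ref{lemma:Liu-Lemma} holds with equality, so the residual inequality is trivial.
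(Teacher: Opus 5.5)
Your proposal is correct and follows the same route as the paper: you substitute the multipliers, the identities of Corollary \ref{Cor:Dij Wij Ojlerovi fluidi}, and $\varphi_j=0$ into \eqref{ExtEnropyInequality-Euler}, so every coefficient vanishes, the left-hand side is identically zero, and $\Sigma=0$ follows. Your explicit check of the $\delta_{ij}$ coefficient and your remark that no $\mathbf{X}$-independent term survives only fill in details that the paper leaves as ``direct substitution''.
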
 

\begin{proof}
	As consequence of Theorem \ref{Te:Lambda-e Ojler}, Corollaries \ref{Cor:Mnozitelji Ojlerovi fluidi} and \ref{Cor:Dij Wij Ojlerovi fluidi}, and Lemma \ref{Lema:phi_j jednakosti Ojler}, the left hand side of equation \eqref{ExtEnropyInequality-Euler} vanishes. This implies vanishing of the entropy production \eqref{Sigma Ojlerovi fluidi}. 
\end{proof} 

The result expressed in Lemma \ref{Lema:produkcija entropije Ojler} is typical for Euler fluids. It comes as a consequence of a restrictive choice of the constitutive state space, consisted of the state variables $\{ \rho, \theta \}$ only. 

\subsection{Gibbs relation for Euler fluids}

The final step of this part of the study is derivation of the Gibbs relation. 

\begin{theorem} 
	Gibbs relation for Euler fluids reads: 
	\begin{equation} \label{Gibsova relacija Ojlerovi fluidi}
		\mathrm{d}s = \dfrac{1}{\theta} \mathrm{d}e
		- \dfrac{1}{\theta}\dfrac{p}{\rho^2} \mathrm{d}\rho.
	\end{equation}
\end{theorem}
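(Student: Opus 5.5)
The plan is to obtain the Gibbs relation as a consequence of the multiplier relations rather than assume it. The natural route is to take the differential of the specific entropy in the form fixed by assumption \eqref{konstitutivne relacije s hipoteza 1 Ojlerovi fluidi}, i.e. $s = s(e(\rho,\theta),\rho)$. By the chain rule,
\begin{equation*}
	\mathrm{d}s = \dfrac{\partial s}{\partial e}\,\mathrm{d}e + \dfrac{\partial s}{\partial \rho}\,\mathrm{d}\rho,
	\qquad \mathrm{d}e = \dfrac{\partial e}{\partial\rho}\mathrm{d}\rho + \dfrac{\partial e}{\partial\theta}\mathrm{d}\theta,
\end{equation*}
where $\partial s/\partial\rho$ denotes the explicit partial derivative. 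Equivalently, one may write $\mathrm{d}s = \overline{\partial s/\partial\rho}\,\mathrm{d}\rho + \overline{\partial s/\partial\theta}\,\mathrm{d}\theta$ with the total derivatives \eqref{totalni izvod Ojlerovi fluidi}, and then regroup the terms.

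The coefficient of $\mathrm{d}e$ will be identified through the multiplier. Theorem \ref{Te:Lambda-e Ojler} gives $\partial s/\partial e = \Lambda^e = 1/\theta$, obtained from the $\partial\theta/\partial t$ equation \eqref{Ttheta pomoćna Ojlerovi fluidi} whenever $\partial e/\partial\theta \neq 0$. In this way the first term comes from the Liu procedure itself.

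For the explicit $\rho$-coefficient, I would use the expression for $\Lambda^\rho$. Equation \eqref{modifikovana Trho Ojlerovi fluidi} involves $\rho\,\partial s/\partial\rho$, since the $\rho\,\partial s/\partial e\,\partial e/\partial\rho$ part cancels against $\Lambda^e\rho\,\partial e/\partial\rho$. I would then use the $D_{ij}$ equation \eqref{Dij Ojlerovi fluidi}, taking its trace part with $\Lambda^{v_i} = -\Lambda^e v_i$. This yields $\rho s - \Lambda^\rho\rho + \Lambda^e\rho|\mathbf{v}|^2 - \Lambda^e(\tfrac12\rho|\mathbf{v}|^2 + \rho e) - \Lambda^e p = 0$. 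Substituting $\Lambda^\rho$ from \eqref{modifikovana Trho Ojlerovi fluidi} then gives $-\rho^2\,\partial s/\partial\rho - \Lambda^e p = 0$, hence $\partial s/\partial\rho = -p/(\theta\rho^2)$. This shows that \eqref{konstitutivne relacije s hipoteza 2 Ojlerovi fluidi}$_2$ is a consequence of the multiplier equations rather than an extra input, which is the feature the paper wants to highlight.

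Substituting both coefficients into $\mathrm{d}s$ gives \eqref{Gibsova relacija Ojlerovi fluidi}. The main obstacle is bookkeeping: one must keep the explicit $\rho$-dependence of $s$ separate from the implicit dependence through $e$, so that the $\mathrm{d}\theta$ contributions are absorbed entirely into $\mathrm{d}e$. A second point to handle is the nondegeneracy assumption $\partial e/\partial\theta \neq 0$, which is needed to identify $\Lambda^e$.
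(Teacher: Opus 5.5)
Your argument is correct, but it takes a different logical route from the paper. The paper's proof uses only the chain rule. It writes $\mathrm{d}s = \frac{\partial s}{\partial e}\,\mathrm{d}e + \frac{\partial s}{\partial\rho}\,\mathrm{d}\rho$ and then inserts \emph{both} relations \eqref{konstitutivne relacije s hipoteza 2 Ojlerovi fluidi} as assumed inputs. The trace of the $D_{ij}$ equation \eqref{Dij Ojlerovi fluidi} is only checked afterwards to reduce to an identity (Corollary \ref{Cor:Dij Wij Ojlerovi fluidi}), and the authors concede that what is shown is compatibility rather than derivation.

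You run that corollary backwards. You combine the trace of \eqref{Dij Ojlerovi fluidi} with $\Lambda^{v_i}=-\Lambda^e v_i$ and with $\Lambda^\rho$ from \eqref{modifikovana Trho Ojlerovi fluidi}, where $\Lambda^e=\partial s/\partial e$ removes the implicit $\partial e/\partial\rho$ terms. This gives $-\rho^2\,\partial s/\partial\rho-\Lambda^e p=0$, and the algebra checks out. So the second relation becomes a consequence of the Liu equations: the pressure in $t_{ij}=-p\,\delta_{ij}$ is forced to be the thermodynamic one, which is Liu's classical result.

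The first relation is still an input, though. The $\partial\theta/\partial t$ equation only yields $\Lambda^e=\partial s/\partial e$. The value $1/\theta$ in Theorem \ref{Te:Lambda-e Ojler} comes from assumption \eqref{konstitutivne relacije s hipoteza 2 Ojlerovi fluidi}$_1$. Your remark that the first term ``comes from the Liu procedure itself'' therefore overstates things. Strictly, the procedure delivers $\mathrm{d}s=\frac{\partial s}{\partial e}\bigl(\mathrm{d}e-\frac{p}{\rho^2}\,\mathrm{d}\rho\bigr)$, and $1/\theta$ enters by identifying $\Lambda^e$ with the inverse absolute temperature.

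What the paper's route buys is uniformity with the Korteweg case. There $D_{kl}$ belongs to the constitutive state space, so the coefficient of $D_{ij}$ is not a Liu equation; it sits in the residual inequality \eqref{osnovna rezidualna nejednakost Kortevegovi fluidi}. Your argument is therefore unavailable there, and $p$ has to be introduced through \eqref{konstitutivne relacije s hipoteza 2}$_2$. This is presumably why the authors already assume it for Euler fluids.
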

\begin{proof}
	Using assumption \eqref{konstitutivne relacije s hipoteza 1 Ojlerovi fluidi}, thermodynamic relations \eqref{konstitutivne relacije s hipoteza 2 Ojlerovi fluidi}, and total derivatives \eqref{totalni izvod Ojlerovi fluidi}, total differential of the specific entropy may be written as:
	\begin{align*}
		\mathrm{d}s & = \overline{\dfrac{\partial s}{\partial \rho}} \mathrm{d}\rho 
		+ \overline{\dfrac{\partial s}{\partial \theta}} \mathrm{d}\theta 
		\\ 
		& \overset{\eqref{totalni izvod Ojlerovi fluidi}}{=} 
		\left( \dfrac{\partial s}{\partial e} \dfrac{\partial e}{\partial \rho} 
		+ \dfrac{\partial s}{\partial \rho} \right) \mathrm{d}\rho 
		+ \dfrac{\partial s}{\partial e} \dfrac{\partial e}{\partial \theta} \mathrm{d}\theta
		\\ 
		& = \dfrac{\partial s}{\partial e} \left( \dfrac{\partial e}{\partial \rho} \mathrm{d}\rho 
		+ \dfrac{\partial e}{\partial \theta} \mathrm{d}\theta \right) 
		+ \dfrac{\partial s}{\partial \rho} \mathrm{d} \rho 
		\\
		& \overset{\eqref{konstitutivne relacije s hipoteza 2 Ojlerovi fluidi}}{=} 
		\dfrac{1}{\theta} \mathrm{d}e
		- \dfrac{1}{\theta}\dfrac{p}{\rho^2} \mathrm{d}\rho
	\end{align*}
	which recovers well-known Gibbs relation \eqref{Gibsova relacija Ojlerovi fluidi}. 
\end{proof}

It is important feature of the method of multipliers that Gibbs relation is obtained as a consequence of compatibility of the balance laws and the entropy inequality. On the other hand, in thermodynamics of irreversible processes it regarded as assumption.

Let us remark that the derivation uses the thermodynamic relations \eqref{konstitutivne relacije s hipoteza 2 Ojlerovi fluidi} as input. Therefore, what is obtained here is the compatibility of the Gibbs relation with the entropy inequality and the balance laws, in the same spirit as in Liu's original work (\cite{liu1972method,liu2002continuum}) and \cite{muller1985thermodynamics}. The genuine novelty will appear in Section~\ref{Sec:GibbsKorteweg}, where the \emph{generalized} Gibbs relation including capillary terms is derived---this extension is a true consequence of the procedure.

%%%%%%

\section{Korteweg fluids} \label{Sec:KortewegFluids}

%%%%%% 

In this Section we focus on our main goal---application of the Liu method to Korteweg fluids. This analysis will confirm thermodynamic consistency of Korteweg fluids, i.e. their compatibility with entropy inequality, but it will also bring new results. Main features that distinguish our results from previous studies are the following: 
\begin{itemize}
	\item[(a)] Assumptions and methodology applied here, especially decomposition of the velocity gradient, significantly simplify the procedure of computation of the multipliers. 
	\item[(b)] Entropy flux is determined from the residual inequality, i.e. by annihilation the divergence term whose sign cannot be controlled. 
	\item[(c)] Korteweg stresses are derived from the equilibrium conditions, thus recovering their reversible character. 
	\item[(d)] In view of our choice of the constitutive state space, the nonequilibrium thermodynamic fluxes will be determined in the most general form. 
	\item[(e)] The generalized Gibbs relation is derived, which takes into account the contribution of capillary effects.
\end{itemize}
Finally, the structure of specific internal energy will be discussed and compared with the results of other studies. 

\subsection{Balance laws}

In the modelling of Korteweg fluids we shall assume that conservation laws of mass, momentum and energy hold
\begin{align}
	\label{zakon održanja mase Kortevegovi fluidi}
	& \dfrac{\partial\rho}{\partial t}+\dfrac{\partial}{\partial x_j}\left(\rho v_j\right)=0, 
	\\
	\label{zakon održanja količine kretanja Kortevegovi fluidi}
	& \dfrac{\partial}{\partial t}(\rho v_i)+\dfrac{\partial}{\partial x_j}\left(\rho v_i v_j -t_{ij}\right)=0, 
	\\ 
	\label{zakon održanja energije Kortevegovi fluidi}
	& \frac{\partial}{\partial t}\left(\frac{1}{2}\rho\vert\mathbf{v}\vert ^2+\rho e\right)+\dfrac{\partial}{\partial x_j}\left(\left(\frac{1}{2}\rho\vert\mathbf{v}\vert ^2+\rho e\right)v_j-t_{ij}v_i+q_j\right)=0,
\end{align}
using the same notation as in equations \eqref{zakon održanja mase Ojlerovi fluidi}-\eqref{zakon održanja energije Ojlerovi fluidi} for Euler fluids. Capillary effects will be modelled through the mass density gradient as a field variable (order parameter). To that end its evolution equation/balance law has to be added to the classical ones. It is obtained as gradient of the mass conservation law \eqref{zakon održanja mase Kortevegovi fluidi}, given in three equivalent forms 
\begin{align}
	& \dfrac{\partial^2}{\partial x_i\partial t} \rho + \dfrac{\partial^2}{\partial x_i\partial x_j}(\rho v_j) = 0,
	\nonumber \\
	& \dfrac{\partial}{\partial t}\dfrac{\partial\rho}{\partial x_i} + \dfrac{\partial}{\partial x_j}\dfrac{\partial}{\partial x_i}(\rho v_j) = 0, 
	\label{grad zom} \\
	& \dfrac{\partial}{\partial t}\dfrac{\partial\rho}{\partial x_i} 
	+ \dfrac{\partial}{\partial x_j}\left(\dfrac{\partial\rho}{\partial x_i}v_j + \rho\dfrac{\partial v_j}{\partial x_i}\right) = 0. 
	\nonumber
\end{align}
Using decomposition of the velocity gradient \eqref{App:Lij}, equation $\eqref{grad zom}_3$ becomes
\begin{equation} \label{gradijent odrzanja mase}
	\dfrac{\partial}{\partial t}\rho_{,i}+\dfrac{\partial}{\partial x_j}(\rho_{,i}v_j+\rho (D_{ji}+W_{ji}))=0, \quad i=1,2,3.
\end{equation}
In the sequel we shall study the system of equations \eqref{zakon održanja mase Kortevegovi fluidi}--\eqref{zakon održanja energije Kortevegovi fluidi}, \eqref{gradijent odrzanja mase}. 

\subsection{Constitutive assumptions} 

Closure of the aforementioned system of governing equations requires constitutive relations for the specific internal energy $e$, stress tensor $t_{ij}$ and heat flux $q_{j}$. Constitutive functions have to depend on the quantities (field variables) which determine the \emph{constitutive state space}. In the case of Korteweg fluids, which inherit the information on capillary stresses, i.e. mass density gradient, constitutive state space will be 
\begin{equation} \label{CSS_Korteweg}
	\mathcal{C} = \{ \rho, \rho_{,k}, \rho_{,kl}, \theta, \theta_{,k}, D_{kl} \},  
\end{equation}
where comma denotes the partial derivative with respect to Cartesian coordinate(s). 
Therefore, general form of the constitutive relations will be
\begin{align} \label{konstitutivne relacije 1 opste}
	e & = e(\rho, \rho_{,k},  \rho_{,kl}, \theta, \theta_{,k}, D_{kl}), 
	\nonumber \\
	t_{ij} & = t_{ij}(\rho, \rho_{,k},  \rho_{,kl}, \theta, \theta_{,k}, D_{kl}), 
	\\ 
	q_{j} & = q_{j}(\rho, \rho_{,k},  \rho_{,kl}, \theta, \theta_{,k}, D_{kl}). 
	\nonumber
\end{align}

In accordance with the guiding principle of continuum mechanics, constitutive relations have to be chosen in such a way that balance laws \eqref{zakon održanja mase Kortevegovi fluidi}-\eqref{zakon održanja energije Kortevegovi fluidi} and \eqref{gradijent odrzanja mase} are compatible with the entropy balance law, 
\begin{equation} \label{bilans entropije Kortevegovi fluidi}
	\dfrac{\partial}{\partial t}\left(\rho s\right)+\dfrac{\partial}{\partial x_j}\left(\rho sv_j+\varphi_j\right) = \Sigma, \quad \Sigma \geq 0.
\end{equation}
i.e. they have to ensure that entropy inequality is satisfied for every thermodynamic process determined by \eqref{zakon održanja mase Kortevegovi fluidi}-\eqref{zakon održanja energije Kortevegovi fluidi} and \eqref{gradijent odrzanja mase}. It is peculiar for the method of multipliers that specific entropy $s$ and entropy flux $\varphi_{j}$ are not given in advance. They also have to be determined by means of constitutive relations
\begin{align} \label{konstitutivne relacije s-phi opste}
	s & = s(\rho, \rho_{,k},  \rho_{,kl}, \theta, \theta_{,k}, D_{kl}), \\
	\varphi_j & = \varphi_j(\rho, \rho_{,k},  \rho_{,kl}, \theta, \theta_{,k}, D_{kl}). 
	\nonumber 
\end{align}

Additional assumptions will be introduced regarding the constitutive function of specific entropy. On one hand, they generalize the structure proposed in \eqref{konstitutivne relacije s hipoteza 1 Ojlerovi fluidi}. On the other, they are motivated by the fact that capillary stresses have to be included into the equilibrium (reversible) part of the stress tensor. To that end, we shall introduce the subset of the constitutive state space $\mathcal{C}$, denoted by $\mathcal{C}_{s}$, from which the temperature $\theta$ is excluded 
\begin{equation} \label{CSS-subset_Korteweg}
	\mathcal{C}_{s} = \{ \rho, \rho_{,k}, \rho_{,kl}, \theta_{,k}, D_{kl} \}.   
\end{equation}
It is convenient to assume the specific entropy in the form which can easily be reduced to the equilibrium one. Mimicking \eqref{konstitutivne relacije s hipoteza 1 Ojlerovi fluidi}, it will be assumed that specific entropy $s$ is a function of the specific internal energy $e$, and the remaining constitutive quantities from $\mathcal{C}_{s}$. However, since specific internal energy is also determined by the constitutive relation \eqref{konstitutivne relacije 1 opste}$_{1}$, structure of the specific entropy constitutive function will be
\begin{equation} \label{konstitutivne relacije s hipoteza 1}
	s = s(e(\rho, \rho_{,k}, \rho_{,kl}, \theta_{,k}, D_{kl}, \theta), 
	\rho, \rho_{,k}, \rho_{,kl}, \theta_{,k}, D_{kl}). 
\end{equation} 
In addition, it will be assumed that thermodynamic relations between specific entropy $s$, specific internal energy $e$, mass density $\rho$ and thermodynamic pressure $p$ hold
\begin{equation} \label{konstitutivne relacije s hipoteza 2}
	\dfrac{\partial s}{\partial e} = \dfrac{1}{\theta}, \quad 
	\dfrac{\partial s}{\partial \rho} = - \dfrac{1}{\rho^{2}} \dfrac{p}{\theta}.
\end{equation}

Since specific entropy $s$ may depend on the constitutive variables explicitly, but also implicitly (through the specific internal energy $e$), we shall denote by an overline the total derivative of the specific entropy and compute it in the following way
\begin{equation} \label{totalni izvod}
	\overline{\dfrac{\partial s}{\partial\omega}} 
	= \dfrac{\partial s}{\partial e}\dfrac{\partial e}{\partial\omega} 
	+ \dfrac{\partial s}{\partial\omega}, \quad
	\overline{\dfrac{\partial s}{\partial \theta}} 
	= \dfrac{\partial s}{\partial e}\dfrac{\partial e}{\partial \theta},  
\end{equation}
where $\omega$ stands for any element of $\mathcal{C}_{s}$, $\omega \in \mathcal{C}_{s}$. 

\subsection{The method of multipliers} 

The essence of the method of multipliers lies in specific formulation of the entropy principle proposed by Liu. According to this approach, the entropy balance law is the main equation which determines the evolution of thermodynamic process, whereas the balance laws are the constraints that have to be satisfied. To transform the problem with constraints into the one without them, Lagrange multipliers $\Lambda^\rho$, $\Lambda^{v_i}$, $\Lambda^e$ and $\Lambda^{\nabla\rho}_i$ are introduced and the entropy balance law is written in extended form
\begin{align} \label{prosirena entropijska nejednakost Kortevegovi fluidi}
	\dfrac{\partial}{\partial t}\left(\rho s\right) & +\dfrac{\partial}{\partial x_j}\left(\rho sv_j+\varphi_j\right)
	-\Lambda^\rho\left[\dfrac{\partial\rho}{\partial t}+\dfrac{\partial}{\partial x_j}\left(\rho v_j\right)\right]
	\nonumber \\
	&-\Lambda^{v_i}\left[\dfrac{\partial}{\partial t}(\rho v_i)+\dfrac{\partial}{\partial x_j}\left(\rho v_i v_j -t_{ij}\right)\right]
	\\
	&-\Lambda^e\left[\frac{\partial}{\partial t}\left(\frac{1}{2}\rho\vert\mathbf{v}\vert ^2+\rho e\right)+\dfrac{\partial}{\partial x_j}\left(\left(\frac{1}{2}\rho\vert\mathbf{v}\vert ^2+\rho e\right)v_j-t_{ij}v_i+q_j\right)\right]
	\nonumber \\
	&-\Lambda^{\nabla\rho}_i\left[\dfrac{\partial}{\partial t}\rho_{,i}+\dfrac{\partial}{\partial x_j}(\rho_{,i}v_j+\rho (D_{ji}+W_{ji}))\right]=\Sigma\geq 0. 
	\nonumber
\end{align}
Taking into account the constitutive relations \eqref{konstitutivne relacije 1 opste} and \eqref{konstitutivne relacije s-phi opste}, and the hypothesis \eqref{konstitutivne relacije s hipoteza 1}, extended entropy balance law becomes
\begin{align*} %\label{produkcija entropije Kortevegovi fluidi}
	\dfrac{\partial\rho}{\partial t}s & + \rho\left(\overline{\dfrac{\partial s}{\partial\rho}}\dfrac{\partial\rho}{\partial t}+\overline{\dfrac{\partial s}{\partial\rho_{,k}}}\dfrac{\partial\rho_{,k}}{\partial t}+\overline{\dfrac{\partial s}{\partial\rho_{,kl}}}\dfrac{\partial\rho_{,kl}}{\partial t}+\overline{\dfrac{\partial s}{\partial\theta}}\dfrac{\partial\theta}{\partial t}+\overline{\dfrac{\partial s}{\partial\theta_{,k}}}\dfrac{\partial\theta_{,k}}{\partial t}+\overline{\dfrac{\partial s}{\partial D_{kl}}}\dfrac{\partial D_{kl}}{\partial t}\right)+\dfrac{\partial\rho}{\partial x_j}sv_j
	\\
	&+\rho\left(\overline{\dfrac{\partial s}{\partial\rho}}\dfrac{\partial\rho}{\partial x_j}+\overline{\dfrac{\partial s}{\partial\rho_{,k}}}\dfrac{\partial\rho_{,k}}{\partial x_j}+\overline{\dfrac{\partial s}{\partial\rho_{,kl}}}\dfrac{\partial\rho_{,kl}}{\partial x_j}+\overline{\dfrac{\partial s}{\partial\theta}}\dfrac{\partial\theta}{\partial x_j}+\overline{\dfrac{\partial s}{\partial\theta_{,k}}}\dfrac{\partial\theta_{,k}}{\partial x_j}+\overline{\dfrac{\partial s}{\partial D_{kl}}}\dfrac{\partial D_{kl}}{\partial x_j}\right)v_j+\rho s \dfrac{\partial v_j}{\partial x_j}
	\\
	&+\dfrac{\partial\varphi_j}{\partial\rho}\dfrac{\partial\rho}{\partial x_j}+\dfrac{\partial\varphi_j}{\partial\rho_{,k}}\dfrac{\partial\rho_{,k}}{\partial x_j}+\dfrac{\partial\varphi_j}{\partial\rho_{,kl}}\dfrac{\partial\rho_{,kl}}{\partial x_j}+\dfrac{\partial\varphi_j}{\partial\theta}\dfrac{\partial\theta}{\partial x_j}+\dfrac{\partial\varphi_j}{\partial\theta_{,k}}\dfrac{\partial\theta_{,k}}{\partial x_j}+\dfrac{\partial\varphi_j}{\partial D_{kl}}\dfrac{\partial D_{kl}}{\partial x_j} 
	\\
	&-\Lambda^\rho\left[\dfrac{\partial\rho}{\partial t}+\dfrac{\partial\rho}{\partial x_j}v_j+\rho\dfrac{\partial v_j}{\partial x_j}\right] 
	\\
	&-\Lambda^{v_i}\left[\dfrac{\partial\rho}{\partial t}v_i+\rho\dfrac{\partial v_i}{\partial t}+\dfrac{\partial\rho}{\partial x_j}v_iv_j+\rho\dfrac{\partial v_i}{\partial x_j}v_j+\rho v_i\dfrac{\partial v_j}{\partial x_j}\right.
	\\
	& \quad -\left.\left(\dfrac{\partial t_{ij}}{\partial\rho}\dfrac{\partial\rho}{\partial x_j}+\dfrac{\partial t_{ij}}{\partial\rho_{,k}}\dfrac{\partial\rho_{,k}}{\partial x_j}+\dfrac{\partial t_{ij}}{\partial\rho_{,kl}}\dfrac{\partial\rho_{,kl}}{\partial x_j}+\dfrac{\partial t_{ij}}{\partial\theta}\dfrac{\partial\theta}{\partial x_j}+\dfrac{\partial t_{ij}}{\partial\theta_{,k}}\dfrac{\partial\theta_{,k}}{\partial x_j}+\dfrac{\partial t_{ij}}{\partial D_{kl}}\dfrac{\partial D_{kl}}{\partial x_j}\right)\right] 
	\\
	&-\Lambda^e\left[\dfrac{1}{2}\dfrac{\partial\rho}{\partial t}|\mathbf{v}|^2+\rho\dfrac{\partial v_i}{\partial t}v_i+\dfrac{\partial\rho}{\partial t} e\right. 
	\\
	& \quad +\rho\left(\dfrac{\partial e}{\partial\rho}\dfrac{\partial\rho}{\partial t}+\dfrac{\partial e}{\partial\rho_{,k}}\dfrac{\partial\rho_{,k}}{\partial t}+\dfrac{\partial e}{\partial\rho_{,kl}}\dfrac{\partial\rho_{,kl}}{\partial t}+\dfrac{\partial e}{\partial\theta}\dfrac{\partial\theta}{\partial t}+\dfrac{\partial e}{\partial\theta_{,k}}\dfrac{\partial\theta_{,k}}{\partial t}+\dfrac{\partial e}{\partial D_{kl}}\dfrac{\partial D_{kl}}{\partial t}\right) 
	\\
	& \quad +v_j\left(\dfrac{1}{2}\dfrac{\partial\rho}{\partial x_j}|\mathbf{v}|^2+\rho\dfrac{\partial v_i}{\partial x_j}v_i+\dfrac{\partial\rho}{\partial x_j} e\right. 
	\\
	& \quad \left.+\rho\left(\dfrac{\partial e}{\partial\rho}\dfrac{\partial\rho}{\partial x_j}+\dfrac{\partial e}{\partial\rho_{,k}}\dfrac{\partial\rho_{,k}}{\partial x_j}+\dfrac{\partial e}{\partial\rho_{,kl}}\dfrac{\partial\rho_{,kl}}{\partial x_j}+\dfrac{\partial e}{\partial\theta}\dfrac{\partial\theta}{\partial x_j}+\dfrac{\partial e}{\partial\theta_{,k}}\dfrac{\partial\theta_{,k}}{\partial x_j}+\dfrac{\partial e}{\partial D_{kl}}\dfrac{\partial D_{kl}}{\partial x_j}\right)\right) 
	\\
	& \quad +\left(\dfrac{1}{2}\rho\vert\mathbf{v}\vert^2+\rho e\right)\dfrac{\partial v_j}{\partial x_j} 
	\\
	& \quad -\left(\dfrac{\partial t_{ij}}{\partial\rho}\dfrac{\partial\rho}{\partial x_j}+\dfrac{\partial t_{ij}}{\partial\rho_{,k}}\dfrac{\partial\rho_{,k}}{\partial x_j}+\dfrac{\partial t_{ij}}{\partial\rho_{,kl}}\dfrac{\partial\rho_{,kl}}{\partial x_j}+\dfrac{\partial t_{ij}}{\partial\theta}\dfrac{\partial\theta}{\partial x_j}+\dfrac{\partial t_{ij}}{\partial\theta_{,k}}\dfrac{\partial\theta_{,k}}{\partial x_j}+\dfrac{\partial t_{ij}}{\partial D_{kl}}\dfrac{\partial D_{kl}}{\partial x_j}\right)v_i-t_{ij}\dfrac{\partial v_i}{\partial x_j} 
	\\
	& \quad \left.+\dfrac{\partial q_j}{\partial\rho}\dfrac{\partial\rho}{\partial x_j}+\dfrac{\partial q_j}{\partial\rho_{,k}}\dfrac{\partial\rho_{,k}}{\partial x_j}+\dfrac{\partial q_j}{\partial\rho_{,kl}}\dfrac{\partial\rho_{,kl}}{\partial x_j}+\dfrac{\partial q_j}{\partial\theta}\dfrac{\partial\theta}{\partial x_j}+\dfrac{\partial q_j}{\partial\theta_{,k}}\dfrac{\partial\theta_{,k}}{\partial x_j}+\dfrac{\partial q_j}{\partial D_{kl}}\dfrac{\partial D_{kl}}{\partial x_j}\right] 
	\\
	&-\Lambda^{\nabla\rho}_i\left[\dfrac{\partial\rho_{,i}}{\partial t}+\dfrac{\partial\rho_{,i}}{\partial x_j}v_j+\rho_{,i}\dfrac{\partial v_j}{\partial x_j}+\dfrac{\partial\rho}{\partial x_j}D_{ji}+\rho\dfrac{\partial D_{ji}}{\partial x_j}+\dfrac{\partial\rho}{\partial x_j}W_{ji}+\rho\dfrac{\partial W_{ji}}{\partial x_j}\right]=\Sigma\geq 0.
\end{align*}
Substituting the identities \eqref{App:Lij} and \eqref{App:divergenceLij} into the last equation, we obtain the new form of the extended entropy balance law
\begin{align} \label{rezidualna nejednakost Kortevegovi fluidi}
	\dfrac{\partial\rho}{\partial t} & \left[s+\rho\overline{\dfrac{\partial s}{\partial\rho}}-\Lambda^\rho-\Lambda^{v_i}v_i-\dfrac{1}{2}\Lambda^e|\mathbf{v}|^2-\Lambda^ e e-\Lambda^e\rho\dfrac{\partial e}{\partial\rho}\right] 
	\nonumber \\
	& + \dfrac{\partial\rho_{,k}}{\partial t}\left[\rho\overline{\dfrac{\partial s}{\partial\rho_{,k}}}-\Lambda^e\rho\dfrac{\partial e}{\partial\rho_{,k}}-\Lambda^{\nabla\rho}_k\right] 
	\nonumber \\ 
	& +\dfrac{\partial\rho_{,kl}}{\partial t}\left[\rho\overline{\dfrac{\partial s}{\partial\rho_{,kl}}}-\Lambda^e\rho\dfrac{\partial e}{\partial\rho_{,kl}}\right] 
	\nonumber \\
	&+\dfrac{\partial v_i}{\partial t}\left[-\Lambda^{v_i}\rho-\Lambda^e\rho v_i\right] 
	\nonumber \\
	&+\dfrac{\partial\theta}{\partial t}\left[\rho\overline{\dfrac{\partial s}{\partial\theta}}-\Lambda^e\rho\dfrac{\partial e}{\partial\theta}\right] 
	\nonumber \\ 
	& +\dfrac{\partial\theta_{,k}}{\partial t}\left[\rho\overline{\dfrac{\partial s}{\partial\theta_{,k}}}-\Lambda^e\rho\dfrac{\partial e}{\partial\theta_{,k}}\right] 
	\\
	&+\dfrac{\partial D_{kl}}{\partial t}\left[\rho\overline{\dfrac{\partial s}{\partial D_{kl}}}-\Lambda^e\rho\dfrac{\partial e}{\partial D_{kl}}\right] 
	\nonumber \\
	& +\dfrac{\partial\rho}{\partial x_j}\left[sv_j+\rho\overline{\dfrac{\partial s}{\partial\rho}}v_j+\dfrac{\partial\varphi_j}{\partial\rho}-\Lambda^\rho v_j-\Lambda^{v_i}v_iv_j+\Lambda^{v_i}\dfrac{\partial t_{ij}}{\partial\rho}
	-\dfrac{1}{2}\Lambda^e|\mathbf{v}|^2v_j \right. 
	\nonumber \\ 
	& \quad \left. %\quad \left.
	-\Lambda^e e v_j-\Lambda^e \rho\dfrac{\partial e}{\partial\rho}v_j+\Lambda^e\dfrac{\partial t_{ij}}{\partial\rho}v_i-\Lambda^e\dfrac{\partial q_j}{\partial\rho}\right]
	\nonumber \\
	&+\dfrac{\partial\rho_{,k}}{\partial x_j}\left[\rho\overline{\dfrac{\partial s}{\partial\rho_{,k}}}v_j+\dfrac{\partial\varphi_j}{\partial\rho_{,k}}+\Lambda^{v_i}\dfrac{\partial t_{ij}}{\partial\rho_{,k}}-\Lambda^e\rho\dfrac{\partial e}{\partial\rho_{,k}}v_j %\right. 
	% \nonumber \\ & \quad \left. 
	+ \Lambda^e\dfrac{\partial t_{ij}}{\partial\rho_{,k}}v_i-\Lambda^e\dfrac{\partial q_j}{\partial\rho_{,k}}-\Lambda^{\nabla\rho}_k v_j\right] 
	\nonumber \\
	&+\dfrac{\partial\rho_{,kl}}{\partial x_j}\left[\rho\overline{\dfrac{\partial s}{\partial\rho_{,kl}}}v_j+\dfrac{\partial\varphi_j}{\partial\rho_{,kl}}+\Lambda^{v_i}\dfrac{\partial t_{ij}}{\partial\rho_{,kl}}-\Lambda^e\rho\dfrac{\partial e}{\partial\rho_{,kl}}v_j %\right. 
	% \nonumber \\ & \quad \left. 
	+\Lambda^e\dfrac{\partial t_{ij}}{\partial\rho_{,kl}}v_i-\Lambda^e\dfrac{\partial q_j}{\partial\rho_{,kl}}\right] 
	\nonumber \\
	&+\dfrac{\partial\theta}{\partial x_j}\left[\rho\overline{\dfrac{\partial s}{\partial\theta}}v_j+\dfrac{\partial\varphi_j}{\partial\theta}+\Lambda^{v_i}\dfrac{\partial t_{ij}}{\partial\theta}-\Lambda^e\rho\dfrac{\partial e}{\partial\theta}v_j+\Lambda^e\dfrac{\partial t_{ij}}{\partial\theta}v_i-\Lambda^e\dfrac{\partial q_j}{\partial\theta}\right] 
	\nonumber \\
	&+\dfrac{\partial\theta_{,k}}{\partial x_j}\left[\rho\overline{\dfrac{\partial s}{\partial\theta_{,k}}}v_j+\dfrac{\partial\varphi_j}{\partial\theta_{,k}}+\Lambda^{v_i}\dfrac{\partial t_{ij}}{\partial\theta_{,k}}-\Lambda^e\rho\dfrac{\partial e}{\partial\theta_{,k}}v_j+\Lambda^e\dfrac{\partial t_{ij}}{\partial\theta_{,k}}v_i-\Lambda^e\dfrac{\partial q_j}{\partial\theta_{,k}}\right] 
	\nonumber \\
	&+\dfrac{\partial D_{kl}}{\partial x_j}\left[\rho\overline{\dfrac{\partial s}{\partial D_{kl}}}v_j+\dfrac{\partial\varphi_j}{\partial D_{kl}}+\Lambda^{v_i}\dfrac{\partial t_{ij}}{\partial D_{kl}}-\Lambda^e\rho\dfrac{\partial e}{\partial D_{kl}}v_j % \right. 
	% \nonumber \\ & \quad \left. 
	+\Lambda^e\dfrac{\partial t_{ij}}{\partial D_{kl}}v_i-\Lambda^e\dfrac{\partial q_j}{\partial D_{kl}}-\Lambda^{\nabla\rho}_j\rho\delta_{kl}\right] 
	\nonumber \\
	&+D_{ij}\left[\rho s \delta_{ij}-\Lambda^\rho\rho\delta_{ij}-\Lambda^{v_i}\rho v_j-\Lambda^{v_k}\rho v_k\delta_{ij}-\Lambda^e\rho v_i v_j-\dfrac{1}{2}\rho\vert \mathbf{v}\vert^2\Lambda^e\delta_{ij} \right. 
	\nonumber \\ 
	& \quad \left. -\Lambda^e\rho e\delta_{ij}+\Lambda^et_{ij}-\Lambda^{\nabla\rho}_k\dfrac{\partial\rho}{\partial x_k}\delta_{ij}-\Lambda^{\nabla\rho}_j\dfrac{\partial\rho}{\partial x_i}\right] 
	\nonumber \\
	&+W_{ij}\left[-\Lambda^{v_i}\rho v_j-\Lambda^e\rho v_iv_j+\Lambda^e t_{ij}-\Lambda^{\nabla\rho}_j\dfrac{\partial\rho}{\partial x_i}\right]=\Sigma\geq 0. 
	\nonumber 
\end{align}

In equation \eqref{rezidualna nejednakost Kortevegovi fluidi} the differential part is expressed as the sum of monomials, consisted of the derivatives of field variables multiplied by coefficients which contain Lagrange multipliers and constitutive functions. If the derivative of a field variable does not belong to the constitutive state space, the sign of its monomial cannot be controlled and may lead to violation of the entropy inequality. In that case, the corresponding coefficient have to be equal zero. This reasoning leads to identification of the process direction vector
\begin{equation} \label{ProcessDirectionVector-Korteweg}
	\mathbf{X} = \left( \dfrac{\partial \rho}{\partial t}, \dfrac{\partial\rho_{,k}}{\partial t}, \dfrac{\partial\rho_{,kl}}{\partial t}, \dfrac{\partial v_i}{\partial t}, 
	\dfrac{\partial \theta}{\partial t}, \dfrac{\partial \theta_{,k}}{\partial t}, \dfrac{\partial D_{kl}}{\partial t}, \dfrac{\partial \rho_{,kl}}{\partial x_j}, 
	\dfrac{\partial \theta_{,k}}{\partial x_j}, \dfrac{\partial D_{kl}}{\partial x_{j}} \right), 
\end{equation}
and facilitate application of the Lemma \ref{lemma:Liu-Lemma}. As a consequence, extended entropy balance law is split into a set of equations, whereas the remaining term will represent the \emph{residual inequality}. 

First group of equations consists of the coefficients multiplying the time derivatives of the field variables, $\dfrac{\partial\rho}{\partial t}$, $\dfrac{\partial\rho_{,k}}{\partial t}$, $\dfrac{\partial\rho_{,kl}}{\partial t}$, $\dfrac{\partial v_i}{\partial t}$, $\dfrac{\partial\theta}{\partial t}$, $\dfrac{\partial\theta{,_k}}{\partial t}$ and $\dfrac{\partial D_{kl}}{\partial t}$: 
\begin{align}
	\label{Trho Kortevegovi fluidi}
	& s+\rho\overline{\dfrac{\partial s}{\partial\rho}}-\Lambda^\rho-\Lambda^{v_i}v_i-\dfrac{1}{2}\Lambda^e|\mathbf{v}|^2-\Lambda^ e e-\Lambda^e\rho\dfrac{\partial e}{\partial\rho}=0, 
	\\
	\label{Trho_,k Kortevegovi fluidi}
	& \rho\overline{\dfrac{\partial s}{\partial\rho_{,k}}}-\Lambda^e\rho\dfrac{\partial e}{\partial\rho_{,k}}-\Lambda^{\nabla\rho}_k=0, 
	\\
	& \label{Trho_,kl Kortevegovi fluidi}
	\rho\overline{\dfrac{\partial s}{\partial\rho_{,kl}}}-\Lambda^e\rho\dfrac{\partial e}{\partial\rho_{,kl}}=0,
	\\
	\label{Tv_i Kortevegovi fluidi}
	& -\Lambda^{v_i}\rho-\Lambda^e\rho v_i=0,\quad i=1,2,3. 
	\\
	\label{Ttheta Kortevegovi fluidi}
	& \rho\overline{\dfrac{\partial s}{\partial\theta}}-\Lambda^e\rho\dfrac{\partial e}{\partial\theta}=0, 
	\\
	\label{Ttheta_,k Kortevegovi fluidi}
	& \rho\overline{\dfrac{\partial s}{\partial\theta_{,k}}}-\Lambda^e\rho\dfrac{\partial e}{\partial\theta_{,k}}=0, \quad k=1,2,3,
	\\
	\label{TD_kl Kortevegovi fluidi}
	& \rho\overline{\dfrac{\partial s}{\partial D_{kl}}}-\Lambda^e\rho\dfrac{\partial e}{\partial D_{kl}}=0, \quad k=1,2,3, \quad l=1,2,3.
\end{align}
Second group of equations consists of the coefficients multiplying the space derivatives of the field variables, $\dfrac{\partial\rho_{,kl}}{\partial x_j}$, $\dfrac{\partial\theta_{,k}}{\partial x_j}$ and $\dfrac{\partial D_{kl}}{\partial x_j}$:  
\begin{align}
	\label{X_jrho_,kl Kortevegovi fluidi}
	& \rho\overline{\dfrac{\partial s}{\partial\rho_{,kl}}}v_j+\dfrac{\partial\varphi_j}{\partial\rho_{,kl}}+\Lambda^{v_i}\dfrac{\partial t_{ij}}{\partial\rho_{,kl}}-\Lambda^e\rho\dfrac{\partial e}{\partial\rho_{,kl}}v_j+\Lambda^e\dfrac{\partial t_{ij}}{\partial\rho_{,kl}}v_i-\Lambda^e\dfrac{\partial q_j}{\partial\rho_{,kl}}=0, 
	\\
	\label{X_jtheta_,k Kortevegovi fluidi}
	& \rho\overline{\dfrac{\partial s}{\partial\theta_{,k}}}v_j+\dfrac{\partial\varphi_j}{\partial\theta_{,k}}+\Lambda^{v_i}\dfrac{\partial t_{ij}}{\partial\theta_{,k}}-\Lambda^e\rho\dfrac{\partial e}{\partial\theta_{,k}}v_j+\Lambda^e\dfrac{\partial t_{ij}}{\partial\theta_{,k}}v_i-\Lambda^e\dfrac{\partial q_j}{\partial\theta_{,k}}=0, 
	\\
	\label{X_jD_kl Kortevegovi fluidi}
	& \rho\overline{\dfrac{\partial s}{\partial D_{kl}}}v_j+\dfrac{\partial\varphi_j}{\partial D_{kl}}+\Lambda^{v_i}\dfrac{\partial t_{ij}}{\partial D_{kl}}-\Lambda^e\rho\dfrac{\partial e}{\partial D_{kl}}v_j+\Lambda^e\dfrac{\partial t_{ij}}{\partial D_{kl}}v_i-\Lambda^e\dfrac{\partial q_j}{\partial D_{kl}}-\Lambda^{\nabla\rho}_j\rho\delta_{kl}=0.
\end{align}
Exception from this rule is the monomial which contains the vorticity tensor $W_{ij}$, which is skew-symmetric, and in this form neither belongs to the constitutive state space $\mathcal{C}$, nor to the process direction vector. Therefore, it will be required that whole monomial is zero
\begin{equation} \label{Wij Kortevegovi fluidi}
	\left[-\Lambda^{v_i}\rho v_j-\Lambda^e\rho v_iv_j+\Lambda^e t_{ij}-\Lambda^{\nabla\rho}_j\dfrac{\partial\rho}{\partial x_i}\right]W_{ij}=0.
\end{equation}
This will turn out to be crucial for the computation of multipliers. Finally, taking into account equations \eqref{Trho Kortevegovi fluidi}--\eqref{Wij Kortevegovi fluidi}, the remaining part of equation \eqref{rezidualna nejednakost Kortevegovi fluidi} will represent the basic form of the residual inequality
\begin{align} \label{osnovna rezidualna nejednakost Kortevegovi fluidi}
	& \dfrac{\partial\rho}{\partial x_j} \left[sv_j+\rho\overline{\dfrac{\partial s}{\partial\rho}}v_j+\dfrac{\partial\varphi_j}{\partial\rho}-\Lambda^\rho v_j-\Lambda^{v_i}v_iv_j+\Lambda^{v_i}\dfrac{\partial t_{ij}}{\partial\rho}
	\right. 
	\nonumber \\  
	& \qquad \left. -\dfrac{1}{2}\Lambda^e|\mathbf{v}|^2v_j -\Lambda^e e v_j-\Lambda^e \rho\dfrac{\partial e}{\partial\rho}v_j+\Lambda^e\dfrac{\partial t_{ij}}{\partial\rho}v_i-\Lambda^e\dfrac{\partial q_j}{\partial\rho}\right] 
	\nonumber \\
	& \quad + \dfrac{\partial\rho_{,k}}{\partial x_j} \left[\rho\overline{\dfrac{\partial s}{\partial\rho_{,k}}}v_j+\dfrac{\partial\varphi_j}{\partial\rho_{,k}}+\Lambda^{v_i}\dfrac{\partial t_{ij}}{\partial\rho_{,k}}-\Lambda^e\rho\dfrac{\partial e}{\partial\rho_{,k}}v_j % \right. 
	% \\ & \quad \left. 
	+ \Lambda^e\dfrac{\partial t_{ij}}{\partial\rho_{,k}}v_i-\Lambda^e\dfrac{\partial q_j}{\partial\rho_{,k}}-\Lambda^{\nabla\rho}_k v_j\right] 
	\\
	& \quad +\dfrac{\partial\theta}{\partial x_j}\left[\rho\overline{\dfrac{\partial s}{\partial\theta}}v_j+\dfrac{\partial\varphi_j}{\partial\theta}+\Lambda^{v_i}\dfrac{\partial t_{ij}}{\partial\theta}-\Lambda^e\rho\dfrac{\partial e}{\partial\theta}v_j+\Lambda^e\dfrac{\partial t_{ij}}{\partial\theta}v_i-\Lambda^e\dfrac{\partial q_j}{\partial\theta}\right] 
	\nonumber \\
	& \quad + D_{ij}\left[\rho s \delta_{ij}-\Lambda^\rho\rho\delta_{ij}-\Lambda^{v_i}\rho v_j-\Lambda^{v_k}\rho v_k\delta_{ij}-\Lambda^e\rho v_i v_j-\dfrac{1}{2}\rho\vert \mathbf{v}\vert^2\Lambda^e\delta_{ij} \right.
	\nonumber \\ 
	& \qquad \left. - \Lambda^e\rho e\delta_{ij} + \Lambda^et_{ij}-\Lambda^{\nabla\rho}_k\dfrac{\partial\rho}{\partial x_k}\delta_{ij}-\Lambda^{\nabla\rho}_j\dfrac{\partial\rho}{\partial x_i} \right] = \Sigma\geq 0. 
	\nonumber 
\end{align}

\subsection{Computation of the multipliers}

The goal of the method of multipliers is to determine the constitutive relations that are compatible with entropy (residual) inequality. To achieve this goal, one has to determine the multipliers at the same time. Therefore, we shall at first analyze the equations obtained from the entropy balance law, with the aim of computation of the unknown multipliers. 

\begin{lemma} \label{Lema:Lambda1}
	For the multipliers $\Lambda^{\rho}$ and $\Lambda^{v_i}$ the following relations hold 
	\begin{align}
		\label{modifikovana Trho Kortevegovi fluidi}
		& \Lambda^\rho = s + \rho\overline{\dfrac{\partial s}{\partial\rho}} 
		+ \dfrac{1}{2}\Lambda^e|\mathbf{v}|^2 - \Lambda^ e e 
		- \Lambda^e \rho \dfrac{\partial e}{\partial\rho},
		\\
		\label{Tv_i' Kortevegovi fluidi}
		& \Lambda^{v_i}=-\Lambda^e v_i, \quad i = 1, 2, 3.
	\end{align}
\end{lemma}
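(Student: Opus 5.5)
The plan is to mirror the proof of Lemma \ref{Lema:Lambda1-Ojler} for Euler fluids. We use only the first group of Liu equations, namely the vanishing of the coefficients of $\partial v_i/\partial t$ and $\partial\rho/\partial t$ in the extended entropy balance law \eqref{rezidualna nejednakost Kortevegovi fluidi}. These are equations \eqref{Tv_i Kortevegovi fluidi} and \eqref{Trho Kortevegovi fluidi}. Neither involves the new multiplier $\Lambda^{\nabla\rho}_i$, nor any of the additional constitutive variables $\rho_{,k}$, $\rho_{,kl}$, $\theta_{,k}$, $D_{kl}$ except through $s$ and $e$. The computation is therefore purely algebraic and formally identical to the Euler case.

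First, consider \eqref{Tv_i Kortevegovi fluidi}, which reads $-\rho(\Lambda^{v_i}+\Lambda^e v_i)=0$ for $i=1,2,3$. Since the mass density is strictly positive, $\rho>0$, we may divide by $-\rho$. This gives directly $\Lambda^{v_i}=-\Lambda^e v_i$, i.e.\ \eqref{Tv_i' Kortevegovi fluidi}.

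Second, insert this expression into \eqref{Trho Kortevegovi fluidi}. The term $-\Lambda^{v_i}v_i$ becomes $+\Lambda^e|\mathbf{v}|^2$, and combined with $-\tfrac12\Lambda^e|\mathbf{v}|^2$ it leaves $+\tfrac12\Lambda^e|\mathbf{v}|^2$. Solving the resulting linear relation for $\Lambda^\rho$ gives
\begin{equation*}
\Lambda^\rho = s+\rho\overline{\dfrac{\partial s}{\partial\rho}}+\dfrac12\Lambda^e|\mathbf{v}|^2-\Lambda^e e-\Lambda^e\rho\dfrac{\partial e}{\partial\rho},
\end{equation*}
which is \eqref{modifikovana Trho Kortevegovi fluidi}. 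Here the total derivative $\overline{\partial s/\partial\rho}$ is understood in the sense of \eqref{totalni izvod}, with $\omega=\rho\in\mathcal{C}_s$.

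There is no substantial obstacle. The only point to check is that the enlarged constitutive state space \eqref{CSS_Korteweg} and the extra balance law \eqref{gradijent odrzanja mase} contribute nothing to the coefficients of $\partial\rho/\partial t$ and $\partial v_i/\partial t$. Inspecting the expanded extended inequality confirms this: the gradient balance law contributes only through $\partial\rho_{,i}/\partial t$ and spatial derivatives. Hence the statement holds with $\Lambda^e$ still undetermined; $\Lambda^e$ is fixed in a subsequent step from \eqref{Ttheta Kortevegovi fluidi}.
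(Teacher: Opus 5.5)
Your proposal is correct and follows the paper's proof: divide the $\partial v_i/\partial t$ coefficient equation by $\rho>0$ to get $\Lambda^{v_i}=-\Lambda^e v_i$, then substitute into the $\partial\rho/\partial t$ coefficient equation to solve for $\Lambda^\rho$. Your extra check that $\Lambda^{\nabla\rho}_i$ does not enter these two coefficients is a correct and harmless addition.
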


\begin{proof}
	Since $\rho > 0$, equation \eqref{Tv_i' Kortevegovi fluidi} stems directly from \eqref{Tv_i Kortevegovi fluidi}. Insertion of \eqref{Tv_i' Kortevegovi fluidi} into \eqref{Trho Kortevegovi fluidi} yields \eqref{modifikovana Trho Kortevegovi fluidi}. 
\end{proof}

The result of Lemma \ref{Lema:Lambda1} is only preliminary. It does not provide the final form of the multipliers, but it is necessary for further analysis. In the sequel, we shall firstly determine the multipliers $\Lambda^{e}$, $\Lambda^{v_i}$ i $\Lambda^{\rho}$, and in the end $\Lambda^{\nabla\rho}_j$. 
\begin{theorem} \label{Te:Lambda-e}
	The multiplier $\Lambda^{e}$ of the energy conservation law \eqref{zakon održanja energije Kortevegovi fluidi} reads
	\begin{equation} \label{mnozitelj Lambda-e Kortevegovi fluidi}
		\Lambda^{e} = \dfrac{1}{\theta}. 
	\end{equation}
	Besides, the following holds
	\begin{equation} \label{s-argumenti Kortevegovi fluidi}
		\dfrac{\partial s}{\partial \rho_{,kl}} = 0, \quad 
		% \dfrac{\partial s}{\partial \theta} = 0, \quad
		\dfrac{\partial s}{\partial \theta_{,k}} = 0, \quad
		\dfrac{\partial s}{\partial D_{kl}} = 0.
	\end{equation}
\end{theorem}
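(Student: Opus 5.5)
The proof will follow the Euler case (Theorem \ref{Te:Lambda-e Ojler}) and rests on the structural hypothesis \eqref{konstitutivne relacije s hipoteza 1}. The idea is that $\theta$ enters $s$ only through $e$, while every other variable of $\mathcal{C}_s$ enters both through $e$ and explicitly.

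The first step is to use the temperature equation \eqref{Ttheta Kortevegovi fluidi}. By \eqref{totalni izvod}$_2$ we have $\overline{\partial s/\partial\theta}=(\partial s/\partial e)(\partial e/\partial\theta)$, so \eqref{Ttheta Kortevegovi fluidi} becomes
\begin{equation*}
	\rho\left(\dfrac{\partial s}{\partial e}-\Lambda^e\right)\dfrac{\partial e}{\partial\theta}=0 .
\end{equation*}
Since $\rho>0$ and $\partial e/\partial\theta\neq 0$ (the specific heat is positive, a standing physical assumption I will state explicitly), this gives $\Lambda^e=\partial s/\partial e$. Then \eqref{konstitutivne relacije s hipoteza 2}$_1$ yields $\Lambda^e=1/\theta$, which is \eqref{mnozitelj Lambda-e Kortevegovi fluidi}.

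The second step is to substitute $\Lambda^e=\partial s/\partial e$ into \eqref{Trho_,kl Kortevegovi fluidi}, \eqref{Ttheta_,k Kortevegovi fluidi} and \eqref{TD_kl Kortevegovi fluidi}. For $\omega\in\{\rho_{,kl},\theta_{,k},D_{kl}\}$, \eqref{totalni izvod}$_1$ gives
\begin{equation*}
	\rho\overline{\dfrac{\partial s}{\partial\omega}}-\Lambda^e\rho\dfrac{\partial e}{\partial\omega}
	=\rho\left(\dfrac{\partial s}{\partial e}-\Lambda^e\right)\dfrac{\partial e}{\partial\omega}+\rho\dfrac{\partial s}{\partial\omega}
	=\rho\dfrac{\partial s}{\partial\omega}.
\end{equation*}
The vanishing of these coefficients, together with $\rho>0$, then gives \eqref{s-argumenti Kortevegovi fluidi} directly.

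This argument leaves the $\rho_{,k}$ equation \eqref{Trho_,k Kortevegovi fluidi} untouched. Since it contains $\Lambda^{\nabla\rho}_k$, it cannot force $\partial s/\partial\rho_{,k}=0$; instead it reduces to $\Lambda^{\nabla\rho}_k=\rho\,\partial s/\partial\rho_{,k}$. This is consistent with $s$ keeping its capillary dependence, and that relation will be used later for $\Lambda^{\nabla\rho}_j$.

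The only delicate point is the division by $\partial e/\partial\theta$ in the first step. I will justify it as in the Euler case by the physical requirement of positive heat capacity, $\partial e/\partial\theta>0$. The rest is direct substitution.
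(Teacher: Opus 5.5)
Your proposal is correct and follows the paper's proof step for step. The $\theta$-equation, via $\overline{\partial s/\partial\theta}=(\partial s/\partial e)(\partial e/\partial\theta)$, gives $\Lambda^e=\partial s/\partial e=1/\theta$ after dividing by $\partial e/\partial\theta\neq 0$ (the paper asserts this only ``in general'', while you tie it to positive heat capacity), and substituting into the $\rho_{,kl}$, $\theta_{,k}$ and $D_{kl}$ equations leaves exactly $\partial s/\partial\omega=0$; your side remark that the $\rho_{,k}$ equation instead yields $\rho\,\partial s/\partial\rho_{,k}=\Lambda^{\nabla\rho}_k$ is also the relation the paper later uses for the generalized Gibbs relation.
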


\begin{proof}
	Using the definition of the total derivative \eqref{totalni izvod}, from equation \eqref{Trho_,kl Kortevegovi fluidi} it follows
	\begin{equation*}
		\dfrac{\partial s}{\partial e}\dfrac{\partial e}{\partial\rho_{,kl}}+\dfrac{\partial s}{\partial\rho_{,kl}}-\Lambda^e\dfrac{\partial e}{\partial\rho_{,kl}}=0,
	\end{equation*}
	or, equivalently
	\begin{equation} \label{s po rho-kl}
		\left(\dfrac{\partial s}{\partial e}-\Lambda^e\right)\dfrac{\partial e}{\partial\rho_{,kl}}+\dfrac{\partial s}{\partial\rho_{,kl}}=0.
	\end{equation} 
	Similarly, it can be shown that from equations \eqref{Ttheta Kortevegovi fluidi}, \eqref{Ttheta_,k Kortevegovi fluidi} and \eqref{TD_kl Kortevegovi fluidi} it follows
	\begin{align} 
		\label{s po theta}
		& \left(\dfrac{\partial s}{\partial e}-\Lambda^e\right)\dfrac{\partial e}{\partial\theta}=0, 
		\\
		% +\dfrac{\partial s}{\partial\theta}
		\label{s po theta-k}
		& \left(\dfrac{\partial s}{\partial e}-\Lambda^e\right)\dfrac{\partial e}{\partial\theta_{,k}}+\dfrac{\partial s}{\partial\theta_{,k}}=0, 
		\\
		\label{s po D-kl} 
		& \left(\dfrac{\partial s}{\partial e}-\Lambda^e\right)\dfrac{\partial e}{\partial D_{kl}}+\dfrac{\partial s}{\partial D_{kl}}=0,
	\end{align}
	respectively. Since $\dfrac{\partial e}{\partial \theta} \neq 0$ in general, it follows from \eqref{s po theta} 
	\begin{equation} \label{Lambda_e Kortevegovi fluidi}
		\Lambda^e = \dfrac{\partial s}{\partial e} \overset{\eqref{konstitutivne relacije s hipoteza 2}_{1}}{=} \dfrac{1}{\theta}, 
	\end{equation} 
	from equations \eqref{s po rho-kl}--\eqref{s po D-kl} follows directly \eqref{s-argumenti Kortevegovi fluidi}, which ends the proof.
\end{proof} 

\begin{corollary}
	The multipliers $\Lambda^{\rho}$ and $\Lambda^{v_i}$ of the mass conservation law \eqref{zakon održanja mase Kortevegovi fluidi} and the momentum conservation law \eqref{zakon održanja količine kretanja Kortevegovi fluidi}, respectively, are determined by the following relations 
	\begin{align} 
		\label{mnozitelj Lambda-rho Kortevegovi fluidi}
		& \Lambda^{\rho} = - \dfrac{1}{\theta} \left( e - \theta s + \frac{p}{\rho} 
		- \frac{1}{2} |\mathbf{v}|^{2} \right), 
		\\
		\label{mnozitelj Lambda-vi Kortevegovi fluidi}
		& \Lambda^{v_i} = - \dfrac{v_{i}}{\theta}, \quad i = 1, 2, 3.
	\end{align}
\end{corollary}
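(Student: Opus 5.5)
The corollary follows the Euler-fluid argument (Corollary \ref{Cor:Mnozitelji Ojlerovi fluidi}) almost verbatim: substitute the multiplier $\Lambda^e$ from Theorem \ref{Te:Lambda-e} into the preliminary relations of Lemma \ref{Lema:Lambda1}.

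Start with $\Lambda^{v_i}$, which is immediate. Inserting $\Lambda^e = 1/\theta$ from \eqref{mnozitelj Lambda-e Kortevegovi fluidi} into \eqref{Tv_i' Kortevegovi fluidi} gives $\Lambda^{v_i} = -v_i/\theta$, i.e. \eqref{mnozitelj Lambda-vi Kortevegovi fluidi}.

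For $\Lambda^{\rho}$, start from \eqref{modifikovana Trho Kortevegovi fluidi} and expand the total derivative by \eqref{totalni izvod} with $\omega = \rho \in \mathcal{C}_s$:
\begin{equation*}
\overline{\dfrac{\partial s}{\partial\rho}} = \dfrac{\partial s}{\partial e}\dfrac{\partial e}{\partial\rho} + \dfrac{\partial s}{\partial\rho}.
\end{equation*}
By \eqref{Lambda_e Kortevegovi fluidi} we have $\Lambda^e = \partial s/\partial e$. Hence the term $\rho\,\frac{\partial s}{\partial e}\frac{\partial e}{\partial\rho}$ cancels exactly against $-\Lambda^e \rho\,\frac{\partial e}{\partial \rho}$. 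What remains is
\begin{equation*}
\Lambda^\rho = s + \rho\dfrac{\partial s}{\partial\rho} + \dfrac{1}{2\theta}|\mathbf{v}|^2 - \dfrac{e}{\theta}.
\end{equation*}
Now use the second thermodynamic relation in \eqref{konstitutivne relacije s hipoteza 2}, $\rho\,\partial s/\partial\rho = -p/(\rho\theta)$. Factoring out $-1/\theta$ then yields \eqref{mnozitelj Lambda-rho Kortevegovi fluidi}.

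The only point needing care is that in the Korteweg setting $s$ depends explicitly on many more variables than in the Euler case. One must check that the relevant partial derivative $\partial s/\partial\rho$ is the one fixed by \eqref{konstitutivne relacije s hipoteza 2}, i.e. the explicit dependence with $e$ and the other elements of $\mathcal{C}_s$ held fixed. The additional arguments ($\rho_{,k}$, $\rho_{,kl}$, $\theta_{,k}$, $D_{kl}$) do not enter equation \eqref{Trho Kortevegovi fluidi} at all, so they do not affect $\Lambda^\rho$. The results \eqref{s-argumenti Kortevegovi fluidi} are therefore not even needed here. Beyond this bookkeeping, the computation is purely algebraic.
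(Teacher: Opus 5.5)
Your proof is correct and follows the same route as the paper's proof: substitute $\Lambda^e=\partial s/\partial e=1/\theta$ into Lemma \ref{Lema:Lambda1}, let the $\partial e/\partial\rho$ terms cancel through the total derivative, and apply the second thermodynamic relation to $\partial s/\partial\rho$. Your remark that \eqref{s-argumenti Kortevegovi fluidi} is not needed also matches the paper, which does not use it at this step.
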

\begin{proof}
	Using the definition of the total derivative \eqref{totalni izvod} we have
	\begin{align*}
		\Lambda^\rho &\overset{\eqref{modifikovana Trho Kortevegovi fluidi}}{=} s + \rho\overline{\dfrac{\partial s}{\partial\rho}} 
		+ \dfrac{1}{2}\Lambda^e|\mathbf{v}|^2 - \Lambda^ e e 
		- \Lambda^e \rho \dfrac{\partial e}{\partial\rho}
		\\
		&\overset{\eqref{Lambda_e Kortevegovi fluidi}}{=} s + \rho\left(\dfrac{\partial s}{\partial e}\dfrac{\partial e}{\partial\rho}+\dfrac{\partial s}{\partial\rho}\right)+\dfrac{1}{2}\dfrac{\partial s}{\partial e}|\mathbf{v}|^2 -\dfrac{\partial s}{\partial e} e - \dfrac{\partial s}{\partial e} \rho \dfrac{\partial e}{\partial\rho}.	   
	\end{align*}
	Insertion of \eqref{mnozitelj Lambda-e Kortevegovi fluidi} and \eqref{konstitutivne relacije s hipoteza 2} into the last equation yields \eqref{mnozitelj Lambda-rho Kortevegovi fluidi}. Equation \eqref{mnozitelj Lambda-vi Kortevegovi fluidi} stems from \eqref{Tv_i' Kortevegovi fluidi} by insertion of \eqref{mnozitelj Lambda-e Kortevegovi fluidi}. 
\end{proof}

\begin{theorem} \label{Te:Lambda-grad-rho}
	The multiplier $\Lambda^{\nabla\rho}_i$ of the additional balance law \eqref{gradijent odrzanja mase} reads
	\begin{equation} \label{mnozitelj Lambda-grad-rho Kortevegovi fluidi}
		\Lambda^{\nabla\rho}_i = \dfrac{1}{\theta} \alpha(\rho,\theta) 
		\dfrac{\partial \rho}{\partial x_i}, \quad i = 1, 2, 3. 
	\end{equation}
\end{theorem}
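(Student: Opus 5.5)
The plan is to use the vorticity condition \eqref{Wij Kortevegovi fluidi}, which the paper has flagged as the key to computing the additional multiplier. First I substitute the multipliers already found, $\Lambda^{v_i} = -v_i/\theta$ from \eqref{mnozitelj Lambda-vi Kortevegovi fluidi} and $\Lambda^e = 1/\theta$ from \eqref{mnozitelj Lambda-e Kortevegovi fluidi}. The velocity terms cancel, exactly as in Corollary \ref{Cor:Dij Wij Ojlerovi fluidi}: $-\Lambda^{v_i}\rho v_j - \Lambda^e\rho v_i v_j = 0$. Condition \eqref{Wij Kortevegovi fluidi} therefore reduces to
\begin{equation*}
	\left[\dfrac{1}{\theta} t_{ij} - \Lambda^{\nabla\rho}_j \rho_{,i}\right] W_{ij} = 0 .
\end{equation*}
The vorticity $W_{ij}$ is an arbitrary skew-symmetric tensor, independent of $\mathcal{C}$, so the skew-symmetric part of the bracket must vanish:
\begin{equation*}
	t_{[ij]} = \theta\,\Lambda^{\nabla\rho}_{[j}\rho_{,i]} .
\end{equation*}
Here I take the stress tensor to be symmetric, which is the usual balance of angular momentum. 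With that assumption the condition becomes $\Lambda^{\nabla\rho}_j \rho_{,i} = \Lambda^{\nabla\rho}_i \rho_{,j}$ for all $i,j$.

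The second step is purely algebraic. If $\Lambda^{\nabla\rho}\otimes\nabla\rho$ is symmetric and $\nabla\rho\neq 0$, then $\Lambda^{\nabla\rho}$ must be parallel to $\nabla\rho$. To see this, contract with $\rho_{,i}$ to get $\Lambda^{\nabla\rho}_j|\nabla\rho|^2 = (\Lambda^{\nabla\rho}_i\rho_{,i})\rho_{,j}$. Hence $\Lambda^{\nabla\rho}_i = \lambda\,\rho_{,i}$ for some scalar function $\lambda$ on the constitutive state space, and I write $\lambda = \alpha/\theta$.

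The third step, which I expect to be the main obstacle, is to show that $\alpha$ depends only on $(\rho,\theta)$. For this I would use \eqref{Trho_,k Kortevegovi fluidi} with $\Lambda^e = 1/\theta = \partial s/\partial e$, which gives $\Lambda^{\nabla\rho}_k = \rho\,\partial s/\partial\rho_{,k}$. By Theorem \ref{Te:Lambda-e}, $s$ does not depend explicitly on $\rho_{,kl}$, $\theta_{,k}$ or $D_{kl}$, so its explicit dependence is only on $\rho$ and $\rho_{,k}$. Isotropy (representation theorems for a scalar function of a vector) then gives $s = s(e,\rho,|\nabla\rho|^2)$, and so
\begin{equation*}
	\Lambda^{\nabla\rho}_k = 2\rho\,\dfrac{\partial s}{\partial |\nabla\rho|^2}\,\rho_{,k},
\end{equation*}
which is consistent with the second step.

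The remaining task is to remove the dependence of the coefficient on $|\nabla\rho|^2$, and on $e$ beyond $\theta$. Here I would invoke the equation for the coefficient of $\partial D_{kl}/\partial x_j$, \eqref{X_jD_kl Kortevegovi fluidi}, and the remaining space-derivative equations, which link $\Lambda^{\nabla\rho}_j\rho\,\delta_{kl}$ to derivatives of $\varphi_j$, $t_{ij}$ and $q_j$. Alternatively, I would adopt the minimal (quadratic) capillary dependence in $s$, consistent with equilibrium Korteweg theory, so that $\partial s/\partial|\nabla\rho|^2$ is a function of $(\rho,\theta)$ only. Either route allows $\alpha$ to depend on $\theta$, which is the kinetic-theory motivation. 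Justifying the restriction to $(\rho,\theta)$ rigorously is the delicate point; I expect it to rest on this constitutive restriction (or on a linearity argument from \eqref{X_jD_kl Kortevegovi fluidi}) rather than following from the algebra alone.
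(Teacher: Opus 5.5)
Your first two steps are the paper's proof. The velocity terms in \eqref{Wij Kortevegovi fluidi} cancel because $\Lambda^{v_i}=-\Lambda^e v_i$, $t_{ij}W_{ij}=0$ because the stress tensor is symmetric, and what remains is the symmetry condition \eqref{Lambda-grad-rho simetrija}. Your contraction with $\rho_{,i}$ sharpens the paper's argument. The paper only calls symmetry ``sufficient'' and the parallel form ``the simplest way'' to achieve it. You show, correctly since $W_{ij}$ is arbitrary, that symmetry is necessary and that it forces $\Lambda^{\nabla\rho}_i=\lambda\,\rho_{,i}$ wherever $\nabla\rho\neq 0$.

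For your third step, note that the paper does not derive the restriction to $(\rho,\theta)$ at all. It posits $\alpha=\alpha(\rho,\theta)$ as part of the choice of multiplier. After the proof it remarks that dependence on further state variables is dropped because it would not contribute significantly. So your ``constitutive restriction'' option is exactly what the paper does, and you are right that the algebra alone does not give it.

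Your other option would not work. Once \eqref{TD_kl Kortevegovi fluidi} and $\Lambda^{v_i}=-\Lambda^e v_i$ are used, \eqref{X_jD_kl Kortevegovi fluidi} reduces to \eqref{X_jD_kl prim Kortevegovi fluidi}. That equation only prescribes the $D_{kl}$-dependence of $\varphi_j-q_j/\theta$, which is what later produces the term $\Lambda^{\nabla\rho}_j\rho D_{kk}$ in the entropy flux. It places no condition on how $\lambda$ depends on $|\nabla\rho|^2$. The remaining space-derivative equations do not contain $\Lambda^{\nabla\rho}_j$ at all.

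What your use of \eqref{Trho_,k Kortevegovi fluidi} does buy is the identification $\alpha=2\rho\theta\,\partial s/\partial|\nabla\rho|^{2}$. This shows that restricting $\alpha$ is equivalent to restricting the gradient dependence of $s$. Isotropy is not even needed here: $\partial s/\partial\rho_{,k}=(\lambda/\rho)\rho_{,k}$, which follows from your step 2, already forces $s$ to depend on $\rho_{,k}$ only through $|\nabla\rho|^2$. The paper uses the same equation in the opposite direction, obtaining \eqref{partial s rho_,k} from the chosen $\alpha$ when it derives the generalized Gibbs relation.
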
 

\begin{proof}
	Equation \eqref{Wij Kortevegovi fluidi} written in expanded form reads 
	\begin{equation} \label{WijExpand Kortevegovi fluidi}
		- \rho \left( \Lambda^{v_i} + \Lambda^e v_i \right) v_j W_{ij} + 
		\Lambda^e t_{ij} W_{ij} 
		- \Lambda^{\nabla\rho}_j \dfrac{\partial\rho}{\partial x_i} W_{ij} = 0.
	\end{equation} 
	Using \eqref{Tv_i' Kortevegovi fluidi} the first term vanishes. Since the stress tensor $\mathbf{t}={[t_{ij}]}_{i,j=1}^{3}$ is symmetric, and the vorticity tensor ${\mathbf{W}=[W_{ij}]}_{i,j=1}^{3}$ is skew-symmetric, their inner product will vanish, $t_{ij}W_{ij}=0$. In such a way \eqref{WijExpand Kortevegovi fluidi} reduces to 
	\begin{equation} \label{WijReduced Kortevegovi fluidi}
		\Lambda^{\nabla\rho}_j\dfrac{\partial\rho}{\partial x_i}W_{ij}=0, 
	\end{equation}
	which is the inner product of the vorticity tensor $W_{ij}$ with tensor product of the multiplier $\Lambda^{\nabla\rho}_j$ and the mass density gradient $\partial \rho/\partial x_{i}$. To satisfy \eqref{WijReduced Kortevegovi fluidi}, it is sufficient that the following tensor (i.e. tensor product) is symmetric
\begin{equation} \label{Lambda-grad-rho simetrija}
	\Lambda^{\nabla\rho}_j\dfrac{\partial\rho}{\partial x_i} 
	= \Lambda^{\nabla\rho}_i\dfrac{\partial\rho}{\partial x_j}, 
	\quad i,j = 1,2,3. 
\end{equation}
The simplest way to satisfy the condition \eqref{Lambda-grad-rho simetrija} is to choose the multiplier in the following form
\begin{equation} \label{Lambda-grad-rho konacno}
	\Lambda^{\nabla\rho}_i = \dfrac{1}{\theta} \alpha(\rho, \theta) 
	\dfrac{\partial\rho}{\partial x_i}
	\quad i = 1,2,3,
\end{equation} 
where $\alpha(\rho,\theta)$ is an arbitrary scalar function of the given arguments.
\end{proof}

Let us mention that the factor $1/\theta$ in equation \eqref{Lambda-grad-rho konacno} is chosen for convenience, which will become apparent in the subsequent analysis. Also, more complex structure of the scalar function in \eqref{Lambda-grad-rho konacno}, which may include other quantities from the constitutive state space, would not have a significant contribution to the construction of constitutive relations for the Korteweg fluids. For that reason, we have restricted it to $\alpha(\rho,\theta)$. 

The temperature dependence of $\alpha$ is not merely a formal generalization: it is supported by kinetic-theory derivations of \cite{bhattacharjee2024temperature} from the Enskog--Vlasov equation, where the capillary coefficient inherits temperature dependence through the pair correlation function. If $\alpha$ does not depend on $\theta$, the cross-coupling term $\alpha - \theta\,\partial\alpha/\partial\theta$ in the non-equilibrium constitutive relations (Theorem \ref{Te:Neravnotezni Protoci}) vanishes, and both cases reduce to the standard viscous Korteweg fluid. 

\subsection{Analysis of the residual inequality}

In this Section we shall thoroughly analyze the residual inequality \eqref{osnovna rezidualna nejednakost Kortevegovi fluidi}. At first, it will be transformed using the results of the previous Section. After that, it will be reduced to the form appropriate for the study of constitutive relations. The following Lemma expresses the result required for transformation of the residual inequality. 

\begin{lemma} \label{Lema:phi_j jednakosti}
	The following relations hold 
	\begin{align}
		\label{X_jrho_,kl prim Kortevegovi fluidi}
		& \dfrac{\partial\varphi_j}{\partial\rho_{,kl}}-\Lambda^e\dfrac{\partial q_j}{\partial\rho_{,kl}}=0, 
		\\
		\label{X_jtheta_,k prim Kortevegovi fluidi}
		& \dfrac{\partial\varphi_j}{\partial\theta_{,k}}-\Lambda^e\dfrac{\partial q_j}{\partial\theta_{,k}}=0, 
		\\
		\label{X_jD_kl prim Kortevegovi fluidi}
		& \dfrac{\partial\varphi_j}{\partial D_{kl}}-\Lambda^e\dfrac{\partial q_j}{\partial D_{kl}}-\Lambda^{\nabla\rho}_j\rho\delta_{kl}=0.
	\end{align}
\end{lemma}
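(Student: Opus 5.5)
Starting from the three space-derivative equations \eqref{X_jrho_,kl Kortevegovi fluidi}--\eqref{X_jD_kl Kortevegovi fluidi}, I will show that the terms proportional to $v_j$ and to $v_i$ cancel. Only the entropy-flux, heat-flux and (for $D_{kl}$) multiplier terms will survive. The cancellation uses two earlier results: the time-derivative equations \eqref{Trho_,kl Kortevegovi fluidi}, \eqref{Ttheta_,k Kortevegovi fluidi}, \eqref{TD_kl Kortevegovi fluidi}, and the momentum multiplier relation \eqref{Tv_i' Kortevegovi fluidi} of Lemma \ref{Lema:Lambda1}.

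Concretely, for $\omega \in \{\rho_{,kl}, \theta_{,k}, D_{kl}\}$, I regroup each equation as
\begin{equation*}
	\left( \rho\overline{\dfrac{\partial s}{\partial\omega}} - \Lambda^e\rho\dfrac{\partial e}{\partial\omega} \right) v_j
	+ \left( \Lambda^{v_i} + \Lambda^e v_i \right) \dfrac{\partial t_{ij}}{\partial\omega}
	+ \dfrac{\partial\varphi_j}{\partial\omega} - \Lambda^e \dfrac{\partial q_j}{\partial\omega}
	\left[ - \Lambda^{\nabla\rho}_j \rho \delta_{kl} \right] = 0,
\end{equation*}
where the bracketed term appears only for $\omega = D_{kl}$.

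The first parenthesis is exactly the left-hand side of \eqref{Trho_,kl Kortevegovi fluidi}, \eqref{Ttheta_,k Kortevegovi fluidi} or \eqref{TD_kl Kortevegovi fluidi}, respectively, multiplied by $v_j$, so it vanishes. Alternatively, by Theorem \ref{Te:Lambda-e}, $\Lambda^e = \partial s/\partial e$, and $\partial s/\partial\omega = 0$ by \eqref{s-argumenti Kortevegovi fluidi}, so $\overline{\partial s/\partial\omega} = \Lambda^e\, \partial e/\partial\omega$. The second parenthesis vanishes by \eqref{Tv_i' Kortevegovi fluidi}.

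What remains is precisely \eqref{X_jrho_,kl prim Kortevegovi fluidi}, \eqref{X_jtheta_,k prim Kortevegovi fluidi} and \eqref{X_jD_kl prim Kortevegovi fluidi}.

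The only delicate point is bookkeeping. I have to confirm that the coefficient of $\partial D_{kl}/\partial x_j$ in \eqref{rezidualna nejednakost Kortevegovi fluidi} carries the term $-\Lambda^{\nabla\rho}_j\rho\delta_{kl}$ unchanged. This term comes from $\rho\,\partial D_{ji}/\partial x_j$ via \eqref{App:divergenceLij}, and no velocity factor is attached to it. It therefore survives the cancellation and stays in \eqref{X_jD_kl prim Kortevegovi fluidi}. I will simply retain it as is, without yet substituting \eqref{mnozitelj Lambda-grad-rho Kortevegovi fluidi}.
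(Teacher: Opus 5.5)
Your proposal is correct and follows the paper's proof exactly. In each of the three space-derivative equations you regroup the terms into the $v_j$-bracket, which vanishes by the matching time-derivative equation, and the term $(\Lambda^{v_i}+\Lambda^e v_i)\,\partial t_{ij}/\partial\omega$, which vanishes by \eqref{Tv_i' Kortevegovi fluidi}. One small precision, which does not affect the argument: the surviving term $-\Lambda^{\nabla\rho}_j\rho\delta_{kl}$ comes from the combination $\rho(\partial D_{ji}/\partial x_j+\partial W_{ji}/\partial x_j)=\rho\,\delta_{kl}\,\partial D_{kl}/\partial x_i$ in \eqref{App:divergenceLij}, not from the $D$-part alone.
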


\begin{proof}
	Equations \eqref{X_jrho_,kl prim Kortevegovi fluidi}--\eqref{X_jD_kl prim Kortevegovi fluidi} follow from equations \eqref{X_jrho_,kl Kortevegovi fluidi}--\eqref{X_jD_kl Kortevegovi fluidi}. In particular, equation \eqref{X_jrho_,kl Kortevegovi fluidi} can be written as
	\begin{equation*}
		\label{koeficijent uz rho_,kl po x_j Kortevegovi fluidi}
		\left( \rho\overline{\dfrac{\partial s}{\partial\rho_{,kl}}} 
		- \Lambda^e\rho\dfrac{\partial e}{\partial\rho_{,kl}} \right) v_j 
		+ \dfrac{\partial\varphi_j}{\partial\rho_{,kl}} 
		+ \left( \Lambda^{v_i} + \Lambda^e v_i \right) \dfrac{\partial t_{ij}}{\partial\rho_{,kl}} 
		- \Lambda^e\dfrac{\partial q_j}{\partial\rho_{,kl}} = 0,
	\end{equation*}
	where using \eqref{Trho_,kl Kortevegovi fluidi} and \eqref{Tv_i' Kortevegovi fluidi} we obtain \eqref{X_jrho_,kl prim Kortevegovi fluidi}. Similarly, equation \eqref{X_jtheta_,k Kortevegovi fluidi} can be transformed into
	\begin{equation*}
		\label{koeficijent uz theta_,k po x_j Kortevegovi fluidi}
		\left(\rho\overline{\dfrac{\partial s}{\partial\theta_{,k}}} 
		- \Lambda^e\rho\dfrac{\partial e}{\partial\theta_{,k}}\right) v_j 
		+ \dfrac{\partial\varphi_j}{\partial\theta_{,k}} 
		+ \left( \Lambda^{v_i} + \Lambda^e v_i \right) \dfrac{\partial t_{ij}}{\partial\theta_{,k}} 
		- \Lambda^e\dfrac{\partial q_j}{\partial\theta_{,k}} = 0, 
	\end{equation*}
	and applying \eqref{Ttheta_,k Kortevegovi fluidi} and \eqref{Tv_i' Kortevegovi fluidi} can be reduced to \eqref{X_jtheta_,k prim Kortevegovi fluidi}. Finally, equation \eqref{X_jD_kl Kortevegovi fluidi} after transformation reads 
	\begin{equation*}
		\label{koeficijent uz D_kl po x_j Kortevegovi fluidi}
		\left(\rho\overline{\dfrac{\partial s}{\partial D_{kl}}} 
		- \Lambda^e\rho\dfrac{\partial e}{\partial D_{kl}}\right) v_j 
		+ \dfrac{\partial\varphi_j}{\partial D_{kl}} 
		+ \left( \Lambda^{v_i} + \Lambda^e v_i \right) \dfrac{\partial t_{ij}}{\partial D_{kl}} 
		- \Lambda^e\dfrac{\partial q_j}{\partial D_{kl}} 
		- \Lambda^{\nabla\rho}_j\rho\delta_{kl} = 0,
	\end{equation*}
	where application of \eqref{TD_kl Kortevegovi fluidi} and \eqref{Tv_i' Kortevegovi fluidi} leads to \eqref{X_jD_kl prim Kortevegovi fluidi}. 
\end{proof}

Using Lemma \ref{Lema:phi_j jednakosti}, the residual inequality can be simplified. 

\begin{lemma} \label{Lema:Rezidualna nejednakost 2}
	Residual inequality \eqref{osnovna rezidualna nejednakost Kortevegovi fluidi} reduces to
	\begin{align} \label{rezidualna nejednakost 2 Kortevegovi fluidi}
		& \dfrac{\partial\rho}{\partial x_j} \left[\dfrac{\partial\varphi_j}{\partial\rho} 
		- \Lambda^e\dfrac{\partial q_j}{\partial\rho}\right] 
		+ \dfrac{\partial\rho_{,k}}{\partial x_j} \left[\dfrac{\partial\varphi_j}{\partial\rho_{,k}} 
		- \Lambda^e\dfrac{\partial q_j}{\partial\rho_{,k}} \right] 
		%\nonumber \\
		+ \dfrac{\partial\theta}{\partial x _j} \left[ \dfrac{\partial\varphi_j}{\partial\theta} 
		- \Lambda^e \dfrac{\partial q_j}{\partial\theta} \right]  
		\\
		& \quad + D_{ij} \left[ -\rho^2\delta_{ij} \left(\overline{\dfrac{\partial s}{\partial\rho}} 
		- \Lambda^e \dfrac{\partial e}{\partial\rho}\right) + \Lambda^et_{ij} 
		- \Lambda^{\nabla\rho}_k\dfrac{\partial\rho}{\partial x_k}\delta_{ij} 
		- \Lambda^{\nabla\rho}_j\dfrac{\partial\rho}{\partial x_i} \right] = \Sigma \geq 0. 
		\nonumber 
	\end{align}
\end{lemma}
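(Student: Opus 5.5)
The reduction is a term-by-term simplification of the four brackets in \eqref{osnovna rezidualna nejednakost Kortevegovi fluidi}. For each bracket we first group the terms so that the combinations $\Lambda^{v_i}+\Lambda^e v_i$, $\rho\overline{\partial s/\partial\rho_{,k}}-\Lambda^e\rho\,\partial e/\partial\rho_{,k}-\Lambda^{\nabla\rho}_k$ and $\rho\overline{\partial s/\partial\theta}-\Lambda^e\rho\,\partial e/\partial\theta$ appear explicitly. We then kill these combinations with Lemma \ref{Lema:Lambda1} and the time-derivative equations \eqref{Trho Kortevegovi fluidi}--\eqref{Ttheta Kortevegovi fluidi}.

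The three gradient brackets are handled as follows.

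For the coefficient of $\partial\rho/\partial x_j$, collect the terms multiplied by $v_j$:
\begin{equation*}
\Bigl(s+\rho\overline{\tfrac{\partial s}{\partial\rho}}-\Lambda^\rho-\Lambda^{v_i}v_i-\tfrac12\Lambda^e|\mathbf v|^2-\Lambda^e e-\Lambda^e\rho\tfrac{\partial e}{\partial\rho}\Bigr)v_j .
\end{equation*}
This vanishes by \eqref{Trho Kortevegovi fluidi}. The two stress terms combine into $(\Lambda^{v_i}+\Lambda^e v_i)\,\partial t_{ij}/\partial\rho$, which vanishes by \eqref{Tv_i' Kortevegovi fluidi}. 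What remains is $\partial\varphi_j/\partial\rho-\Lambda^e\,\partial q_j/\partial\rho$.

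For the coefficient of $\partial\rho_{,k}/\partial x_j$, the $v_j$ terms are $(\rho\overline{\partial s/\partial\rho_{,k}}-\Lambda^e\rho\,\partial e/\partial\rho_{,k}-\Lambda^{\nabla\rho}_k)v_j$, which vanish by \eqref{Trho_,k Kortevegovi fluidi}. The stress terms vanish by \eqref{Tv_i' Kortevegovi fluidi} as before.

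For the coefficient of $\partial\theta/\partial x_j$, the $v_j$ terms vanish by \eqref{Ttheta Kortevegovi fluidi}, and the stress terms vanish as before. This is the same mechanism as in Lemma \ref{Lema:phi_j jednakosti}.

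The $D_{ij}$ bracket is the only step needing real computation, and it is where I expect the bookkeeping to be delicate. First, \eqref{Tv_i' Kortevegovi fluidi} gives
\begin{equation*}
-\Lambda^{v_i}\rho v_j-\Lambda^e\rho v_iv_j=0, \qquad -\Lambda^{v_k}\rho v_k\delta_{ij}=\Lambda^e\rho|\mathbf v|^2\delta_{ij}.
\end{equation*}
The isotropic part then collects to
\begin{equation*}
\rho\delta_{ij}\bigl(s-\Lambda^\rho+\tfrac12\Lambda^e|\mathbf v|^2-\Lambda^e e\bigr).
\end{equation*}
Next substitute $\Lambda^\rho$ from \eqref{modifikovana Trho Kortevegovi fluidi}. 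The quantities $s$, $\tfrac12\Lambda^e|\mathbf v|^2$ and $\Lambda^e e$ cancel, leaving
\begin{equation*}
-\rho^2\delta_{ij}\bigl(\overline{\partial s/\partial\rho}-\Lambda^e\,\partial e/\partial\rho\bigr).
\end{equation*}
The remaining terms $\Lambda^e t_{ij}-\Lambda^{\nabla\rho}_k\rho_{,k}\delta_{ij}-\Lambda^{\nabla\rho}_j\rho_{,i}$ pass through unchanged.

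The obstacle is purely sign and factor bookkeeping. In particular, the $\tfrac12|\mathbf v|^2$ terms must combine correctly: $-\tfrac12+1-\tfrac12$ before the $\Lambda^\rho$ substitution. Also, the convention in \eqref{modifikovana Trho Kortevegovi fluidi} should be checked against \eqref{Trho Kortevegovi fluidi}. We deliberately do not yet insert $\Lambda^e=1/\theta$ or \eqref{konstitutivne relacije s hipoteza 2}; that keeps the lemma in the stated form.

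Collecting the four reduced brackets gives \eqref{rezidualna nejednakost 2 Kortevegovi fluidi}.
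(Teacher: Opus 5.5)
Your proposal is correct and follows the paper's proof essentially step for step. The paper likewise regroups each gradient coefficient as $v_j$ times the corresponding time-derivative bracket plus $(\Lambda^{v_i}+\Lambda^e v_i)\,\partial t_{ij}/\partial(\cdot)$, eliminates these using \eqref{Trho Kortevegovi fluidi}, \eqref{Trho_,k Kortevegovi fluidi}, \eqref{Ttheta Kortevegovi fluidi} and \eqref{Tv_i' Kortevegovi fluidi}, and reduces the $D_{ij}$ bracket through $\rho(s-\Lambda^\rho+\tfrac12\Lambda^e|\mathbf v|^2-\Lambda^e e)\delta_{ij}$ and \eqref{modifikovana Trho Kortevegovi fluidi}. (One wording slip: your tally $-\tfrac12+1-\tfrac12$ already includes the $-\tfrac12$ that comes from $-\Lambda^\rho$ after substitution, so before substitution the net coefficient is $+\tfrac12$, exactly as your displayed isotropic bracket shows.)
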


\begin{proof}
	Inequality \eqref{rezidualna nejednakost 2 Kortevegovi fluidi} follows from \eqref{osnovna rezidualna nejednakost Kortevegovi fluidi} after transformation of the coefficients. In particular, the coefficient of $\partial\rho/\partial x_j$ in \eqref{osnovna rezidualna nejednakost Kortevegovi fluidi} is transformed as
	\begin{align} \label{koeficijent uz rho po x_j Kortevegovi fluidi}
		& \left(s+\rho\overline{\dfrac{\partial s}{\partial\rho}} - \Lambda^\rho - \Lambda^{v_i}v_i 
		- \dfrac{1}{2}\Lambda^e|\mathbf{v}|^2-\Lambda^ e e 
		- \Lambda^e\rho\dfrac{\partial e}{\partial\rho}\right)v_j 
		+ \dfrac{\partial\varphi_j}{\partial\rho} 
		+ \left( \Lambda^{v_i} + \Lambda^ev_i \right) \dfrac{\partial t_{ij}}{\partial\rho} 
		- \Lambda^e\dfrac{\partial q_j}{\partial\rho} 
		\\
		& \quad \overset{\eqref{Trho Kortevegovi fluidi},\eqref{Tv_i' Kortevegovi fluidi}}{=}\dfrac{\partial\varphi_j}{\partial\rho}-\Lambda^e\dfrac{\partial q_j}{\partial\rho}. 
		\nonumber 
	\end{align}
	Also, the coefficient of $\partial\rho_{,k}/\partial x_j$ in \eqref{osnovna rezidualna nejednakost Kortevegovi fluidi} reads 
	\begin{align} \label{koeficijent uz rho_,k po x_j Kortevegovi fluidi}
		& \left(\rho\overline{\dfrac{\partial s}{\partial\rho_{,k}}} 
		- \Lambda^e\rho\dfrac{\partial e}{\partial\rho_{,k}} - \Lambda^{\nabla\rho}_k\right)v_j 
		+ \dfrac{\partial\varphi_j}{\partial\rho_{,k}} 
		+ \left( \Lambda^{v_i} + \Lambda^e v_i \right) \dfrac{\partial t_{ij}}{\partial\rho_{,k}} 
		- \Lambda^e \dfrac{\partial q_j}{\partial\rho_{,k}} 
		\\
		& \quad \overset{\eqref{Trho_,k Kortevegovi fluidi},\eqref{Tv_i' Kortevegovi fluidi}}{=} \dfrac{\partial\varphi_j}{\partial\rho_{,k}}-\Lambda^e\dfrac{\partial q_j}{\partial\rho_{,k}}. 
		\nonumber 
	\end{align}
	Similarly, the coefficient of $\partial\theta/\partial x_j$ in \eqref{osnovna rezidualna nejednakost Kortevegovi fluidi} is transformed to 
	\begin{align} \label{koeficijent uz theta po x_j Kortevegovi fluidi}
		& \left(\rho\overline{\dfrac{\partial s}{\partial\theta}} 
		- \Lambda^e\rho\dfrac{\partial e}{\partial\theta}\right) v_j 
		+ \dfrac{\partial\varphi_j}{\partial\theta} 
		+ \left( \Lambda^{v_i} + \Lambda^e v_i \right) \dfrac{\partial t_{ij}}{\partial\theta} 
		- \Lambda^e\dfrac{\partial q_j}{\partial\theta} 
		\\
		& \quad\overset{\eqref{Ttheta Kortevegovi fluidi},\eqref{Tv_i' Kortevegovi fluidi}}{=}\dfrac{\partial\varphi_j}{\partial\theta}-\Lambda^e\dfrac{\partial q_j}{\partial\theta}. 
		\nonumber 
	\end{align}
	Finally, coefficient of $D_{ij}$ in \eqref{osnovna rezidualna nejednakost Kortevegovi fluidi} becomes 
	\begin{align} \label{koeficijent uz D_ij Kortevegovi fluidi}
		& \rho \left(s - \Lambda^\rho + \dfrac{1}{2}\Lambda^e\vert\mathbf{v}\vert^2 
		- \Lambda^e e \right) \delta_{ij} 
		- \rho \left( \Lambda^{v_i} + \Lambda^e v_i \right) v_j
		+ \Lambda^et_{ij} - \Lambda^{\nabla\rho}_k\dfrac{\partial\rho}{\partial x_k}\delta_{ij} 
		- \Lambda^{\nabla\rho}_j \dfrac{\partial\rho}{\partial x_i} \\
		& \quad \overset{\eqref{modifikovana Trho Kortevegovi fluidi},\eqref{Tv_i' Kortevegovi fluidi}}{=} 
		- \rho^2 \left(\overline{\dfrac{\partial s}{\partial\rho}} 
		- \Lambda^e\dfrac{\partial e}{\partial\rho}\right) \delta_{ij} 
		+ \Lambda^e t_{ij} - \Lambda^{\nabla\rho}_k \dfrac{\partial\rho}{\partial x_k}\delta_{ij} 
		- \Lambda^{\nabla\rho}_j \dfrac{\partial\rho}{\partial x_i}. 
		\nonumber 
	\end{align}
	Using \eqref{koeficijent uz rho po x_j Kortevegovi fluidi}, \eqref{koeficijent uz rho_,k po x_j Kortevegovi fluidi}, \eqref{koeficijent uz theta po x_j Kortevegovi fluidi} and \eqref{koeficijent uz D_ij Kortevegovi fluidi}, the residual inequality \eqref{osnovna rezidualna nejednakost Kortevegovi fluidi} is reduced to \eqref{rezidualna nejednakost 2 Kortevegovi fluidi}. 
\end{proof}

Equation \eqref{rezidualna nejednakost 2 Kortevegovi fluidi} presents the simplest form of the residual inequality that can be reached using the consequences of equations \eqref{Trho Kortevegovi fluidi}--\eqref{TD_kl Kortevegovi fluidi} and \eqref{X_jrho_,kl Kortevegovi fluidi}--\eqref{X_jD_kl Kortevegovi fluidi}. However, it cannot be used for derivation of the admissible constitutive relations. 

Apart from non-convective fluxes $t_{ij}$ and $q_{j}$, \eqref{rezidualna nejednakost 2 Kortevegovi fluidi} inherits the entropy flux $\varphi_{j}$ as well. This raises a question about relation between the heat flux and the entropy flux---a delicate problem for which relevant studies do not offer a unique solution. For example, in the case of viscous and heat conducting fluids \cite{liu2002continuum} showed that these vector fields are parallel. \cite{anderson1998diffuse} deduce that, apart from temperature gradient, heat flux contains an extra term proportional to the product of density gradient and divergence of velocity, which does not appear in the entropy flux. \cite{van2023holographic} imposes a relation between the heat and entropy fluxes as a constraint. Entropy production in the form \eqref{rezidualna nejednakost 2 Kortevegovi fluidi} does not provide a proper basis for the answer to this question. 

Another reason for transformation of \eqref{rezidualna nejednakost 2 Kortevegovi fluidi} is related to the equilibrium conditions. If the equilibrium state is characterized by vanishing of the entropy production, then vanishing of the temperature and velocity gradient is sufficient condition for equilibrium of viscous and heat conducting fluids (see \cite{liu2002continuum}). In the case of Korteweg fluids, the situation is more complicated. Entropy production contains also the gradients of density, which are not supposed to vanish in equilibrium. Again, we cannot base the analysis upon \eqref{rezidualna nejednakost 2 Kortevegovi fluidi}. 

Therefore, our aim is to transform the residual inequality \eqref{rezidualna nejednakost 2 Kortevegovi fluidi} in such a way that entropy flux can be deduced from it independently of the stress tensor and heat flux, and that equilibrium conditions can be fulfilled. This final form of entropy production and residual inequality will resemble the one proposed by \cite{anderson1998diffuse}, in equation (15). It will contain an extra additive term of divergence form, which will facilitate the complete analysis of constitutive relations. 

\begin{theorem}
	The final form of the residual inequality reads 
	\begin{align} \label{rezidualna nejednakost final Kortevegovi fluidi}
		& \dfrac{\partial}{\partial x_j} \left( \varphi_j - \Lambda^eq_j 
		- \Lambda^{\nabla\rho}_j \rho D_{kl}\delta_{kl}\right) 
		+ \dfrac{\partial\Lambda^e}{\partial\theta} \dfrac{\partial\theta}{\partial x_j}q_j 
		\\
		& \quad + D_{ij} \left[ \left( -\rho^2 \left( \overline{\dfrac{\partial s}{\partial\rho}} 
		- \Lambda^e \dfrac{\partial e}{\partial\rho} \right) 
		+ \dfrac{\partial\Lambda^{\nabla\rho}_k}{\partial x_k}\rho\right) \delta_{ij} 
		+ \Lambda^et_{ij} - \Lambda^{\nabla\rho}_j\dfrac{\partial\rho}{\partial x_i} \right]
		= \Sigma \geq 0. 
		\nonumber 
	\end{align}
\end{theorem}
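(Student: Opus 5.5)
The plan is to start from the reduced residual inequality \eqref{rezidualna nejednakost 2 Kortevegovi fluidi} of Lemma \ref{Lema:Rezidualna nejednakost 2} and assemble the three gradient terms into a single divergence. The sum of the coefficients of $\partial\rho/\partial x_j$, $\partial\rho_{,k}/\partial x_j$ and $\partial\theta/\partial x_j$ should be recognised as an incomplete expansion of $\partial(\varphi_j-\Lambda^e q_j)/\partial x_j$. By the chain rule over the constitutive state space $\mathcal{C}$, this divergence also contains contributions from $\partial\rho_{,kl}/\partial x_j$, $\partial\theta_{,k}/\partial x_j$ and $\partial D_{kl}/\partial x_j$, together with the term $-q_j\,\partial\Lambda^e/\partial x_j$.

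The steps are as follows.

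\textbf{Step 1.} Since $\Lambda^e=1/\theta$ by Theorem \ref{Te:Lambda-e}, it depends only on $\theta$. Hence
\[
\dfrac{\partial}{\partial x_j}(\varphi_j-\Lambda^e q_j)=\sum_{\omega\in\mathcal{C}}\Big(\dfrac{\partial\varphi_j}{\partial\omega}-\Lambda^e\dfrac{\partial q_j}{\partial\omega}\Big)\dfrac{\partial\omega}{\partial x_j}-\dfrac{\partial\Lambda^e}{\partial\theta}\dfrac{\partial\theta}{\partial x_j}q_j .
\]

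\textbf{Step 2.} By Lemma \ref{Lema:phi_j jednakosti}, the $\rho_{,kl}$ and $\theta_{,k}$ contributions vanish. The $D_{kl}$ contribution reduces to $\Lambda^{\nabla\rho}_j\rho\,\delta_{kl}\,\partial D_{kl}/\partial x_j$.

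\textbf{Step 3.} Therefore the first line of \eqref{rezidualna nejednakost 2 Kortevegovi fluidi} equals
\[
\dfrac{\partial}{\partial x_j}(\varphi_j-\Lambda^e q_j)+\dfrac{\partial\Lambda^e}{\partial\theta}\dfrac{\partial\theta}{\partial x_j}q_j-\Lambda^{\nabla\rho}_j\rho\,\dfrac{\partial D_{kk}}{\partial x_j}.
\]

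\textbf{Step 4.} The last term is absorbed by the product rule:
\[
-\Lambda^{\nabla\rho}_j\rho\,\partial_j D_{kk}=-\partial_j\big(\Lambda^{\nabla\rho}_j\rho D_{kl}\delta_{kl}\big)+D_{kk}\,\rho\,\partial_j\Lambda^{\nabla\rho}_j+D_{kk}\,\Lambda^{\nabla\rho}_j\rho_{,j}.
\]
Writing $D_{kk}=D_{ij}\delta_{ij}$, the last piece, $\Lambda^{\nabla\rho}_k\rho_{,k}D_{ij}\delta_{ij}$, exactly cancels the term $-\Lambda^{\nabla\rho}_k\rho_{,k}\delta_{ij}D_{ij}$ already present in the $D_{ij}$ bracket of \eqref{rezidualna nejednakost 2 Kortevegovi fluidi}. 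The middle piece supplies the new isotropic term $\rho\,\partial\Lambda^{\nabla\rho}_k/\partial x_k\,\delta_{ij}$. Collecting the terms gives \eqref{rezidualna nejednakost final Kortevegovi fluidi}.

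The main obstacle is the bookkeeping in Step 1. One must check that every element of $\mathcal{C}$, including $D_{kl}$, is accounted for. One must also keep the sign of the $\partial\Lambda^e/\partial\theta$ term straight: it enters with a minus sign inside the divergence and so reappears with a plus sign. Finally, the product rule must be applied to $\rho$ as well as to $\Lambda^{\nabla\rho}_j$, so that the cancellation against the $\Lambda^{\nabla\rho}_k\rho_{,k}\delta_{ij}$ term is exact. Otherwise, the argument uses only Lemmas \ref{Lema:phi_j jednakosti} and \ref{Lema:Rezidualna nejednakost 2} together with $\Lambda^e=\Lambda^e(\theta)$.
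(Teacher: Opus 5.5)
Your proposal is correct and follows essentially the same route as the paper. You expand the divergence of $\varphi_j-\Lambda^e q_j$ by the chain rule over $\mathcal{C}$, remove the $\rho_{,kl}$, $\theta_{,k}$ and $D_{kl}$ contributions with the relations from the preceding lemma, and then apply the product rule to $\Lambda^{\nabla\rho}_j\rho D_{kk}$ so that the $\Lambda^{\nabla\rho}_k\rho_{,k}\delta_{ij}$ term cancels. The only difference is cosmetic: the paper first expands $\partial_j\varphi_j$ and $\partial_j q_j$ separately and only afterwards moves $\Lambda^e(\theta)$ inside the divergence, whereas you do this in one step.
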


\begin{proof}
	We shall start by transforming the first two terms in the residual inequality \eqref{rezidualna nejednakost 2 Kortevegovi fluidi}. Taking into account that entropy flux $\varphi_{j}$ and heat flux $q_{j}$ are determined by the constitutive functions
	\begin{align*}
		\varphi_j & = \varphi_{j}(\rho, \rho_{,k}, \rho_{,kl}, \theta, \theta_{,k}, D_{kl}), 
		\\
		q_j & = q_{j}(\rho, \rho_{,k}, \rho_{,kl}, \theta, \theta_{,k}, D_{kl}), \quad 
		j = 1, 2, 3. 
	\end{align*} 
	It follows 
	\begin{align*}
		\dfrac{\partial\varphi_j}{\partial x_j}
		& = \dfrac{\partial\varphi_j}{\partial\rho}\dfrac{\partial\rho}{\partial x_j} 
		+ \dfrac{\partial\varphi_j}{\partial\rho_{,k}}\dfrac{\partial\rho_{,k}}{\partial x_j} 
		\\
		& \quad + \dfrac{\partial\varphi_j}{\partial\rho_{,kl}}\dfrac{\partial\rho_{,kl}}{\partial x_j}+\dfrac{\partial\varphi_j}{\partial\theta}\dfrac{\partial\theta}{\partial x_j}+\dfrac{\partial\varphi_j}{\partial\theta_{,k}}\dfrac{\partial\theta_{,k}}{\partial x_j}+\dfrac{\partial\varphi_j}{\partial D_{kl}}\dfrac{\partial D_{kl}}{\partial x_j},
	\end{align*}
	and analogously for the heat flux. Taking this into account, first line in the residual inequality \eqref{rezidualna nejednakost 2 Kortevegovi fluidi} in Lemma \ref{Lema:Rezidualna nejednakost 2} can be transformed in the following way
	\begin{align*}
		& \dfrac{\partial\rho}{\partial x_j} \left[ \dfrac{\partial\varphi_j}{\partial\rho} 
		- \Lambda^e\dfrac{\partial q_j}{\partial\rho} \right] 
		+ \dfrac{\partial\rho_{,k}}{x_j} \left[ \dfrac{\partial\varphi_j}{\partial\rho_{,k}} 
		- \Lambda^e\dfrac{\partial q_j}{\partial\rho_{,k}} \right] \\
		& \quad = \dfrac{\partial\varphi_j}{\partial\rho} \dfrac{\partial\rho}{\partial x_j} 
		+ \dfrac{\partial\varphi_j}{\partial\rho_{,k}} \dfrac{\partial\rho_{,k}}{\partial x_j} 
		- \Lambda^e\left[\dfrac{\partial q_j}{\partial\rho}\dfrac{\partial\rho}{\partial x_j} 
		+ \dfrac{\partial q_j}{\partial\rho_{,k}} \dfrac{\partial\rho_{,k}}{\partial x_j}\right] \\
		& \quad = \dfrac{\partial\varphi_j}{\partial x_j} - \left(   
		\dfrac{\partial\varphi_j}{\partial\rho_{,kl}} \dfrac{\partial\rho_{,kl}}{\partial x_j} 
		+ \dfrac{\partial\varphi_j}{\partial\theta} \dfrac{\partial\theta}{\partial x_j} 
		+ \dfrac{\partial\varphi_j}{\partial\theta_{,k}} \dfrac{\partial\theta_{,k}}{\partial x_j} 
		+ \dfrac{\partial\varphi_j}{\partial D_{kl}} \dfrac{\partial D_{kl}}{\partial x_j} \right) \\
		& \qquad - \Lambda^e \left[ \dfrac{\partial q_j}{\partial x_j} 
		- \left( \dfrac{\partial q_j}{\partial\rho_{,kl}} \dfrac{\partial\rho_{,kl}}{\partial x_j} 
		+ \dfrac{\partial q_j}{\partial\theta}\dfrac{\partial\theta}{\partial x_j} 
		+ \dfrac{\partial q_j}{\partial\theta_{,k}} \dfrac{\partial\theta_{,k}}{\partial x_j} 
		+ \dfrac{\partial q_j}{\partial D_{kl}} \dfrac{\partial D_{kl}}{\partial x_j}\right) \right] \\
		& \quad = \dfrac{\partial\varphi_j}{\partial x_j} 
		- \Lambda^e \dfrac{\partial q_j}{\partial x_j} % \\ & \qquad 
		- \left[ \left( \dfrac{\partial\varphi_j}{\partial{\rho_{,kl}}} \dfrac{\partial\rho_{,kl}}{\partial x_j} 
		- \Lambda^e \dfrac{\partial q_j}{\partial{\rho_{,kl}}} \dfrac{\partial\rho_{,kl}}{\partial x_j} \right) 
		+ \left( \dfrac{\partial\varphi_j}{\partial{\theta}}\dfrac{\partial\theta}{\partial x_j} 
		- \Lambda^e\dfrac{\partial q_j}{\partial{\theta}} \dfrac{\partial\theta}{\partial x_j} \right) \right. \\
		& \qquad \left. + \left( \dfrac{\partial\varphi_j}{\partial{\theta_{,k}}} \dfrac{\partial\theta_{,k}}{\partial x_j} 
		- \Lambda^e \dfrac{\partial q_j}{\partial{\theta_{,k}}} 
		\dfrac{\partial\theta_{,k}}{\partial x_j}\right) 
		+ \left(\dfrac{\partial\varphi_j}{\partial D_{kl}}\dfrac{\partial D_{kl}}{\partial x_j} 
		- \Lambda^e \dfrac{\partial q_j}{\partial D_{kl}} 
		\dfrac{\partial D_{kl}}{\partial x_j} \right) \right]\\
		& \overset{\eqref{X_jrho_,kl prim Kortevegovi fluidi}, \eqref{X_jtheta_,k prim Kortevegovi fluidi}, \eqref{X_jD_kl prim Kortevegovi fluidi}} {=} 
		\dfrac{\partial\varphi_j}{\partial x_j} - \Lambda^e \dfrac{\partial q_j}{\partial x_j} 
		- \left( \dfrac{\partial\varphi_j}{\partial{\theta}}\dfrac{\partial\theta}{\partial x_j} 
		- \Lambda^e\dfrac{\partial q_j}{\partial{\theta}}\dfrac{\partial\theta}{\partial x_j} \right) 
		- \Lambda^{\nabla\rho}_j\rho\delta_{kl} \dfrac{\partial D_{kl}}{\partial x_j}.
	\end{align*}
	Using this transformation, inequality \eqref{rezidualna nejednakost 2 Kortevegovi fluidi} can be written in the form 
	%\begin{equation}
	\begin{align} \label{rezidualna nejednakost 3 Kortevegovi fluidi}
		& \dfrac{\partial\varphi_j}{\partial x_j} - \Lambda^e \dfrac{\partial q_j}{\partial x_j} - \Lambda^{\nabla\rho}_j\rho\delta_{kl}\dfrac{\partial D_{kl}}{\partial x_j}
		\\
		& \quad +D_{ij}\left[-\rho^2\delta_{ij}\left(\overline{\dfrac{\partial s}{\partial\rho}}-\Lambda^e\dfrac{\partial e}{\partial\rho}\right)+\Lambda^et_{ij}-\Lambda^{\nabla\rho}_k\dfrac{\partial\rho}{\partial x_k}\delta_{ij}-\Lambda^{\nabla\rho}_j\dfrac{\partial\rho}{\partial x_i}\right]=\Sigma\geq 0. 
		\nonumber 
	\end{align}
	%\end{equation}
	Having in mind that $\Lambda^e$ depends only on temperature $\theta$ (see equation \eqref{mnozitelj Lambda-e Kortevegovi fluidi}), residual inequality \eqref{rezidualna nejednakost 3 Kortevegovi fluidi} can be written as
	\begin{align*}
		& \dfrac{\partial}{\partial x_j}\varphi_j - \Lambda^e\dfrac{\partial}{\partial x_j}q_j 
			- \Lambda^{\nabla\rho}_j\rho\delta_{kl} \dfrac{\partial D_{kl}}{\partial x_j} 
			+ D_{ij} \left[ -\rho^2\delta_{ij}\left(\overline{\dfrac{\partial s}{\partial\rho}} 
			- \Lambda^e\dfrac{\partial e}{\partial\rho} \right) + \Lambda^et_{ij} 
			- \Lambda^{\nabla\rho}_k\dfrac{\partial\rho}{\partial x_k}\delta_{ij} 
			- \Lambda^{\nabla\rho}_j\dfrac{\partial\rho}{\partial x_i} \right] 
		\\
		& \quad = \dfrac{\partial}{\partial x_j} \left( \varphi_j - \Lambda^eq_j\right) 
			+ \dfrac{\partial\Lambda^e}{\partial x_j}q_j 
			- \Lambda^{\nabla\rho}_j\rho\delta_{kl} \dfrac{\partial D_{kl}}{\partial x_j} 
		\\ 
		& \qquad + D_{ij} \left[ \left( -\rho^2\left(\overline{\dfrac{\partial s}{\partial\rho}} 
			- \Lambda^e \dfrac{\partial e}{\partial\rho} \right) 
			- \Lambda^{\nabla\rho}_k\dfrac{\partial\rho}{\partial x_k} \right) \delta_{ij} 
			+ \Lambda^e t_{ij} - \Lambda^{\nabla\rho}_j\dfrac{\partial\rho}{\partial x_i} \right] \\
			& \quad = \dfrac{\partial}{\partial x_j} \left( \varphi_j - \Lambda^eq_j \right) 
			+ \dfrac{\partial\Lambda^e}{\partial x_j}q_j 
			- \dfrac{\partial}{\partial x_j} \left( \Lambda^{\nabla\rho}_j\rho D_{kl} \delta_{kl} \right) 
			+ \dfrac{\partial}{\partial x_j}\left(\Lambda^{\nabla\rho}_j\rho\right) \delta_{kl} D_{kl} \\
			& \qquad + D_{ij} \left[ \left( -\rho^2\left(\overline{\dfrac{\partial s}{\partial\rho}} 
			- \Lambda^e\dfrac{\partial e}{\partial\rho} \right) 
			- \Lambda^{\nabla\rho}_k\dfrac{\partial\rho}{\partial x_k} \right) \delta_{ij} 
			+ \Lambda^e t_{ij} - \Lambda^{\nabla\rho}_j\dfrac{\partial\rho}{\partial x_i} \right] \\
			& \quad = \dfrac{\partial}{\partial x_j} \left( \varphi_j - \Lambda^e q_j 
			- \Lambda^{\nabla\rho}_j \rho D_{kl}\delta_{kl} \right) 
			+ \dfrac{\partial\Lambda^e}{\partial\theta} \dfrac{\partial\theta}{\partial x_j} q_j \\
			& \qquad + D_{ij} \left[ \left( -\rho^2\left(\overline{\dfrac{\partial s}{\partial\rho}} 
			- \Lambda^e \dfrac{\partial e}{\partial\rho} \right) 
			- \Lambda^{\nabla\rho}_k \dfrac{\partial\rho}{\partial x_k} 
			+ \dfrac{\partial}{\partial x_k} \left( \Lambda^{\nabla\rho}_k\rho \right) \right) \delta_{ij} 
			% \right. \\ & \qquad \left. 
			+ \Lambda^e t_{ij} 
			- \Lambda^{\nabla\rho}_j \dfrac{\partial\rho}{\partial x_i} \right] \quad = \Sigma \geq 0.
	\end{align*}
	Since $\dfrac{\partial}{\partial x_k}\left(\Lambda^{\nabla\rho}_k\rho\right)=\dfrac{\partial\Lambda^{\nabla\rho}_k}{\partial x_k}\rho+\Lambda^{\nabla\rho}_k\dfrac{\partial\rho}{\partial x_k}$, the last inequality becomes 
	\begin{align*}
		& \dfrac{\partial}{\partial x_j} \left( \varphi_j - \Lambda^eq_j 
			- \Lambda^{\nabla\rho}_j\rho D_{kl}\delta_{kl} \right) 
			+ \dfrac{\partial\Lambda^e}{\partial\theta} \dfrac{\partial\theta}{\partial x_j} q_j 
		\\
		& \quad + D_{ij} \left[ \left( -\rho^2 \left( \overline{\dfrac{\partial s}{\partial\rho}} 
			- \Lambda^e \dfrac{\partial e}{\partial\rho}\right) + \dfrac{\partial\Lambda^{\nabla\rho}_k}{\partial x_k}\rho \right) \delta_{ij} 
			+ \Lambda^e t_{ij} - \Lambda^{\nabla\rho}_j \dfrac{\partial\rho}{\partial x_i} \right] 
			= \Sigma \geq 0,
	\end{align*}
	which proves \eqref{rezidualna nejednakost final Kortevegovi fluidi}.
\end{proof}

\subsection{Constitutive relations} 

Starting point in derivation of the admissible constitutive relations is the residual inequality \eqref{rezidualna nejednakost final Kortevegovi fluidi}. It contains the divergence of the vector field and terms which can be interpreted as products of thermodynamic forces, $\partial \theta/\partial x_{j}$ and $D_{ij}$, and thermodynamic fluxes $q_{j}$ and $t_{ij}$. By the proper choice of thermodynamic fluxes, one can secure the non-negative sign of the products. However, the sign of divergence of the vector field cannot be controlled, and thus have to be eliminated (see \cite{maugin1999thermomechanics} for discussion on this issue). For this reason, constitutive relations will be determined in three steps: 
\begin{enumerate}
	\item elimination of divergence term, 
	\item derivation of thermodynamic fluxes in equilibrium, and
	\item derivation of thermodynamic fluxes for arbitrary thermodynamic process, i.e. in non-equilibrium. 
\end{enumerate}
First step is given in the next Proposition. 

\begin{proposition} \label{Prop:Protok entropije Kortevegovi fluidi}
	Entropy flux $\varphi_{j}$ is determined by the following relation
	\begin{equation}\label{protok entropije Kortevegovi fluidi}
		\varphi_{j} = \dfrac{1}{\theta} \left( q_j + \rho \alpha(\rho,\theta) 
		\dfrac{\partial\rho}{\partial x_j} \dfrac{\partial v_k}{\partial x_k} \right). 
	\end{equation}
\end{proposition}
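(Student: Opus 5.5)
The plan is to start from the final form of the residual inequality \eqref{rezidualna nejednakost final Kortevegovi fluidi} and isolate its divergence term
\begin{equation*}
\dfrac{\partial}{\partial x_j}\left( \varphi_j - \Lambda^e q_j - \Lambda^{\nabla\rho}_j \rho D_{kl}\delta_{kl} \right).
\end{equation*}
All other terms in \eqref{rezidualna nejednakost final Kortevegovi fluidi} are products of the thermodynamic forces $\partial\theta/\partial x_j$ and $D_{ij}$ with coefficients built from the constitutive functions. Their signs can therefore be controlled later, in the two subsequent steps, by a suitable choice of $q_j$ and $t_{ij}$.

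The divergence term is different. When expanded, it contains derivatives such as $\partial\rho_{,kl}/\partial x_j$, $\partial\theta_{,k}/\partial x_j$ and $\partial D_{kl}/\partial x_j$. These belong to the process direction vector \eqref{ProcessDirectionVector-Korteweg}, or at least lie outside the set of forces, and can take arbitrary values at a given point of the constitutive state space $\mathcal{C}$. Its sign is thus uncontrollable. Following the argument announced before the Proposition (elimination of the divergence term, cf. \cite{maugin1999thermomechanics}), I would require the vector field itself to vanish, up to an irrelevant constant vector:
\begin{equation*}
\varphi_j = \Lambda^e q_j + \Lambda^{\nabla\rho}_j \rho D_{kl}\delta_{kl}.
\end{equation*}
The constant is set to zero by the same reasoning as in Lemma \ref{Lema:phi_j jednakosti Ojler} for Euler fluids: the entropy flux enters the local balance law \eqref{bilans entropije Kortevegovi fluidi} only through its divergence.

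Next I would substitute the multipliers already computed. Theorem \ref{Te:Lambda-e} gives $\Lambda^e = 1/\theta$, and Theorem \ref{Te:Lambda-grad-rho} gives $\Lambda^{\nabla\rho}_j = \theta^{-1}\alpha(\rho,\theta)\,\partial\rho/\partial x_j$. Finally I would use $D_{kl}\delta_{kl} = D_{kk} = \partial v_k/\partial x_k$, which follows from the decomposition \eqref{App:Lij}, since the trace of $W_{kl}$ vanishes. Together these yield \eqref{protok entropije Kortevegovi fluidi} directly.

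As a consistency check, I would verify that the resulting $\varphi_j$ agrees with Lemma \ref{Lema:phi_j jednakosti}. Differentiating $\varphi_j$ with respect to $D_{kl}$ gives $\Lambda^e\,\partial q_j/\partial D_{kl} + \Lambda^{\nabla\rho}_j\rho\delta_{kl}$, which matches \eqref{X_jD_kl prim Kortevegovi fluidi}. Differentiating with respect to $\rho_{,kl}$ and $\theta_{,k}$ reproduces \eqref{X_jrho_,kl prim Kortevegovi fluidi} and \eqref{X_jtheta_,k prim Kortevegovi fluidi}, because $\Lambda^{\nabla\rho}_j$ and $\Lambda^e$ do not depend on those variables. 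This confirms that the choice is admissible within the constitutive class \eqref{konstitutivne relacije s-phi opste}.

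The main obstacle is justifying that the divergence term must vanish identically, rather than merely having a nonnegative sign. My approach would be to argue that, at a fixed point of $\mathcal{C}$, the higher-order gradients appearing in the divergence can be chosen freely, so that any nonzero divergence could be made negative.
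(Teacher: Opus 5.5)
Your computation is the same as the paper's proof. The paper simply \emph{postulates} that the vector field $J_j := \varphi_j - \Lambda^e q_j - \Lambda^{\nabla\rho}_j \rho D_{kk}$ under the divergence in \eqref{rezidualna nejednakost final Kortevegovi fluidi} vanishes, and then inserts $\Lambda^e = 1/\theta$, $\Lambda^{\nabla\rho}_j = \theta^{-1}\alpha\,\rho_{,j}$ and $D_{kk} = \partial v_k/\partial x_k$. What fails is the justification you propose for that postulate.

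The identities of Lemma \ref{Lema:phi_j jednakosti}, which your own consistency check reproduces, state exactly that $\partial J_j/\partial\rho_{,kl} = 0$, $\partial J_j/\partial\theta_{,k} = 0$ and $\partial J_j/\partial D_{kl} = 0$. Here $\Lambda^e$ and $\Lambda^{\nabla\rho}_j$ depend only on $\rho$, $\rho_{,k}$, $\theta$. Hence $J_j = J_j(\rho,\rho_{,k},\theta)$ and
\[
\frac{\partial J_j}{\partial x_j} = \frac{\partial J_j}{\partial\rho}\rho_{,j} + \frac{\partial J_j}{\partial\rho_{,k}}\rho_{,kj} + \frac{\partial J_j}{\partial\theta}\theta_{,j},
\]
which contains none of $\partial\rho_{,kl}/\partial x_j$, $\partial\theta_{,k}/\partial x_j$, $\partial D_{kl}/\partial x_j$. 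The method forces this: those components of \eqref{ProcessDirectionVector-Korteweg} were already removed by Lemma \ref{lemma:Liu-Lemma}, so the residual inequality contains no process directions at all. At a fixed point of $\mathcal{C}$ the divergence therefore has a definite value. Your argument of choosing the higher-order gradients freely to make it negative has nothing to act on. Your consistency check also cannot single out $J_j = 0$, since every $J_j(\rho,\rho_{,k},\theta)$ passes it.

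If you want a genuine derivation rather than the paper's postulate, vary inside $\mathcal{C}$ instead of along process directions. Set $D_{ij} = 0$ and $\theta_{,k} = 0$. Then \eqref{rezidualna nejednakost final Kortevegovi fluidi} reduces to $\frac{\partial J_j}{\partial\rho}\rho_{,j} + \frac{\partial J_j}{\partial\rho_{,k}}\rho_{,kj} \geq 0$ for every symmetric $\rho_{,kj}$. This expression is affine in $\rho_{,kj}$, so the symmetric part of $\partial J_j/\partial\rho_{,k}$ must vanish. It vanishes everywhere, because $J_j$ does not depend on $\theta_{,k}$ or $D_{kl}$.

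Isotropy, used as in \eqref{KonstitutivneRelacije-Ojler2}, gives $J_j = f(\rho,\gamma,\theta)\,\rho_{,j}$ with $\gamma = \rho_{,m}\rho_{,m}$. Then $\partial J_j/\partial\rho_{,k} = f\delta_{jk} + 2(\partial f/\partial\gamma)\,\rho_{,j}\rho_{,k}$ is symmetric and must vanish. Contracting it with a unit vector orthogonal to $\nabla\rho$ gives $f = 0$, i.e. $J_j = 0$, with no additive constant to discard. With this replacing your final paragraph, the rest of your proposal yields \eqref{protok entropije Kortevegovi fluidi}.
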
 

\begin{proof}
	Since divergence of the vector field in equation \eqref{rezidualna nejednakost final Kortevegovi fluidi} cannot be of definite sign, it will be assumed that vector field is equal to zero vector field
	\begin{equation*}
		\varphi_j - \Lambda^e q_j - \Lambda^{\nabla\rho}_j\rho D_{kl}\delta_{kl} = 0_{j}. 
	\end{equation*}
	Taking into account that multipliers $\Lambda^{e}$ and $\Lambda^{\nabla\rho}_j$ are determined by \eqref{mnozitelj Lambda-e Kortevegovi fluidi} and \eqref{mnozitelj Lambda-grad-rho Kortevegovi fluidi}, respectively, and that $D_{kl} \delta_{kl} = \partial v_k/\partial x_k$, one arrives at \eqref{protok entropije Kortevegovi fluidi}. 
\end{proof} 

Note that \eqref{protok entropije Kortevegovi fluidi} does not represent the most general form of the entropy flux. It may contain an additive divergence-free term, but there is neither mathematical method, nor physical argument which can explicitly determine its structure. Finally, using \eqref{protok entropije Kortevegovi fluidi}, the entropy production \eqref{rezidualna nejednakost final Kortevegovi fluidi} can be written as 
\begin{equation} \label{produkcija entropije Sigma Kortevegovi fluidi}
	\Sigma = \dfrac{\partial\Lambda^e}{\partial\theta} \dfrac{\partial\theta}{\partial x_j} q_j 
	% \\ & \quad 
	+ D_{ij} \left[ \left( -\rho^2 \left( \overline{\dfrac{\partial s}{\partial\rho}} 
	- \Lambda^e \dfrac{\partial e}{\partial\rho}\right) 
	+ \dfrac{\partial\Lambda^{\nabla\rho}_k}{\partial x_k}\rho \right) \delta_{ij} 
	+ \Lambda^e t_{ij} - \Lambda^{\nabla\rho}_j \dfrac{\partial\rho}{\partial x_i} \right]. 
\end{equation}

In the second step it is required to define the equilibrium state as a state in which the entropy production \eqref{produkcija entropije Sigma Kortevegovi fluidi} vanishes, and attains minimum at the same time
\begin{equation} \label{ravnoteza Sigma Kortevegovi fluidi}
	\Sigma\big|_{eq} = 0, \quad \Sigma\big|_{eq} = \min_{\mathrm{state space}} \Sigma. 
\end{equation}
To do this in a proper way, entropy production \eqref{produkcija entropije Sigma Kortevegovi fluidi} should be expressed in such a way that there are no ``hidden'' terms. Next Lemma provides an appropriate form of the entropy production. 
\begin{lemma} \label{Lema:Produkcija entropije-konacna}
	Entropy production $\Sigma$ of Korteweg fluids has the following form
	\begin{align} \label{produkcija entropije Sigma Kortevegovi fluidi_konacna}
		& \Sigma = - \dfrac{1}{\theta^{2}} \dfrac{\partial\theta}{\partial x_j} q_j 
		- \dfrac{\rho}{\theta^{2}} \left[ \alpha(\rho, \theta) 
		- \theta \dfrac{\partial \alpha(\rho, \theta)}{\partial \theta} \right] 
		\dfrac{\partial \rho}{\partial x_k} \dfrac{\partial \theta}{\partial x_k} 
		D_{ij} \delta_{ij}
		\\ 
		& \quad + \dfrac{1}{\theta} D_{ij} \left\{ \left[ p 
		+ \rho \dfrac{\partial \alpha(\rho, \theta)}{\partial \rho}  
		\dfrac{\partial \rho}{\partial x_k} \dfrac{\partial \rho}{\partial x_k} 
		+ \rho \alpha(\rho, \theta) \dfrac{\partial^{2} \rho}{\partial x_k \partial x_k} \right] 
		\delta_{ij} + t_{ij} - \alpha(\rho, \theta) 
		\dfrac{\partial \rho}{\partial x_i} \dfrac{\partial \rho}{\partial x_j} \right\}. 
		\nonumber 
	\end{align}
\end{lemma}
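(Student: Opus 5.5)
We start from the expression \eqref{produkcija entropije Sigma Kortevegovi fluidi} for $\Sigma$ and substitute the multipliers already computed. The plan has three parts: the heat-flux term, the $\rho^2$ term in the coefficient of $D_{ij}$, and the term carrying the divergence of $\Lambda^{\nabla\rho}_k$.

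For the heat-flux term we use Theorem \ref{Te:Lambda-e}, $\Lambda^e = 1/\theta$. This gives $\partial\Lambda^e/\partial\theta = -1/\theta^2$, and so the first term becomes $-\theta^{-2}\,\theta_{,j} q_j$. Next we simplify $-\rho^2\bigl(\overline{\partial s/\partial\rho} - \Lambda^e \partial e/\partial\rho\bigr)$. By the total derivative \eqref{totalni izvod} this equals $-\rho^2\bigl[(\partial s/\partial e - \Lambda^e)\partial e/\partial\rho + \partial s/\partial\rho\bigr]$. The first bracket vanishes by \eqref{Lambda_e Kortevegovi fluidi}. The relation \eqref{konstitutivne relacije s hipoteza 2}$_2$ then gives $-\rho^2\,\partial s/\partial\rho = p/\theta$, which is the pressure contribution $\theta^{-1}p\,\delta_{ij}$.

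The main step is the divergence term $\rho\,\partial\Lambda^{\nabla\rho}_k/\partial x_k$ with $\Lambda^{\nabla\rho}_k = \theta^{-1}\alpha(\rho,\theta)\rho_{,k}$ from Theorem \ref{Te:Lambda-grad-rho}. This is the term that hides gradients which must be made explicit. We expand it by the chain rule, with $\rho$ and $\theta$ as the only variables of $\alpha/\theta$:
\begin{equation*}
\frac{\partial}{\partial x_k}\Bigl(\frac{\alpha}{\theta}\rho_{,k}\Bigr) = \frac{1}{\theta}\frac{\partial\alpha}{\partial\rho}\rho_{,k}\rho_{,k} + \frac{\partial}{\partial\theta}\Bigl(\frac{\alpha}{\theta}\Bigr)\theta_{,k}\rho_{,k} + \frac{\alpha}{\theta}\rho_{,kk},
\end{equation*}
and then use $\partial(\alpha/\theta)/\partial\theta = -\theta^{-2}(\alpha - \theta\,\partial\alpha/\partial\theta)$. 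After multiplying by $\rho$ and contracting with $D_{ij}\delta_{ij}$:
\begin{itemize}
\item the first and third pieces give $\theta^{-1}\bigl[\rho\,\partial_\rho\alpha\,\rho_{,k}\rho_{,k} + \rho\alpha\,\rho_{,kk}\bigr]\delta_{ij}D_{ij}$;
\item the middle piece gives the separate cross-coupling term $-\rho\theta^{-2}(\alpha - \theta\,\partial_\theta\alpha)\rho_{,k}\theta_{,k}D_{ij}\delta_{ij}$.
\end{itemize}

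Finally, the remaining terms are $\Lambda^e t_{ij} = \theta^{-1}t_{ij}$ and $-\Lambda^{\nabla\rho}_j\rho_{,i} = -\theta^{-1}\alpha\rho_{,i}\rho_{,j}$. Collecting everything under a common factor $1/\theta$ inside the braces yields \eqref{produkcija entropije Sigma Kortevegovi fluidi_konacna}.

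The only delicate points are sign bookkeeping and keeping the $\theta$-gradient cross term separate from the $D_{ij}$ braces, since that term is not of the form $\theta^{-1}D_{ij}(\cdot)$ with $\theta$-independent structure.
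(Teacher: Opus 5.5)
Your proposal is correct and follows the paper's proof essentially step by step. You substitute $\Lambda^e = 1/\theta$ and $\Lambda^{\nabla\rho}_k = \theta^{-1}\alpha(\rho,\theta)\rho_{,k}$, expand $\partial\Lambda^{\nabla\rho}_k/\partial x_k$ by the chain rule (giving the cross-coupling coefficient $-\rho\theta^{-2}(\alpha - \theta\,\partial\alpha/\partial\theta)$), and reduce $-\rho^2\bigl(\overline{\partial s/\partial\rho} - \Lambda^e\,\partial e/\partial\rho\bigr)$ to $p/\theta$ via $\partial s/\partial e = \Lambda^e$ and the second thermodynamic relation, exactly as the paper does.
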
 

\begin{proof}
	Multipliers $\Lambda^e$ i $\Lambda^{\nabla\rho}_i$ of Korteweg fluids are determined by \eqref{mnozitelj Lambda-e Kortevegovi fluidi} and \eqref{mnozitelj Lambda-grad-rho Kortevegovi fluidi}, respectively:
	\begin{equation*}
		\Lambda^e = \dfrac{1}{\theta}, \quad 
		\Lambda^{\nabla\rho}_i = \dfrac{1}{\theta} \alpha(\rho, \theta) 
		\dfrac{\partial \rho}{\partial x_i}. 
	\end{equation*}
	This implies  
	\begin{align} \label{Lambda-grad-rho po xj Kortevegovi fluidi}
		\dfrac{\partial\Lambda^{\nabla\rho}_k}{\partial x_k} & = 
		\dfrac{\partial}{\partial x_k} \left( \dfrac{1}{\theta} \alpha(\rho,\theta) 
		\dfrac{\partial\rho}{\partial x_k} \right) 
		\\
		& = \left[ - \dfrac{1}{\theta^{2}} \alpha(\rho,\theta) 
		+ \dfrac{1}{\theta} \dfrac{\partial \alpha(\rho,\theta)}{\partial\theta} \right] \dfrac{\partial\theta}{\partial x_k} \dfrac{\partial\rho}{\partial x_k} 
		+ \dfrac{1}{\theta} \dfrac{\partial \alpha(\rho,\theta)}{\partial\rho} 
		\dfrac{\partial\rho}{\partial x_k} \dfrac{\partial\rho}{\partial x_k}
		+ \dfrac{1}{\theta} \alpha(\rho,\theta)\dfrac{\partial^2\rho}{\partial x_k\partial x_k}. 
		\nonumber 
	\end{align}
	Using the definition of total derivative \eqref{totalni izvod} we may also transform 
	%\begin{equation} 
	\begin{align} \label{deo izraza za RI Kortevegovi fluidi}
		- \rho^2 \left( \overline{\dfrac{\partial s}{\partial\rho}} 
		- \Lambda^e \dfrac{\partial e}{\partial\rho} \right) 
		& = - \rho^2 \left( \dfrac{\partial s}{\partial e} \dfrac{\partial e}{\partial\rho} 
		+ \dfrac{\partial s}{\partial\rho} - \Lambda^e \dfrac{\partial e}{\partial\rho} \right) 
		\nonumber \\
		& \overset{\eqref{Lambda_e Kortevegovi fluidi}}{=} 
		- \rho^2 \dfrac{\partial s}{\partial\rho} 
		\\
		& \overset{\eqref{konstitutivne relacije s hipoteza 2}_{2}}{=} \dfrac{p}{\theta}. 
		\nonumber 
	\end{align}
	%\end{equation}
	Finally, inserting \eqref{mnozitelj Lambda-e Kortevegovi fluidi}, \eqref{Lambda-grad-rho po xj Kortevegovi fluidi} and \eqref{deo izraza za RI Kortevegovi fluidi} into \eqref{produkcija entropije Sigma Kortevegovi fluidi}, after straightforward transformations one obtains \eqref{produkcija entropije Sigma Kortevegovi fluidi_konacna}. 
\end{proof}

It is now possible to express the equilibrium conditions in terms of the elements of state space.

\begin{lemma} \label{Lema:Ravnoteza Kortevegovih fluida}
	Sufficient conditions of equilibrium of Korteweg fluids are 
	\begin{equation} \label{ravnoteza Kortevegovi fluidi}
		D_{ij}|_{eq} = 0, \quad \dfrac{\partial \theta}{\partial x_{j}}\bigg|_{eq} = 0, 
		\quad i,j = 1,2,3.  
	\end{equation}
\end{lemma}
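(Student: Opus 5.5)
The plan is to read the equilibrium conditions \eqref{ravnoteza Sigma Kortevegovi fluidi} directly off the explicit form \eqref{produkcija entropije Sigma Kortevegovi fluidi_konacna} of the entropy production from Lemma \ref{Lema:Produkcija entropije-konacna}. The key structural observation is that every monomial of $\Sigma$ contains, as a factor, either a component of the temperature gradient $\partial\theta/\partial x_j$ or a component of $D_{ij}$.

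Concretely, $\Sigma$ splits into three terms:
\begin{itemize}
\item the heat conduction term $-\theta^{-2}\,\theta_{,j}q_j$;
\item the cross-coupling term, proportional to $\rho_{,k}\theta_{,k}D_{ij}\delta_{ij}$;
\item the mechanical term $\theta^{-1}D_{ij}\{\cdots\}$.
\end{itemize}
The coefficients multiplying $\theta_{,j}$ and $D_{ij}$ are $q_j$, $t_{ij}$, $p$, $\alpha$ and its derivatives, and $\rho_{,k}$, $\rho_{,kk}$. All of these are functions on the constitutive state space $\mathcal{C}$. They remain bounded (finite) when $D_{ij}$ and $\theta_{,k}$ are set to zero, and no factor of the form $1/\theta_{,k}$ or $1/D_{ij}$ appears. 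Substituting $D_{ij}=0$ and $\theta_{,j}=0$ therefore annihilates each term. This gives $\Sigma|_{eq}=0$ independently of the values of $\rho$, $\rho_{,k}$, $\rho_{,kl}$, $\theta$, which is exactly what allows density gradients to survive in equilibrium.

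For the second requirement, I would combine $\Sigma|_{eq}=0$ with the residual inequality $\Sigma\geq 0$, which holds for all admissible processes. A state where $\Sigma$ vanishes is then automatically a global, hence local, minimum of $\Sigma$ over the state space. Both parts of \eqref{ravnoteza Sigma Kortevegovi fluidi} are thus satisfied, so \eqref{ravnoteza Kortevegovi fluidi} is sufficient.

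The only delicate point is making sure no hidden dependence spoils the vanishing. Two possible sources need checking:
\begin{itemize}
\item $\partial\Lambda^{\nabla\rho}_k/\partial x_k$ contained a $\theta_{,k}$ term. Lemma \ref{Lema:Produkcija entropije-konacna} has already isolated it into the cross-coupling term, which carries both $\theta_{,k}$ and $D_{ij}\delta_{ij}$.
\item The fluxes $q_j$ and $t_{ij}$ depend on $\theta_{,k}$ and $D_{kl}$ only through smooth constitutive functions, and are assumed continuous at zero.
\end{itemize}

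I would also remark that the statement only claims sufficiency, not necessity. Whether $\Sigma$ could vanish elsewhere, for example through an unfortunate choice of $t_{ij}$, is irrelevant here, and will in fact be excluded later by the flux representations.
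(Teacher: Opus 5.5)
Your argument is correct and is essentially the paper's. The paper's proof likewise only observes that every term of \eqref{produkcija entropije Sigma Kortevegovi fluidi_konacna} carries a factor $\theta_{,j}$ or $D_{ij}$, while $\rho_{,k}$ and $\rho_{,kk}$ need not vanish, so that $\Sigma|_{eq}=0$. Your extra step, which obtains the minimum property from $\Sigma\geq 0$, is a valid completion that the paper leaves implicit; the paper uses that condition only afterwards, through \eqref{stacionarnost Kortevegovi fluidi}. Your closing aside overreaches slightly: with $\nu,\mu,\kappa\geq 0$ rather than strictly positive, the later flux representations do not rule out other zeros of $\Sigma$.
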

\begin{proof}
	Since $\partial \rho/\partial x_k$ and $\partial^{2} \rho/\partial x_k \partial x_k$ do not have to vanish in the equilibrium state of Korteweg fluids, equilibrium condition \eqref{ravnoteza Sigma Kortevegovi fluidi}$_{1}$ will be identically satisfied if \eqref{ravnoteza Kortevegovi fluidi} holds. 
\end{proof}

Lemmas \ref{Lema:Produkcija entropije-konacna} and \ref{Lema:Ravnoteza Kortevegovih fluida} are the steps that were necessary to determine thermodynamic fluxes in equilibrium state. Final step consists in expressing the condition \eqref{ravnoteza Sigma Kortevegovi fluidi}$_{2}$ of local minimum of the entropy production in equilibrium through the necessary conditions  
\begin{equation} \label{stacionarnost Kortevegovi fluidi} 
	\dfrac{\partial \Sigma}{\partial D_{ij}} \bigg|_{eq} = 0, \quad 
	\dfrac{\partial \Sigma}{\partial \theta_{,j}} \bigg|_{eq} = 0.
\end{equation}

\begin{theorem} \label{Te:Ravnotezni protoci}
	Thermodynamic fluxes, stress tensor $t_{ij}$ and heat flux $q_{j}$, have the following form in equilibrium state 
	\begin{align} 
		\label{tenzor napona equilibrium Kortevegovi fluidi}
		t_{ij}^{eq} & = - \left[ p + \rho \dfrac{\partial \alpha(\rho,\theta)}{\partial\rho} 
		\dfrac{\partial\rho}{\partial x_k} \dfrac{\partial\rho}{\partial x_k} 
		+ \rho \alpha(\rho,\theta) \dfrac{\partial^2\rho}{\partial x_k\partial x_k} \right] \delta_{ij} 
		+ \alpha(\rho,\theta) \dfrac{\partial\rho}{\partial x_i} \dfrac{\partial\rho}{\partial x_j}, 
		\quad i,j,k = 1,2,3,
		\\
		\label{toplotni protok equilibrium Kortevegovi fluidi}
		q_{j}^{eq} & = 0_{j}, \quad j = 1,2,3.
	\end{align}
\end{theorem}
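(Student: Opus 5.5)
The plan is to apply the stationarity conditions \eqref{stacionarnost Kortevegovi fluidi} directly to the explicit form of the entropy production in Lemma \ref{Lema:Produkcija entropije-konacna}, and evaluate them at the equilibrium state characterized by Lemma \ref{Lema:Ravnoteza Kortevegovih fluida}, i.e. at $D_{ij}=0$ and $\theta_{,j}=0$. In \eqref{produkcija entropije Sigma Kortevegovi fluidi_konacna} the entropy production is a sum of three groups of terms: the heat-conduction term $-\theta^{-2}\theta_{,j}q_j$, the cross term proportional to $\rho_{,k}\theta_{,k}D_{ij}\delta_{ij}$, and the mechanical term $\theta^{-1}D_{ij}\{\cdots\}$. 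Since $q_j$ and $t_{ij}$ depend on $\theta_{,k}$ and $D_{kl}$ (they lie in $\mathcal{C}$), I will differentiate using the product rule. Every such derivative carries a factor $\theta_{,j}$ or $D_{ij}$, and therefore vanishes at equilibrium. Only the terms in which the differentiation falls on the explicit factor $\theta_{,j}$ or $D_{ij}$ survive.

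First I take the derivative with respect to $\theta_{,j}$. The first term gives $-\theta^{-2}q_j - \theta^{-2}\theta_{,m}\,\partial q_m/\partial\theta_{,j}$. The cross term gives a contribution proportional to $\rho_{,j}D_{ii}$. The mechanical term gives $\theta^{-1}D_{ik}\,\partial t_{ik}/\partial\theta_{,j}$. At equilibrium everything except $-\theta^{-2}q_j|_{eq}$ vanishes. Since $\theta>0$, this yields $q_j^{eq}=0_j$, which is \eqref{toplotni protok equilibrium Kortevegovi fluidi}.

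Next I take the derivative with respect to $D_{ij}$. The heat term contributes $-\theta^{-2}\theta_{,m}\,\partial q_m/\partial D_{ij}$, and the cross term contributes $\delta_{ij}$ times a multiple of $\rho_{,k}\theta_{,k}$. Both vanish when $\theta_{,k}=0$. The mechanical term contributes
\[
\theta^{-1}\bigl\{[p+\rho\alpha_{,\rho}\rho_{,k}\rho_{,k}+\rho\alpha\rho_{,kk}]\delta_{ij}+t_{ij}-\alpha\rho_{,i}\rho_{,j}\bigr\}+\theta^{-1}D_{kl}\,\partial t_{kl}/\partial D_{ij}.
\]
The last piece vanishes at $D=0$. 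Setting the remainder equal to zero and solving for $t_{ij}$ gives \eqref{tenzor napona equilibrium Kortevegovi fluidi}.

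The main subtlety is the differentiation with respect to $D_{ij}$, because $D$ is symmetric. Differentiating with respect to a symmetric argument only determines the symmetric part of the bracket. However, $t_{ij}$ is symmetric (this was used in Theorem \ref{Te:Lambda-grad-rho}), and $\rho_{,i}\rho_{,j}$ is symmetric too, so the bracket is already symmetric and no information is lost. One should also verify that $p$ and $\alpha$ do not depend on $D$ or on $\theta_{,k}$, so that they add no extra derivative terms. This holds because $p$ is defined through $\partial s/\partial\rho$ and $\alpha=\alpha(\rho,\theta)$. A remark can note that these are only the necessary conditions for the minimum \eqref{ravnoteza Sigma Kortevegovi fluidi}$_2$; the sufficient conditions will constrain the non-equilibrium parts.
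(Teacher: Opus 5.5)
Your proposal is correct and follows the paper's proof. Both arguments read off $q_j^{eq}=0$ from $\partial\Sigma/\partial\theta_{,j}|_{eq}=0$ and $t_{ij}^{eq}$ from $\partial\Sigma/\partial D_{ij}|_{eq}=0$, using the explicit form of $\Sigma$ from the preceding lemma; your product-rule bookkeeping and the symmetry remark on differentiating with respect to $D_{ij}$ only spell out what the paper leaves implicit. The separate check that $p$ and $\alpha$ do not depend on $D_{kl}$ or $\theta_{,k}$ is unnecessary (and your justification for $p$ is incomplete, since $p$ could inherit such dependence through $e$), because any such dependence enters multiplied by $D_{kl}$ and vanishes at equilibrium by your own product-rule argument.
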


\begin{proof}
	Using \eqref{produkcija entropije Sigma Kortevegovi fluidi_konacna}, from the stationarity condition \eqref{stacionarnost Kortevegovi fluidi}$_{1}$ it follows 
	\begin{equation*}
		\dfrac{1}{\theta} \left\{ 
			\left[ p + \rho \dfrac{\partial \alpha(\rho, \theta)}{\partial \rho} 
			\dfrac{\partial \rho}{\partial x_k} \dfrac{\partial \rho}{\partial x_k} 
			+ \rho \alpha(\rho, \theta) \dfrac{\partial^{2} \rho}{\partial x_k \partial x_k} \right]
			\delta_{ij} 
			+ t_{ij}^{eq} 
			- \alpha(\rho, \theta) \dfrac{\partial \rho}{\partial x_i} \dfrac{\partial \rho}{\partial x_j} 
			\right\} = 0,
	\end{equation*} 
	which directly implies \eqref{tenzor napona equilibrium Kortevegovi fluidi}. On the other hand, from \eqref{stacionarnost Kortevegovi fluidi}$_{2}$ it follows 
	\begin{equation*}
		- \dfrac{1}{\theta^{2}} q_{j}^{eq} = 0, \quad j = 1,2,3, 
	\end{equation*}
	which leads directly to \eqref{toplotni protok equilibrium Kortevegovi fluidi}. 
\end{proof} 

In the third step---derivation of the non-equilibrium thermodynamic fluxes---we shall introduce non-equilibrium fluxes $t_{ij}^{\ast}$ and $q_{j}^{\ast}$ in the following way
\begin{equation*}
	t_{ij} = t_{ij}^{eq} + t_{ij}^{\ast}, \quad 
	q_{j} = q_{j}^{eq} + q_{j}^{\ast}. 
\end{equation*}
Inserting these expressions in equation \eqref{produkcija entropije Sigma Kortevegovi fluidi_konacna} and using the statement of Theorem \ref{Te:Ravnotezni protoci}, i.e. equations \eqref{tenzor napona equilibrium Kortevegovi fluidi} and \eqref{toplotni protok equilibrium Kortevegovi fluidi}, entropy production can be written as 
\begin{equation} \label{Produkcija entropije neravnotezna Kortevegovi fluidi}
	\Sigma = - \dfrac{1}{\theta^{2}} q_{k}^{\ast} \dfrac{\partial \theta}{\partial x_k} 
	- \dfrac{\rho}{\theta^{2}} \left[ \alpha(\rho, \theta) 
	- \theta \dfrac{\partial \alpha(\rho, \theta)}{\partial \theta} \right] 
	\dfrac{\partial \rho}{\partial x_k} \dfrac{\partial \theta}{\partial x_k} D_{ij} \delta_{ij}
	+ \dfrac{1}{\theta} t_{ij}^{\ast} D_{ij}. 
\end{equation} 
Non-equilibrium fluxes $t_{ij}^{\ast}$ and $q_{j}^{\ast}$ should be chosen in such a way that entropy production $\Sigma$ remains non-negative for any thermodynamic process 
\begin{equation} \label{Entropijska nejednakost Kortevegovi fluidi}
	\Sigma \geq 0. 
\end{equation} 
The simplest form of the constitutive relations is given by the next Theorem. 

\begin{theorem} \label{Te:Neravnotezni Protoci}
	Non-equilibrium fluxes $t_{ij}^{\ast}$ and $q_{j}^{\ast}$ which ensure the entropy inequality \eqref{Entropijska nejednakost Kortevegovi fluidi} can be determined by the following constitutive relations 
	\begin{itemize}
		\item Case 1
		\begin{align} \label{Konstitutivne relacije Kortevegovi fluidi-1}
			t_{ij}^{\ast} & = \nu \dfrac{\partial v_k}{\partial x_k} \delta_{ij} + 2 \mu D_{ij}^{D}, 
			\\ 
			q_{k}^{\ast} & = - \kappa \dfrac{\partial \theta}{\partial x_k} 
			- \rho \left[ \alpha(\rho, \theta) 
			- \theta \dfrac{\partial \alpha(\rho, \theta)}{\partial \theta}  \right] 
			\dfrac{\partial v_j}{\partial x_j} \dfrac{\partial \rho}{\partial x_k}, 
			\nonumber 
		\end{align}
		\item Case 2
		\begin{align} \label{Konstitutivne relacije Kortevegovi fluidi-2}
			t_{ij}^{\ast} & = \nu \dfrac{\partial v_k}{\partial x_k} \delta_{ij} + 2 \mu D_{ij}^{D} 
			+ \dfrac{\rho}{\theta} \left[ \alpha(\rho, \theta) 
			- \theta \dfrac{\partial \alpha(\rho, \theta)}{\partial \theta} \right] 
			\dfrac{\partial \rho}{\partial x_k} \dfrac{\partial \theta}{\partial x_k} \delta_{ij}, 
			\\ 
			q_{k}^{\ast} & = - \kappa \dfrac{\partial \theta}{\partial x_k}, 
			\nonumber 
		\end{align}
	\end{itemize}
	where $\nu, \mu, \kappa \geq 0$, and $D_{ij}^{D}$ is the deviatoric part of $D_{ij}$ 
	\begin{equation} \label{Dij deviatoric}
		D_{ij}^{D} = D_{ij} - \frac{1}{3} D_{kk} \delta_{ij}.
	\end{equation}. 
\end{theorem}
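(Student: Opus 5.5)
The plan is to substitute each pair of constitutive relations directly into the non-equilibrium entropy production \eqref{Produkcija entropije neravnotezna Kortevegovi fluidi} and show that the result is a sum of non-negative quadratic forms. To shorten notation, write $\beta(\rho,\theta) = \alpha - \theta\,\partial\alpha/\partial\theta$. The only term in \eqref{Produkcija entropije neravnotezna Kortevegovi fluidi} whose sign cannot be controlled is the cross term
\begin{equation*}
-\frac{\rho}{\theta^{2}}\,\beta\, \frac{\partial\rho}{\partial x_k}\frac{\partial\theta}{\partial x_k}\, D_{ii}.
\end{equation*}
It couples the thermal force $\partial\theta/\partial x_k$ with the mechanical force $D_{ii} = \partial v_j/\partial x_j$. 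The two cases differ only in which flux absorbs this term: $q^{\ast}_k$ in Case 1, and the isotropic part of $t^{\ast}_{ij}$ in Case 2.

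For Case 1, insert $q^{\ast}_k$ into $-\theta^{-2} q^{\ast}_k\,\partial\theta/\partial x_k$. The $\kappa$-part gives $\kappa\,\theta^{-2}|\nabla\theta|^2$. The $\beta$-part gives $+\rho\,\theta^{-2}\beta\, D_{jj}\,\rho_{,k}\theta_{,k}$, which cancels the cross term exactly.

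For Case 2, the extra isotropic stress contracts with $D_{ij}$ through $\theta^{-1}t^{\ast}_{ij}D_{ij}$. This produces $\rho\,\theta^{-2}\beta\,\rho_{,k}\theta_{,k}D_{ii}$, which again cancels the cross term. The heat flux $q^{\ast}_k = -\kappa\,\theta_{,k}$ then contributes only $\kappa\,\theta^{-2}|\nabla\theta|^2$.

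After the cancellation, both cases leave the same remainder,
\begin{equation*}
\Sigma = \frac{\kappa}{\theta^{2}}\frac{\partial\theta}{\partial x_k}\frac{\partial\theta}{\partial x_k} + \frac{1}{\theta}\left(\nu\, D_{kk}\delta_{ij} + 2\mu D^{D}_{ij}\right)D_{ij}.
\end{equation*}
To evaluate the viscous part, decompose $D_{ij} = D^{D}_{ij} + \tfrac13 D_{kk}\delta_{ij}$. Because $D^{D}_{ii}=0$ and $\delta_{ij}D^{D}_{ij}=0$, the mixed terms vanish, and $\nu D_{kk}\delta_{ij}D_{ij} + 2\mu D^{D}_{ij}D_{ij} = \nu (D_{kk})^2 + 2\mu D^{D}_{ij}D^{D}_{ij}$.

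With $\theta>0$ and $\nu,\mu,\kappa\ge 0$, every term is non-negative, so \eqref{Entropijska nejednakost Kortevegovi fluidi} holds for every process. The resulting $\nu$ is not the usual bulk viscosity $\lambda + \tfrac23\mu$, but the statement only claims sufficiency, so this causes no difficulty.

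The one point needing care is consistency with the constitutive state space \eqref{CSS_Korteweg}. The new fluxes depend on $\rho$, $\rho_{,k}$, $\theta$, $\theta_{,k}$ and $D_{kl}$, since $\partial v_j/\partial x_j = D_{jj}$, so they are admissible. They also vanish in the equilibrium defined by \eqref{ravnoteza Kortevegovi fluidi}, which agrees with Theorem \ref{Te:Ravnotezni protoci}. Symmetry of $t^{\ast}_{ij}$ is immediate, and this symmetry was what \eqref{Wij Kortevegovi fluidi} required.

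I expect the main obstacle to be conceptual rather than computational: justifying that the indefinite cross term should be cancelled rather than dominated. A quadratic-form argument with mixed Onsager coefficients would need a bound relating $|\nabla\rho|$ to the other forces. That bound is unavailable, because $\nabla\rho$ is a state variable that need not vanish. Exact cancellation by one of the fluxes is therefore the natural choice, and it produces the two cases.
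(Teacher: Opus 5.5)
Your proof is correct and is essentially the paper's argument: you substitute each case into \eqref{Produkcija entropije neravnotezna Kortevegovi fluidi}, the $(\alpha-\theta\,\partial\alpha/\partial\theta)$-part of $q^{\ast}_k$ (Case 1) or of the isotropic stress (Case 2) cancels the indefinite cross term, and $D^{D}_{ij}\delta_{ij}=0$ leaves $\Sigma=\kappa\theta^{-2}|\nabla\theta|^{2}+\nu\theta^{-1}(D_{kk})^{2}+2\mu\theta^{-1}D^{D}_{ij}D^{D}_{ij}\geq 0$. Two of your side remarks are inaccurate but do not affect the proof: $\nu$ is exactly the usual bulk viscosity, since $\lambda D_{kk}\delta_{ij}+2\mu D_{ij}=(\lambda+\tfrac{2}{3}\mu)D_{kk}\delta_{ij}+2\mu D^{D}_{ij}$; and dominating the cross term is in fact possible, because $\kappa$ and $\nu$ may depend on $|\nabla\rho|^{2}$ (with your $\beta$, it suffices that $4\theta\kappa\nu\geq\rho^{2}\beta^{2}|\nabla\rho|^{2}$), so exact cancellation is a convenient choice rather than a forced one.
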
 

\begin{proof}
	Let us write the entropy production \eqref{Produkcija entropije neravnotezna Kortevegovi fluidi} in the form 
	\begin{equation*} 
		\Sigma = - \dfrac{1}{\theta^{2}} \left\{ q_{k}^{\ast} + \rho \left[ \alpha(\rho, \theta) 
		- \theta \dfrac{\partial \alpha(\rho, \theta)}{\partial \theta} \right] 
		\dfrac{\partial \rho}{\partial x_k} D_{ij} \delta_{ij} \right\} \dfrac{\partial \theta}{\partial x_k} 
		+ \dfrac{1}{\theta} t_{ij}^{\ast} \left( \frac{1}{3} D_{kk} \delta_{ij} + D_{ij}^{D} \right), 
	\end{equation*}
	where $D_{ij}$ is represented as the sum of spherical and deviatoric part of the tensor 
	\begin{equation*}
		D_{ij} = \frac{1}{3} D_{kk} \delta_{ij} + D_{ij}^{D}, \quad 
		D_{kk} = \dfrac{\partial v_k}{\partial x_k}. 
	\end{equation*} 
	Choosing the constitutive relations as \eqref{Konstitutivne relacije Kortevegovi fluidi-1}, inserting them in the entropy production $\Sigma$, and taking into account that $D_{ij}^{D} \delta_{ij} = 0$, one arrives at 
	\begin{equation} \label{Entropijska nejednakost-konacna}
		\Sigma = \dfrac{\kappa}{\theta^{2}} 
		\dfrac{\partial \theta}{\partial x_k} \dfrac{\partial \theta}{\partial x_k} 
		+ \dfrac{\nu}{\theta} \left( \dfrac{\partial v_k}{\partial x_k} \right)^{2}
		+ \dfrac{2 \mu}{\theta} D_{ij}^{D} D_{ij}^{D}, 
	\end{equation}
	which along with $\nu, \mu, \kappa \geq 0$ ensures the entropy inequality \eqref{Entropijska nejednakost Kortevegovi fluidi}. On the other hand, if the entropy production \eqref{Produkcija entropije neravnotezna Kortevegovi fluidi} is written as 
	\begin{equation*} 
		\Sigma = - \dfrac{1}{\theta^{2}} q_{k}^{\ast} \dfrac{\partial \theta}{\partial x_k} 
		+ \dfrac{1}{\theta} \left\{ 
		- \dfrac{\rho}{\theta} \left[ \alpha(\rho, \theta) 
		- \theta \dfrac{\partial \alpha(\rho, \theta)}{\partial \theta} \right] 
		\dfrac{\partial \rho}{\partial x_k} \dfrac{\partial \theta}{\partial x_k} \delta_{ij}
		+ t_{ij}^{\ast} \right\} D_{ij},  
	\end{equation*} 
	then choosing the constitutive relations in the form \eqref{Konstitutivne relacije Kortevegovi fluidi-2} and inserting them into $\Sigma$ again leads to \eqref{Entropijska nejednakost-konacna}, which together with $\nu, \mu, \kappa \geq 0$ ensures the entropy inequality \eqref{Entropijska nejednakost Kortevegovi fluidi}.
\end{proof}

Cases 1 and 2 correspond to the entropic and energetic representations of the thermodynamic fluxes and forces, respectively, as analyzed by \cite{van2020variational} for Fourier--Navier--Stokes--Cahn--Hilliard--Korteweg fluids. The two representations are equivalent: they lead to the same entropy production \eqref{Entropijska nejednakost-konacna}, and the constitutive coefficients are related by a state-dependent transformation (\cite{van2020variational}). The cross-coupling term $\alpha - \theta\,\partial\alpha/\partial\theta$ reflects the temperature dependence of the capillary coefficient; when $\alpha$ is $\theta$-independent both cases collapse. Recent work \cite{bhattacharjee2025thermodynamically} extends this framework to nanoscale transport with gradient-dependent viscosity and thermal conductivity, naturally generalizing both cases. 

\subsection{On the structure of stress tensor} \label{Sec:Stress}

Equilibrium stress tensor \eqref{tenzor napona equilibrium Kortevegovi fluidi} obtained by Liu's method does not contain a $\nabla\otimes\nabla\rho$ (Hessian) term---it consists of an isotropic part (involving $\Delta\rho$ and $|\nabla\rho|^2$) and the dyadic product $\alpha\,\rho_{,i}\,\rho_{,j}$. However, there exist other well-known representations of the Korteweg stress tensor in the literature---most notably the Dunn--Serrin form (\cite{dunn1985interstitial}) and the one derived in \cite{szucs2025thermodynamic}---that include an explicit $\nabla\otimes\nabla\rho$ contribution. These alternative forms are \emph{divergence-equivalent} to \eqref{tenzor napona equilibrium Kortevegovi fluidi}: they differ from it by a term of the type $\partial_k Q^{kij}$, which does not contribute to the momentum balance. In other words, the bulk dynamics is insensitive to the choice of representative within this equivalence class (\cite{van2025remarks}). 

The origin of this non-uniqueness can be traced to a specific step in the Liu procedure. In the extended entropy inequality, the coefficient of $\nabla\otimes\nabla v$ --an element of the process direction space-- contains a contribution from the density gradient multiplier $\Lambda^{\nabla\rho}$. Since partial derivatives commute, $\nabla\otimes\nabla v$  is symmetric in $\nabla\otimes\nabla$; therefore only the symmetric part of its coefficient contributes to the Liu equations. If this symmetrization is carried out explicitly before reading off the entropy current density, as done in \cite{szucs2025thermodynamic}, the $\nabla\otimes\nabla\rho$ term appears in the stress tensor. In the present paper, the coefficient is not symmetrized, and the Hessian contribution is absent. Both choices are thermodynamically consistent---they lead to divergence-equivalent stress tensors and identical bulk dynamics (\cite{van2025remarks}). However, the more restrictive procedure (explicit symmetrization) exploits all available information from the process direction space and may therefore be regarded as the more complete application of Liu's method. The distinction becomes physically relevant at boundaries, where the traction $t_{ij}^{eq} n_j$ is directly measurable, and in dissipative extensions, where the entropy production $\Sigma = \frac{1}{\theta} t_{ij}^* D_{ij} + \ldots$ depends on the tensor itself, not merely its divergence. A systematic study of these consequences is left for future work. 

It is also worth noting that in the non-dissipative limit $\Sigma = 0$, the divergence of the equilibrium stress tensor \eqref{tenzor napona equilibrium Kortevegovi fluidi} can be cast in the \emph{holographic} form $\partial_j t_{ij}^{eq} = \rho\,\partial_i\Phi$, where $\Phi$ is a scalar potential expressible through the chemical potential and capillary contributions. This \emph{classical holographic property}---which reduces the momentum balance to a mass-point form $\dot{v}_i = -\partial_i\Phi$---is a general consequence of the Second Law for perfect continua, as demonstrated in \cite{szucs2025thermodynamic} and \cite{van2023holographic}. It connects the present results to the thermodynamic derivation of quantum-mechanical fluid equations.

\subsection{Gibbs relation for Korteweg fluids} \label{Sec:GibbsKorteweg}

The last step in the analysis of constitutive relations is derivation of the Gibbs relation for Korteweg fluids. It is a generalization of classical Gibbs relation since it includes differential of the mass density gradient. 

\begin{theorem} 
	Gibbs relation for Korteweg fluids reads: 
	\begin{equation} \label{Gibsova relacija Kortevegovi fluidi}
		\theta \mathrm{d}s = \mathrm{d}e - \dfrac{p}{\rho^2} \mathrm{d}\rho 
		+ \dfrac{\alpha(\rho,\theta)}{\rho} \rho_{,k} \mathrm{d}\rho_{,k}.
	\end{equation}
\end{theorem}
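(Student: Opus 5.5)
The plan mirrors the Euler-fluid derivation of \eqref{Gibsova relacija Ojlerovi fluidi}: write the total differential of $s$ over the constitutive state space, then identify each coefficient using results already proved.

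By \eqref{konstitutivne relacije s hipoteza 1} and the total derivatives \eqref{totalni izvod}, the differential is
\begin{equation*}
\mathrm{d}s = \dfrac{\partial s}{\partial e}\,\mathrm{d}e + \dfrac{\partial s}{\partial \rho}\,\mathrm{d}\rho + \dfrac{\partial s}{\partial \rho_{,k}}\,\mathrm{d}\rho_{,k} + \dfrac{\partial s}{\partial \rho_{,kl}}\,\mathrm{d}\rho_{,kl} + \dfrac{\partial s}{\partial \theta_{,k}}\,\mathrm{d}\theta_{,k} + \dfrac{\partial s}{\partial D_{kl}}\,\mathrm{d}D_{kl}.
\end{equation*}
Here $\mathrm{d}e$ collects all contributions through $e$, including the $\mathrm{d}\theta$ term, exactly as in the Euler case. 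The last three terms vanish by \eqref{s-argumenti Kortevegovi fluidi} of Theorem \ref{Te:Lambda-e}. The first two coefficients are fixed by \eqref{konstitutivne relacije s hipoteza 2}, namely $1/\theta$ and $-p/(\rho^2\theta)$. It therefore remains to identify $\partial s/\partial \rho_{,k}$ (explicit dependence).

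For this I use the Liu equation \eqref{Trho_,k Kortevegovi fluidi}. Expanding the total derivative with \eqref{totalni izvod} gives
\begin{equation*}
\rho\left(\dfrac{\partial s}{\partial e}-\Lambda^e\right)\dfrac{\partial e}{\partial \rho_{,k}} + \rho\dfrac{\partial s}{\partial \rho_{,k}} - \Lambda^{\nabla\rho}_k = 0.
\end{equation*}
By \eqref{Lambda_e Kortevegovi fluidi} the bracket vanishes. Theorem \ref{Te:Lambda-grad-rho} then gives
\begin{equation*}
\dfrac{\partial s}{\partial \rho_{,k}} = \dfrac{\Lambda^{\nabla\rho}_k}{\rho} = \dfrac{\alpha(\rho,\theta)}{\rho\,\theta}\,\rho_{,k}.
\end{equation*}
Inserting all coefficients and multiplying by $\theta$ yields \eqref{Gibsova relacija Kortevegovi fluidi}.

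The delicate point is consistency, not computation. Since $s$ has no explicit $\theta$-dependence, the identified coefficient $\alpha(\rho,\theta)\rho_{,k}/(\rho\theta)$ carries $\theta$ only through the multiplier. I would remark that the result is to be read as a differential form on the state space, in the same sense as the Euler Gibbs relation. The temperature enters through $\theta$ as the integrating factor (with $\theta$ expressed via $e$ and the other variables), so no extra integrability condition on $\alpha$ needs to be imposed here. I would also stress that, unlike $\partial s/\partial e$ and $\partial s/\partial\rho$, this capillary coefficient is not assumed: it is derived from the multiplier equations, which is what makes the generalized relation a genuine consequence of the procedure.
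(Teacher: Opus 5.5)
Your proof is correct and is essentially the paper's own argument: you obtain $\partial s/\partial\rho_{,k}=\alpha(\rho,\theta)\rho_{,k}/(\rho\theta)$ from the Liu equation \eqref{Trho_,k Kortevegovi fluidi} using $\Lambda^e=\partial s/\partial e=1/\theta$ and Theorem \ref{Te:Lambda-grad-rho}, and you assemble $\mathrm{d}s$ with \eqref{konstitutivne relacije s hipoteza 2} and \eqref{s-argumenti Kortevegovi fluidi}; the only difference is that you write the differential before identifying the coefficient, whereas the paper does it after. Your closing aside is loose, because $\theta$ acting as an integrating factor does not make integrability automatic: equality of the mixed $e$--$\rho_{,k}$ derivatives of $s$ gives the Maxwell-type restriction $\partial e/\partial\rho_{,k}=(\theta\,\partial\alpha/\partial\theta-\alpha)\rho_{,k}/\rho$, which leaves $\alpha$ free but constrains $e$, although this does not affect the derivation of the relation itself.
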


\begin{proof}
	Starting from equation \eqref{Trho_,k Kortevegovi fluidi}:
	\begin{equation*}
		\rho\overline{\dfrac{\partial s}{\partial\rho_{,k}}} 
		- \Lambda^e\rho\dfrac{\partial e}{\partial\rho_{,k}} - \Lambda^{\nabla\rho}_k = 0,
	\end{equation*}
	and using the total derivative \eqref{totalni izvod} we obtain:
	\begin{equation*}
		\rho\left(\left(\dfrac{\partial s}{\partial e} - \Lambda^e \right) 
		\dfrac{\partial e}{\partial\rho_{,k}} 
		+ \dfrac{\partial s}{\partial\rho_{,k}}\right) 
		- \Lambda_k^{\nabla\rho} = 0.
	\end{equation*}
	Taking into account the multipliers \eqref{Lambda_e Kortevegovi fluidi} and \eqref{mnozitelj Lambda-grad-rho Kortevegovi fluidi}, last equation reduces to:
	\begin{equation*}
		\rho\dfrac{\partial s}{\partial\rho_{,k}} = \Lambda_k^{\nabla\rho} 
		= \dfrac{1}{\theta} \alpha(\rho,\theta) \dfrac{\partial\rho}{\partial x_k},
	\end{equation*}
	or equivalently:
	\begin{equation} \label{partial s rho_,k}
		\dfrac{\partial s}{\partial\rho_{,k}} 
		= \dfrac{1}{\theta} \dfrac{\alpha(\rho,\theta)}{\rho} \dfrac{\partial\rho}{\partial x_k}.
	\end{equation}
	Using thermodynamic relations \eqref{konstitutivne relacije s hipoteza 2}, equations \eqref{s-argumenti Kortevegovi fluidi} and relation \eqref{partial s rho_,k}, total differential of the specific entropy may be written as:
	\begin{align*}
		\mathrm{d}s & \overset{\eqref{s-argumenti Kortevegovi fluidi}}{=} 
		\dfrac{\partial s}{\partial e} \mathrm{d}e 
		+ \dfrac{\partial s}{\partial\rho} \mathrm{d}\rho 
		+ \dfrac{\partial s}{\partial\rho_{,k}} \mathrm{d}\rho_{,k} 
		\\
		& \overset{\eqref{konstitutivne relacije s hipoteza 2},\eqref{partial s rho_,k}}{=} 
		\dfrac{1}{\theta} \mathrm{d}e - \dfrac{1}{\theta}\dfrac{p}{\rho^2} \mathrm{d}\rho 
		+ \dfrac{1}{\theta} \dfrac{\alpha(\rho,\theta)}{\rho} 
		\dfrac{\partial\rho}{\partial x_k} \mathrm{d}\rho_{,k} 
		\\
		& = \dfrac{1}{\theta} \left( \mathrm{d}e - \dfrac{p}{\rho^2} \mathrm{d}\rho 
		+ \dfrac{\alpha(\rho,\theta)}{\rho} \rho_{,k} \mathrm{d}\rho_{,k} \right), 
	\end{align*}
	which directly implies Gibbs relation for Korteweg fluids \eqref{Gibsova relacija Kortevegovi fluidi}. 
\end{proof}
Let us emphasize that \eqref{Gibsova relacija Kortevegovi fluidi} is a natural generalization of classical Gibbs relation to Korteweg fluids. In classical formulation, differential of specific entropy $s$ depends on differentials of equilibrium variables, specific internal energy $e$ and mass density $\rho$. In the case of Korteweg fluids, Gibbs relation includes differential of mass density gradient $\rho_{,k}$ which characterizes capillary stresses in local equilibrium. 

\subsection{On the structure of specific internal energy} \label{Sec:EnergyStructure}

Internal energy plays important role in the analysis of capillary effects. At first, specific energy $e$ is assumed in the form \eqref{konstitutivne relacije 1 opste}$_{1}$, but subsequent analysis did not impose any further restrictions regarding its structure. In other studies of Korteweg fluids, explicit dependence of the specific energy on mass density gradient was assumed (\cite{anderson1998diffuse}), or it was adapted to the structure of constitutive relations (\cite{heida2010compressible}). Here, we aim to reconsider our results from the perspective of these studies. 

In our notation, analysis in \cite{heida2010compressible} assumes specific internal energy in the form: 
\begin{equation} \label{specificna energija_Heida-Malek}
	e = e_{0}(\rho, \theta) + \hat{e}(\rho, \rho_{,k}), 
\end{equation}
On the other hand, stress tensor in local equilibrium reads:
\begin{equation} \label{tenzor napona_Heida-Malek}
	t_{ij}^{eq} = \left[ - p_{0} - \rho^{2} \frac{\partial \hat{e}}{\partial \rho} 
	+ \rho \frac{\partial}{\partial x_{k}} \left( 
	\rho \frac{\partial \hat{e}}{\partial \rho_{,k}} \right) \right] \delta_{ij}
	- \rho \frac{\partial \hat{e}}{\partial \rho_{,i}} \frac{\partial \rho}{\partial x_{j}}, 
\end{equation}
where $p_{0}$ is thermodynamic pressure: 
\begin{equation} \label{pritisak_Heida-Malek}
	p_{0} = \rho^{2} \frac{\partial e_{0}}{\partial \rho}. 
\end{equation}
Note that the structure of specific internal energy \eqref{specificna energija_Heida-Malek} is reflected on the structure of stress tensor in equilibrium \eqref{tenzor napona_Heida-Malek}, which is a consequence of the procedure of derivation of the constitutive relations. In the special case, when the following structure of specific energy is assumed: 
\begin{equation} \label{specificna energija kvadratna_Heida-Malek}
	e = e_{0}(\rho, \theta) + \frac{1}{2} A(\rho) \frac{\partial \rho}{\partial x_{k}} 
	\frac{\partial \rho}{\partial x_{k}}
\end{equation} 
stress tensor becomes:
\begin{equation} \label{tenzor napona 2_Heida-Malek}
	t_{ij}^{eq} = \left[ - p_{0} + \frac{\partial}{\partial \rho} 
	\left( \frac{1}{2} \rho^{2} A(\rho) \right) \frac{\partial \rho}{\partial x_{k}} 
	\frac{\partial \rho}{\partial x_{k}} 
	+ \rho^{2} A(\rho) \frac{\partial^{2} \rho}{\partial x_{k} \partial x_{k}} \right] \delta_{ij}
	- \rho A(\rho) \frac{\partial \rho}{\partial x_{i}} \frac{\partial \rho}{\partial x_{j}}. 
\end{equation}
When compared to stress tensor \eqref{tenzor napona equilibrium Kortevegovi fluidi} of our study, an equivalence may be observed with regard to density gradients. However, coefficients in \eqref{tenzor napona 2_Heida-Malek} are determined by the structure of specific internal energy. 

In our study, stress tensor \eqref{tenzor napona equilibrium Kortevegovi fluidi} and specific internal energy $e$ are independent. Structure of the specific energy is reflected on the stress tensor implicitly, through the pressure. If specific energy were assumed in the form \eqref{specificna energija_Heida-Malek}, then pressure $p$ would read: 
\begin{equation}
	p = \rho^{2} \frac{\partial e_{0}}{\partial \rho} 
	+ \rho^{2} \frac{\partial \hat{e}}{\partial \rho},
\end{equation}
where the first term corresponds to thermodynamic pressure $p_{0}$ from \eqref{pritisak_Heida-Malek}, whereas the second one is the Korteweg pressure. Furthermore, if particular form \eqref{specificna energija kvadratna_Heida-Malek} of the specific energy were assumed, then equilibrium stress tensor \eqref{tenzor napona equilibrium Kortevegovi fluidi} would become: 
\begin{equation*}
	t_{ij}^{eq} = - \left[ p_{0} + \left( \frac{1}{2} \rho^{2} A'(\rho) 
	+ \rho \dfrac{\partial \alpha(\rho,\theta)}{\partial\rho} \right)
	\dfrac{\partial\rho}{\partial x_k} \dfrac{\partial\rho}{\partial x_k} 
	+ \rho \alpha(\rho,\theta) \dfrac{\partial^2\rho}{\partial x_k\partial x_k} \right] \delta_{ij} 
	+ \alpha(\rho,\theta) \dfrac{\partial\rho}{\partial x_i} \dfrac{\partial\rho}{\partial x_j},
\end{equation*}
However, if it were assumed: 
\begin{equation}
	\alpha(\rho, \theta) = - \rho A(\rho), 
\end{equation}
the same structure of equilibrium stress tensor would have been obtained as in \eqref{tenzor napona 2_Heida-Malek}. On the other hand, assumption that material function $\alpha$ does not depend on temperature $\theta$ implies the loss of generality of the nonequilibrium fluxes \eqref{Konstitutivne relacije Kortevegovi fluidi-1} and \eqref{Konstitutivne relacije Kortevegovi fluidi-2}, where $\partial \alpha/\partial \theta$ explicitly appears. 

\subsection{On the boundary conditions} 

The boundary conditions of the theory deserve comment. The entropy flux \eqref{protok entropije Kortevegovi fluidi} determines natural boundary conditions through its normal component at the boundary: $\varphi_j n_j = \frac{1}{\theta}(q_j + \rho\alpha\,\rho_{,j}\,\partial_k v_k) n_j$, which is the thermodynamic counterpart of the variational natural boundary conditions in the free-energy functional approach of \cite{anderson1998diffuse}. In particular, the term $\alpha\,\rho_{,j}\,n_j$ at the boundary controls the contact angle in wetting problems. 

Boundary conditions in the diffuse interface models were particularly studied by \cite{heida2013derivation,heida2014thermodynamics}, where the emphasis was on thermodynamic consistency. In general, there are different approaches that could lead to thermodynamically consistent boundary conditions, and this will be the subject of a separate study. 

%%%%%% 

\section{Conclusions} \label{Sec:Conclusions} 

%%%%%% 

In this study, the constitutive modelling of Korteweg fluids is analyzed from the point of view of Liu's method of multipliers. For the sake of completeness, the main algebraic lemma (Lemma \ref{lemma:Liu-Lemma}) is given at the beginning. Afterwards, the Liu method is applied to the Euler fluids in Section \ref{Sec:EulerFluids}. Although no new results are provided in this part, certain novelties are emphasized in the approach, which are applied to Korteweg fluids in the sequel. 

Main results of the paper are given in Section \ref{Sec:KortewegFluids}. Features that make our results appealing and different from previous studies are the following: 
\begin{enumerate}
	\item Constitutive assumptions facilitated inclusion of the capillary effects into specific entropy, and simpler derivation of the multipliers. 
	\item Retaining the skew-symmetric part of the velocity gradient enabled derivation of the additional multiplier using symmetry arguments. 
	\item Entropy flux is determined from the residual inequality, instead of solving highly involved system of partial differential equations, or by assuming its dependence on the elements of the constitutive state space. 
	\item Korteweg stresses are derived from the equilibrium conditions---vanishing of the entropy production and its minimization in equilibrium. In such a way, their equilibrium character is emphasized. 
	\item Nonequilibrium parts of thermodynamic fluxes are determined in the most general form admitted by the structure of constitutive state space. 
	\item The generalized Gibbs' relation is derived, which inherits the capillary effects. 
	\item The material function $\alpha(\rho,\theta)$ is allowed to depend on temperature, which is consistent with kinetic-theory results of \cite{bhattacharjee2024temperature} and leads to cross-coupling between thermal and mechanical effects in the non-equilibrium constitutive relations.
	\item The non-uniqueness of the equilibrium stress tensor is identified and its resolution through the symmetry of the constraint argument is discussed.
\end{enumerate}

A question may be raised about rationale for this study when other approaches may yield similar outcomes. The first reason lies in the fact that Liu's method requires minimum number of \emph{ad hoc} physical assumptions. Main influence of the physical reasoning comes through the choice of governing equations and the constitutive state space. 

The second reason is related to possible further applications to the mixture of Korteweg fluids. In this case, classical approaches deeply rely on physical assumptions, which influence the closure procedure and dictate the structure of constitutive relations. On the other hand, Liu's method facilitates mixture modelling in a straightforward manner if the same physical assumptions (the constitutive state spaces) are applied to all the mixture constituents (\cite{ruggeri2007hyperbolic}). This is assumed to be the starting point of our forthcoming work.

Further directions include the extension to higher-grade fluids, like in \cite{gorgone2021characterization,morro2023korteweg,paolucci2022second}, where the complete execution of Liu's procedure (explicit multiplier computation, entropy flux derivation, linearization) remains an open problem. From the experimental side, the two non-equilibrium cases (eqs.\ \eqref{Konstitutivne relacije Kortevegovi fluidi-1}--\eqref{Konstitutivne relacije Kortevegovi fluidi-2}) predict different dispersion relations for capillary waves at diffuse interfaces, which could be tested against molecular dynamics simulations or measurements near the critical point (\cite{bhattacharjee2025thermodynamically}). Finally, the classical holographic property identified in Section~\ref{Sec:EnergyStructure} connects the present framework to the thermodynamic derivation of quantum-mechanical fluid equations (\cite{szucs2025thermodynamic,van2023holographic}), suggesting that Liu's method may provide a unified treatment of capillary, gravitational and quantum phenomena in weakly nonlocal fluids.

%%%%%%
\section{Acknowledgement}

The authors acknowledge networking support by the grant   NKFIH NKKP-Advanced 150038 and COSTAction FuSe CA24101. This article/publication is based upon work from COST Action    FuSe, CA24101, supported by COST (European Cooperation in Science and Technology).

\appendix 

%%%%%% 

\section{Useful identities} 

\subsection{Velocity gradient} 

Velocity gradient $L_{ij}$ is decomposed into symmetric part $D_{ij}$ and skew-symmetric part $W_{ij}$. Along with divergence of velocity, they read  
%\begin{equation} 
\begin{align} \label{App:Lij} 
	\dfrac{\partial v_i}{\partial x_j} = L_{ij} & = D_{ij} + W_{ij},
	\nonumber \\ 
	D_{ij} = \dfrac{1}{2}\left(\dfrac{\partial v_i}{\partial x_j} 
	+ \dfrac{\partial v_j}{\partial x_i}\right), & \quad
	W_{ij} = \dfrac{1}{2}\left(\dfrac{\partial v_i}{\partial x_j} 
	- \dfrac{\partial v_j}{\partial x_i}\right), 
	\\ 
	\dfrac{\partial v_j}{\partial x_j} = D_{jj} & = D_{jk}\delta_{jk} = D_{ij}\delta_{ij}. 
	\nonumber 
\end{align}
%\end{equation}
Derivation of the balance for density gradient requires computation of divergence of the velocity gradient. The following identities are useful 
%\begin{equation} 
\begin{align} \label{App:divergenceLij}
	\dfrac{\partial D_{ji}}{\partial x_j} 
	& = \dfrac{1}{2}\left(\dfrac{\partial^2 v_j}{\partial x_i\partial x_j} + 
	\dfrac{\partial^2 v_i}{\partial x_j\partial x_j}\right), 
	\nonumber \\
	\dfrac{\partial W_{ji}}{\partial x_j} 
	& = \dfrac{1}{2}\left(\dfrac{\partial^2 v_j}{\partial x_i\partial x_j} 
	- \dfrac{\partial^2 v_i}{\partial x_j\partial x_j}\right),
	\\ 
	\dfrac{\partial D_{ji}}{\partial x_j} + \dfrac{\partial W_{ji}}{\partial x_j} 
	& = \dfrac{\partial^2 v_j}{\partial x_i \partial x_j} 
	= \dfrac{\partial}{\partial x_i}\dfrac{\partial v_j}{\partial x_j} 
	\nonumber \\
	& = \dfrac{\partial}{\partial x_i}(D_{jk}\delta_{jk}) 
	= \dfrac{\partial D_{jk}}{\partial x_i}\delta_{jk}. 
	\nonumber 
\end{align}
%\end{equation}

\subsection{Euler fluids} 

In the analysis of entropy balance law for Euler fluids, the following identities will be useful. The first one is the rate of change of specific entropy, where we used assumptions \eqref{konstitutivne relacije s-phi Ojler}$_1$ and \eqref{konstitutivne relacije s hipoteza 1 Ojlerovi fluidi} about constitutive relations
%\begin{equation} 
\begin{align} \label{App:specificna entropija Ojler}
	\dfrac{\partial s}{\partial t}&=\dfrac{\partial s}{\partial e}\dfrac{\partial e}{\partial t}+\dfrac{\partial s}{\partial\rho}\dfrac{\partial\rho}{\partial t}+\dfrac{\partial s}{\partial\theta}\dfrac{\partial\theta}{\partial t}
	\nonumber \\
	&=\dfrac{\partial s}{\partial e}\left(\dfrac{\partial e}{\partial\rho}\dfrac{\partial\rho}{\partial t}+\dfrac{\partial e}{\partial\theta}\dfrac{\partial\theta}{\partial t}\right)+\dfrac{\partial s}{\partial\rho}\dfrac{\partial\rho}{\partial t}+\dfrac{\partial s}{\partial\theta}\dfrac{\partial\theta}{\partial t} 
	\\
	&=\left(\dfrac{\partial s}{\partial e}\dfrac{\partial e}{\partial\rho}+\dfrac{\partial s}{\partial\rho}\right)\dfrac{\partial\rho}{\partial t}+\left(\dfrac{\partial s}{\partial e}\dfrac{\partial e}{\partial\theta}+\dfrac{\partial s}{\partial\theta}\right)\dfrac{\partial\theta}{\partial t} 
	\nonumber \\
	&=\overline{\dfrac{\partial s}{\partial\rho}}\dfrac{\partial\rho}{\partial t}+\overline{\dfrac{\partial s}{\partial\theta}}\dfrac{\partial\theta}{\partial t}. 
	\nonumber 
\end{align}
%\end{equation}
Similarly, using \eqref{konstitutivne relacije s-phi Ojler}$_2$, divergence of the entropy flux is computed 
\begin{equation} \label{App:protok entropije Ojlerovi fluidi}
	\dfrac{\partial\varphi_j}{\partial x_j} 
	= \dfrac{\partial\varphi_j}{\partial\rho} \dfrac{\partial\rho}{\partial x_j} 
	+ \dfrac{\partial\varphi_j}{\partial\theta}\dfrac{\partial\theta}{\partial x_j}. 
\end{equation} 
Finally, taking into account \eqref{KonstitutivneRelacije-Ojler2}, divergence of the stress tensor and its product with velocity gradient is computed 
%\begin{equation} 
\begin{align} \label{App:tenzor napona Ojler}
	\dfrac{\partial t_{ij}}{\partial x_j} & = 
	- \dfrac{\partial p}{\partial x_j}\delta_{ij} 
	= - \dfrac{\partial p}{\partial x_i} 
	= - \left(\dfrac{\partial p}{\partial\rho}\dfrac{\partial\rho}{\partial x_i} 
	+ \dfrac{\partial p}{\partial\theta}\dfrac{\partial\theta}{\partial x_i} \right), 
	\\
	t_{ij}\dfrac{\partial v_i}{\partial x_j} & 
	= - p \delta_{ij} \dfrac{\partial v_i}{\partial x_j} 
	= - p \dfrac{\partial v_j}{\partial x_j} 
	= - p D_{ij} \delta_{ij}. 
	\nonumber 
\end{align} 
%\end{equation}

\begin{comment}
\section{}\label{}

% Numbered list
% Use the style of numbering in square brackets.
% If nothing is used, default style will be taken.
%\begin{enumerate}[a)]
%\item 
%\item 
%\item 
%\end{enumerate}  

% Unnumbered list
%\begin{itemize}
%\item 
%\item 
%\item 
%\end{itemize}  

% Description list
%\begin{description}
%\item[]
%\item[] 
%\item[] 
%\end{description}  

\clearpage %%Remove this from your manuscript

% Figure
\begin{figure}%[]
  \centering
%    \includegraphics{}
    \caption{}\label{fig1}
\end{figure}

\begin{table}%[]
\caption{}\label{tbl1}
\begin{tabular*}{\tblwidth}{@{}LL@{}}
\toprule
  &  \\ % Table header row
\midrule
 & \\
 & \\
 & \\
 & \\
\bottomrule
\end{tabular*}
\end{table}

% Uncomment and use as the case may be
%\begin{theorem} 
%\end{theorem}

% Uncomment and use as the case may be
%\begin{lemma} 
%\end{lemma}

%% The Appendices part is started with the command \appendix;
%% appendix sections are then done as normal sections
%% \appendix

\section{}\label{}
\end{comment}

% To print the credit authorship contribution details
\printcredits

%% Loading bibliography style file
%\bibliographystyle{model1-num-names}
\bibliographystyle{cas-model2-names.bst}

% Loading bibliography database
\bibliography{Korteweg-Bibliography.bib}

% Biography
%\bio{}
% Here goes the biography details.
%\endbio

%\bio{pic1}
% Here goes the biography details.
%\endbio

\end{document}